\def\H{{\mathbf{H}}}
\newcommand{\p}{\partial}
\newcommand{\fl}[2]{\frac{#1}{#2}}
\newcommand{\nn}{\nonumber}
\newcommand{\vep}{\varepsilon}
\newcommand{\sg}{\sigma}
\newcommand{\be}{\begin{equation}}
\newcommand{\ee}{\end{equation}}
\newcommand{\ba}{\begin{array}}
\newcommand{\ea}{\end{array}}
\newcommand{\bea}{\begin{eqnarray}}
\newcommand{\eea}{\end{eqnarray}}
\newcommand{\beas}{\begin{eqnarray*}}
\newcommand{\eeas}{\end{eqnarray*}}
\newtheorem{remark}{Remark}[section]
 \newcommand{\bx}{{\bf x} }
 \newcommand{\bn}{{\bf n} }
 \newcommand{\br}{{\bf r} }
\begin{document}

\title{Gross-Pitaevskii-Poisson equations  for dipolar Bose-Einstein condensate with
anisotropic confinement}
\author{Weizhu Bao\footnotemark[2]
\and \fbox{Naoufel Ben Abdallah}\footnotemark[3]
\and Yongyong Cai\footnotemark[4]
}
\date{}

\maketitle
\renewcommand{\thefootnote}{\fnsymbol{footnote}}

\footnotetext[2]{Department of Mathematics and Center for Computational Science and
Engineering, National University of Singapore,
Singapore 119076 (bao@math.nus.edu.sg).}
\footnotetext[3]{IMT, UMR CNRS 5219, Universit\'{e} Paul
Sabatier, 31062 Toulouse Cedex, France.}
\footnotetext[4]{Department of Mathematics, National University of Singapore,
Singapore 119076 (caiyongyong@nus.edu.sg).}

\renewcommand{\thefootnote}{\arabic{footnote}}

\begin{abstract} Ground states and dynamical properties of
dipolar Bose-Einstein condensate are analyzed based on the
 Gross-Pitaevskii-Poisson system (GPPS) and its dimension reduction models
under anisotropic confining potential. We begin with the
three-dimensional (3D) Gross-Pitaevskii-Poisson system and review
its quasi-2D approximate equations when the trap is strongly
confined in $z$-direction and quasi-1D approximate equations when
the trap is strongly confined in $x$-, $y$-directions. In fact, in
the quasi-2D equations, a fractional Poisson equation with the
operator $(-\Delta)^{1/2}$ is involved which brings significant
difficulties into the analysis. Existence and uniqueness as well as
nonexistence of the ground state under different parameter regimes
are established for the quasi-2D and quasi-1D equations.
Well-posedness of the Cauchy problem for both equations and finite
time blowup in 2D are analyzed. Finally, we rigorously prove the
convergence and linear convergence rate between the solutions of the
3D GPPS and its quasi-2D and quasi-1D approximate equations in weak
interaction regime.
\end{abstract}

\begin{keywords} Gross-Pitaevskii-Poisson
 system,  dipolar Bose-Einstein condensate,
 ground state, dimension reduction
\end{keywords}
\begin{AMS}35Q55, 35A01, 81Q99
\end{AMS}
\pagestyle{myheadings}
\thispagestyle{plain}
\markboth{WEIZHU BAO, \fbox{NAOUFEL BEN ABDALLAH} AND YONGYONG CAI}
{GPE FOR DIPOLAR BEC WITH ANISOTROPIC CONFINEMENT}

\section{Introduction}
\setcounter{equation}{0}

Quantum degenerate gases have received considerable  interests
both theoretically and experimentally, since the first observation of
Bose-Einstein condensate (BEC) with dilute bosonic gas in 1995.
The properties of these  ultracold dilute quantum gases are determined
by the short-range, isotropic contact interactions between the particles,
which have been studied extensively. For those particles with large
permanent magnetic or electric dipole moment, dipole-dipole interactions
are non-negligible, and the dipolar interactions are long-range and
anisotropic, different from contact interactions.  Due to these remarkable
properties of dipolar interactions, there have been great interests
to study dipolar BEC in the last decade.  In 2005, the first dipolar BEC
with $^{52}$Cr atoms was successfully realized in experiments at the
Stuttgart University \cite{Gri}.
 Very recently in 2011, a dipolar BEC  with $^{164}$Dy atoms, whose
dipole-dipole interaction is much stronger than that of  $^{52}$Cr,
has been performed in experiments at the
Stanford University \cite{Lum}.  These success of
experiments have renewed interests in theoretically studying dipolar BECs.

In this paper, we will consider the zero temperature mean-field model of dipolar BEC,
 the three-dimensional (3D)  Gross-Pitaevskii equation (GPE) with dipolar interaction
 in dimensionless form
\cite{Bao1,Carles,Santos,YY1,YY2} \be \label{ngpe1} i\p_t
\psi(\br,t)=\left[-\frac{1}{2}\nabla^2+V(\br)+\beta |\psi|^2+\lambda
\left(U_{\rm dip}\ast|\psi|^2\right)\right]\psi, \qquad \br=(\bx,z)\in{\Bbb
R}^3, \quad t>0,\ee where $t$ is time, $\bx=(x,y)\in\Bbb R^2$ and
$\br=(\bx,z)=(x,y,z)\in\Bbb R^3$ are
the Cartesian coordinates, $\psi=\psi(\br,t)$ is the dimensionless
complex-valued wave function, $V(\br)$ is a given real-valued trapping
potential in the experiments,
$\beta$ and $\lambda$ are dimensionless constants representing the
 contact interaction and dipolar interaction, respectively,
and $U_{\rm dip}(\br)$ is given as \be\label{kel} U_{\rm dip}(\br)=
\frac{3}{4\pi}\,\fl{1-3(\br\cdot \bf
n)^2/|\br|^2}{|\br|^3}=\frac{3}{4\pi}\,
\fl{1-3\cos^2(\theta)}{|\br|^3}, \qquad \br\in{\Bbb R}^3,\ee with
the dipolar axis $\bn=(n_1,n_2,n_3)\in\Bbb R^3$ satisfying
$|\bn|=\sqrt{n_1^2+n_2^2+n_3^3}=1$. Although the kernel $U_{\rm dip}$
is highly singular near the origin,
the convolution is well-defined for $\rho\in L^p(\Bbb R^3)$
with $U_{\rm dip}*\rho\in L^p(\Bbb R^3)$  ($p\in(1,\infty)$) \cite{Carles}.
In the context of BEC, the initial data is usually
normalized such that $\|\psi(\cdot,0)\|_{L^2}=1$.

Denote the
differential operators $\partial_{\bn}=\bn\cdot\nabla$ and
$\partial_{\bn\bn}=\partial_{\bn}\partial_{\bn}$, and  notice the
identity \cite{Bao1} \be \label{decop1} U_{\rm dip}(\br)=\fl{3}{4\pi
|\br|^3}\left(1-\frac{3(\br\cdot {\bf n})^2}{|\br|^2}\right) = -
\delta (\br)-3\p_{\bn\bn}\left( \frac{1}{4\pi |\br|}\right),\qquad
\br\in {\Bbb R}^3,\ee with $\delta$ being the Dirac distribution, we can
re-formulate  the GPE (\ref{ngpe1})  as the following
Gross-Pitaevskii-Poisson system (GPPS) \cite{Bao1,Bao2} \bea
\label{gpe} &&i \p_t
\psi(\br,t)=\left[-\fl{1}{2}\nabla^2+V(\br)+(\beta-\lambda)
|\psi|^2-3\lambda \p_{\bn\bn} \varphi \right]\psi, \quad \br\in{\Bbb
R}^3,
\quad t>0,  \\
\label{poisson}&&\qquad \nabla^2 \varphi(\br,t) =
-|\psi(\br,t)|^2,\qquad \br\in{\Bbb R}^3, \qquad
\lim\limits_{|\br|\to\infty}\varphi(\br,t)=0,\qquad  t\ge0,
 \eea
 The above GPPS in 3D conserves the {\sl mass}, or the {\sl normalization} condition,
 \be
\label{norm3d}\|\psi(\cdot,t)\|_{2}^2=\int_{{\Bbb
R}^3}|\psi(\br,t)|^2\;d\br\equiv \int_{{\Bbb
R}^3}|\psi(\br,0)|^2\;d\br=1, \qquad t\ge0,\ee
 and  {\sl energy} per particle
\begin{eqnarray}
E_{3D}(\psi)=\int_{\Bbb
R^3}\left[\frac12|\nabla\psi|^2+V(\br)|\psi|^2+\frac{\beta-\lambda}{2}
|\psi|^4+\frac{3\lambda}{2}\left|\p_{\bn}\nabla\varphi\right|^2\right]\,d\br,
\quad \varphi=\frac{1}{4\pi|\br|}*|\psi|^2.
\end{eqnarray}
It was proven \cite{Bao1} that when $\beta\ge0$ and
$-\frac{\beta}{2}\le \lambda\le \beta$, there exists a unique
positive ground state $\Phi_g$ which is defined as the minimizer of
the energy functional, i.e. $E_{3D}(\Phi_g)=\min_{\|\Phi\|_2=1}\
E_{3D}(\Phi)$ and the Cauchy problem of the GPPS
(\ref{gpe})-(\ref{poisson}) is globally well-posed; otherwise
there exists no ground state and  the Cauchy problem is locally
well-posed and finite time blow-up may happen under certain
conditions \cite{Bao1}.

In many physical experiments of dipolar BECs,  the condensates
 are confined with  strong harmonic trap in one  or two axes directions,
 resulting in a pancake- or cigar-shaped dipolar BEC, respectively.
 Mathematically speaking, this corresponds to the anisotropic potentials
 $V(\br)$ of the form:

{\sl Case I} (pancake-shaped), potential is strongly confined in the
vertical $z$ direction
with \be\label{case1} V(\br)=V_2(\bx)+\frac{z^2}{2\vep^4}, \qquad
\br=(\bx,z)\in{\Bbb R}^3, \ee

 {\sl Case II} (cigar-shaped), potential is strongly confined in the
 horizontal $\bx=(x,y)\in {\Bbb R}^2$
 plane with
\be\label{case2} V(\br)=V_1(z)+\frac{x^2+y^2}{2\vep^4}, \qquad
\br=(\bx,z)\in {\Bbb R}^3, \ee
where $0<\vep\ll 1$ is a small parameter describing
the strength of confinement. In such cases, the above GPPS in 3D can be formally reduced
to 2D and 1D, respectively \cite{Bao2}.

In  {\sl Case I}, when $\vep\to 0^+$,  evolution of the solution
$\psi(\br,t)$ of GPPS (\ref{gpe})-(\ref{poisson}) in $z$-direction
would essentially occur in the ground state mode of
$L_z:=-\frac12\p_{zz}+\frac{z^2}{2\vep^4}$, which is spanned by
$w_\vep(z)=\vep^{-1/2}\pi^{-1/4}e^{-\frac{z^2}{2\vep^2}}$
\cite{Bao2,Bao21}. By taking the ansatz
\be\label{an1}\psi(\bx,z,t)=e^{-it/2\vep^2}\phi(\bx,t)w_\vep(z),
\qquad (\bx,z)\in{\Bbb R}^3, \quad t\ge0, \ee the 3D GPPS
(\ref{gpe})-(\ref{poisson}) will be formally reduced to a {\sl
{quasi-2D equation} I} \cite{Bao2}: \bea \label{gpe2d} i\p_t
\phi=\left[-\frac12\Delta+V_2+\frac{\beta-\lambda+3\lambda
n_3^2}{\sqrt{2\pi}\,\vep} |\phi|^2-\frac{3\lambda}{2}(
\p_{\bn_\perp\bn_\perp} -n_3^2\Delta)\varphi^{2D} \right]\phi, \quad
\bx\in {\Bbb R}^2, \ t>0, \qquad \eea where $\bx=(x,y)$,
$\bn_\perp=(n_1,n_2)$, $\p_{\bn_\perp}=\bn_\perp\cdot\nabla$,
$\p_{\bn_\perp\bn_\perp}=\p_{\bn_\perp}(\p_{\bn_\perp})$,
$\Delta=\p_{xx}+\p_{yy}$ and
 \be\label{u2d1}
 \varphi^{2D}(\bx,t)=U^{2D}_\vep*|\phi|^2,\quad
 U^{2D}_\vep(\bx)=\frac{1}{2\sqrt{2}\pi^{3/2}}
 \int_{\Bbb R}\frac{e^{-s^2/2}}{\sqrt{x^2+y^2+
\vep^2s^2}}\,ds, \quad \bx\in{\Bbb R}^2,\ t\ge0. \ee In addition, as
$\vep\to0^+$, $\varphi^{2D}$ can be approximated by
$\varphi^{2D}_\infty$ \cite{Bao2} as :
\be\label{u2d2} \varphi^{2D}_\infty(\bx,t)=U_{\rm dip}^{2D}*|\phi|^2, \quad
\hbox{with} \quad U_{\rm
dip}^{2D}(\bx)=\frac{1}{2\pi\sqrt{x^2+y^2}}, \qquad \bx\in{\Bbb
R}^2, \quad t\ge0, \ee which can be re-written as a fractional
Poisson equation \cite{Bao2} \be\label{u2d2k}
(-\Delta)^{1/2}\varphi^{2D}_\infty(\bx,t)=|\phi(\bx,t)|^2,\quad \bx\in{\Bbb
R}^2, \quad \lim\limits_{|\bx|\to\infty}\varphi^{2D}_\infty(\bx,t)=0,
\qquad t\ge0. \ee Thus an alternative {\sl{quasi-2D equation} II}
can be obtained as \cite{Bao2}: \be \label{gpe2d2} i \p_t
\phi=\left[-\frac12\Delta+V_2+\frac{\beta-\lambda+3\lambda
n_3^2}{\sqrt{2\pi}\,\vep} |\phi|^2-\frac{3\lambda}{2}(
\p_{\bn_\perp\bn_\perp} -n_3^2\Delta)(-\Delta)^{-1/2}(|\phi|^2)
\right]\phi.\ee

Similarly, in  {\sl Case II}, evolution of the solution
$\psi(\bx,z,t)$ of GPPS (\ref{gpe})-(\ref{poisson}) in $\bx=(x,y)$-directions
would essentially occur in the ground state mode of
$L_{\bx}:=-\frac12(\p_{xx}+\p_{yy})+\frac{x^2+y^2}{2\vep^4}$, which
is spanned by
$w_\vep(\bx)=\vep^{-1}\pi^{-1/2}e^{-\frac{|\bx|^2}{2\vep^2}}$
\cite{Bao2,Bao21}. Again, by taking the ansatz
\be\label{an2}\psi(\bx,z,t)=e^{-i t/\vep^2}\phi(z,t)w_\vep(\bx),
\qquad (\bx,z)\in{\Bbb R}^3, \quad t\ge0,\ee the 3D GPPS
(\ref{gpe})-(\ref{poisson}) will be formally reduced to a {\sl
{quasi-1D equation}} \cite{Bao2}:
 \bea \label{gpe1d} i \p_t
\phi=\left[-\frac{1}{2}\p_{zz}+V_1+\frac{\beta+\frac{1}{2}\lambda
(1-3n_3^2)}{2\pi\vep^2} |\phi|^2 -\frac{3\lambda(
3n_3^2-1)}{8\sqrt{2\pi}\,\vep}\p_{zz}\varphi^{1D} \right]\phi, \quad
z\in{\Bbb R}, \ t>0,\quad \eea where \be \label{poisson1d}
\varphi^{1D}(z,t)=U_\vep^{1D}*|\phi|^2, \qquad
U^{1D}_\vep(z)=\frac{\sqrt{2}e^{z^2/2\vep^2}}{\sqrt{\pi}\,\vep}\int_{|z|}^\infty
e^{-s^2/2\vep^2}\,ds, \qquad z\in{\Bbb R}, \quad t\ge0.
 \ee

The above effective lower dimensional models in 2D and 1D are very
useful in the study of dipolar BEC since  they are much easier and
cheaper to be simulated in practical computation. In fact, for the
GPE without the dipolar term, i.e. $\lambda=0$, there
 have been extensive  studies on this subject. For formal analysis
 and  numerical simulation, the convergence
 rate of such dimension reduction was investigated numerically in
 \cite{Bao3,Bao4} and a nonlinear Schr\"{o}dinger equation  with
 polynomial nonlinearity in reduced dimensions was proposed in \cite{Luca}.
 For rigorous analysis,
 convergence of the  dimension reduction under anisotropic confinement
 has been proven  in the weak
 interaction regime \cite{bacm,bamsw}, i.e.
 $\beta=O(\vep)$ in 2D  and $\beta=O(\vep^2)$ in 1D.
 However, with the dipolar term, i.e. $\lambda\ne0$, there were few works  towards
 the mathematical analysis for this dimension reduction except some preliminary results in
 \cite{Carles} where different scalings and formaulation were adapted. In fact,
 our quasi-2D models (\ref{gpe2d}) and (\ref{gpe2d2}) and quais-1D model
 (\ref{gpe1d}) are much easier
 to be used in mathematical analysis and practical numerical computation.



The main aim of this paper is to establish existence and uniqueness
of the ground states and well-posedness of the Cauchy problems
associated to the quasi-2D equations I and  II and quasi-1D
equation, and to analyze the convergence and convergence rate of the
dimension reduction from 3D to 2D and 1D. In order to do so, without
loss of generality, we assume the potential $V_d(\eta)\ge0$
 for $\eta\in{\Bbb R}^d$ ($d=1,2,3$).
 It is natural to consider the energy
space in  $d$-dimensions ($d=1,2,3$) defined as
$$ X_d=\left\{u\in
H^1({\Bbb R}^d)\ \big|\ \|u\|_{X_d}^2=\|u\|_{L^2}^2+\|\nabla
u\|_{L^2}^2+\int_{\Bbb
R^d}V_d(\eta)|u(\eta)|^2\,d\eta<\infty\right\},$$ and
 the unit sphere of $X_d$ defined as
$$
S_d=X_d\bigcap\{u\in L^2(\Bbb R^d)\big|\,\|u\|_{L^2(\Bbb R^d)}=1\}.
$$


This paper is organized as follows. In Sections 2, 3 and 4, we study
quasi-2D equation I (\ref{gpe2d}), II (\ref{gpe2d2}) and quasi-1D
equation (\ref{gpe1d}), respectively. In each section, we first
establish existence and uniqueness as well as nonexistence of the
ground state under different parameter regimes, and then study the
well-posedness of the corresponding  Cauchy problem. In Section 5,
we rigourously prove the validity of dimension reduction from 3D
GPPS (\ref{gpe})-(\ref{poisson}) to 2D and 1D in the weak
interaction regimes. Our approach is based on \textit{a-priori}
estimates from the energy and mass conservation together with the
Strichartz estimates.

Throughout the paper, we adopt the standard notation of Sobolev
space and use $\|f\|_p^p:=\int_{{\Bbb R}^d} |f(\eta)|^p\,d\eta$ for
$p\in(0,\infty)$ when there is no confusion about the space $\Bbb R^d$,
denote $C$ as a generic constant which is
independent of $\vep$, let  $X^\ast$ as the dual space of $X$, and
adopt the Fourier transform of a function
 $f(\eta)\in L^1(\Bbb R^d)$  as
\be \hat{f}(\xi)=\int_{\Bbb R^d}f(\eta)e^{-i\xi\cdot\eta}\,d\eta,\qquad
\xi\in\Bbb R^d. \ee

\section{Results for the quasi-2D equation I}
\setcounter{equation}{0}

In this section, we prove existence and uniqueness as well as
nonexistence of  ground states for the quasi-2D equation I under
different parameter regimes and local (global) existence for the
Cauchy problem. For considering the ground state in 2D, let $C_b$ be
the best constant from the Gagliardo-Nirenberg inequality
\cite{Weinstein}, i.e. \be C_b:=\inf_{0\ne f\in H^1({\Bbb R}^2)} \
\frac{\|\nabla f\|_{L^2({\Bbb R}^2)}^2\cdot\|f\|_{L^2({\Bbb
R}^2)}^2} {\|f\|_{L^4({\Bbb R}^2)}^4}. \ee

\subsection{Existence and uniqueness of ground state}
Associated to the quasi-2D equation I (\ref{gpe2d})-(\ref{u2d1}),
the energy is \be\label{ener2d} E_{2D}(\Phi)=\int_{\Bbb
R^2}\left[\frac12|\nabla\Phi|^2+V_2(\bx)|\Phi|^2+
\frac{\beta-\lambda+3n_3^2\lambda}{2\sqrt{2\pi}\,\vep}|\Phi|^4-\frac{3\lambda}{4}|\Phi|^2
\widetilde{\varphi^{2D}}\right]\,d\bx, \quad \Phi\in X_2,\ee  where
\be\widetilde{\varphi^{2D}}=\left(\p_{\bn_\perp\bn_\perp}-n_3^2\Delta\right)\varphi^{2D},
\qquad \varphi^{2D}=U_\vep^{2D}*|\Phi|^2.\ee

The ground state $\Phi_g\in S_2$ of  (\ref{gpe2d}) is  the minimizer
of the nonconvex minimization problem: \be \mbox{Find } \Phi_g\in
S_2,\quad\mbox{such that }E_{2D}(\Phi_g)=\min\limits_{\Phi\in
S_2}E_{2D}(\Phi). \ee

For the ground state, we have the following results:

\begin{theorem}\label{thm1}(Existence and uniqueness of ground state)
Assume $0\leq V_2(\bx)\in L_{loc}^\infty(\Bbb R^2)$  and
 $\lim\limits_{|\bx|\to\infty}V_2(\bx)=\infty$, then
we have

(i) There exists a ground state $\Phi_g\in S_2$ of the system
(\ref{gpe2d})-(\ref{u2d1}) if one of the following conditions holds
\begin{quote}

(A$1$) $\lambda\ge0$ and $\beta-\lambda> -\sqrt{2\pi} C_b\,\vep$;

(A$2$) $\lambda<0$ and
$\beta+\frac{1}{2}(1+3|2n_3^2-1|)\lambda>-\sqrt{2\pi} C_b\,\vep$.
\end{quote}

(ii) The positive ground state $|\Phi_g|$ is unique under one of the
following conditions:

\begin{quote}
(A1$^\prime$) $\lambda\ge0$ and $\beta-\lambda\ge 0$;

(A2$^\prime$) $\lambda<0$ and
$\beta+\frac12(1+3|2n_3^2-1|)\lambda\ge0$.
\end{quote}
 Moreover, any ground state is of the form $\Phi_g=e^{i\theta_0}|\Phi_g|$
for some constant $\theta_0\in\Bbb R$.

(iii) If $\beta+\frac12\lambda(1-3n_3^2)<-\sqrt{2\pi} C_b\,\vep$,
there exists no ground state of the equation (\ref{gpe2d}).
\end{theorem}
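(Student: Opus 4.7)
The plan is to address the three parts of the theorem through a common sharp Fourier-multiplier analysis of the dipolar operator, followed by the direct method for (i), a convexity argument for (ii), and a mass-preserving scaling construction for (iii). The key preliminary is the symbol $m(\xi) = [n_3^2|\xi|^2 - (\bn_\perp\cdot\xi)^2]\,\widehat{U^{2D}_\vep}(\xi)$ of the linear map $|\Phi|^2 \mapsto \widetilde{\varphi^{2D}}$. Writing $\xi = r\omega$ with $\alpha = \bn_\perp \cdot \omega$ and using the closed form $\widehat{U^{2D}_\vep}(\xi) = \frac{1}{\sqrt{2\pi}\,|\xi|}\int_{\Bbb R} e^{-s^2/2-\vep|s||\xi|}\,ds$, one obtains $m(r\omega) = (n_3^2 - \alpha^2)\,g(r)$ with $g(r) = \frac{r}{\sqrt{2\pi}}\int e^{-s^2/2-\vep|s|r}\,ds \in [0, \sqrt{2/\pi}/\vep]$, the sup being attained as $r\to\infty$. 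Since $\alpha^2 \in [0, 1-n_3^2]$, this yields the sharp bounds $\sup_\xi m = n_3^2\sqrt{2/\pi}/\vep$ and $\inf_\xi m = \min(0,\,2n_3^2-1)\sqrt{2/\pi}/\vep$, and by Plancherel $\int |\Phi|^2\widetilde{\varphi^{2D}}\,d\bx = \frac{1}{(2\pi)^2}\int m(\xi)|\widehat{|\Phi|^2}(\xi)|^2\,d\xi$ is controlled from above and below by multiples of $\|\Phi\|_4^4$.

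For (i), combining these bounds with the $|\Phi|^4$ coefficient $\frac{\beta-\lambda+3n_3^2\lambda}{2\sqrt{2\pi}\vep}$ in $E_{2D}$ collapses the effective quartic coefficient to $\frac{\beta-\lambda}{2\sqrt{2\pi}\vep}$ under (A1) and to $\frac{\beta+\frac{1}{2}(1+3|2n_3^2-1|)\lambda}{2\sqrt{2\pi}\vep}$ under (A2); the Gagliardo--Nirenberg bound $\|\Phi\|_4^4 \leq \|\nabla\Phi\|_2^2/C_b$ on $S_2$ then gives $E_{2D}(\Phi) \geq c_0\|\nabla\Phi\|_2^2 + \int V_2|\Phi|^2$ with $c_0 > 0$. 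A minimizing sequence is therefore bounded in $X_2$; since $V_2\to\infty$, the embedding $X_2 \hookrightarrow L^p(\Bbb R^2)$ is compact for every $p \in [2,\infty)$, and extracting a subsequence with weak $X_2$-limit and strong $L^p$-limit in $S_2$, Fatou gives lower semicontinuity of the kinetic and potential terms while the quartic and dipolar terms are continuous under $L^4$-convergence (the latter because $m(D)$ is bounded on $L^2$ by $\|m\|_\infty$). The weak limit is a ground state.

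For (ii), the quadratic-in-$\rho = |\Phi|^2$ part of $E_{2D}$ equals $\frac{1}{(2\pi)^2}\int[a_1 + a_2 m(\xi)]|\hat\rho|^2\,d\xi$ with $a_1 = \frac{\beta-\lambda+3n_3^2\lambda}{2\sqrt{2\pi}\vep}$ and $a_2 = -\frac{3\lambda}{4}$; the sharp bounds on $m$ give $a_1 + a_2 m(\xi) \geq 0$ for every $\xi$ precisely under (A1$'$) or (A2$'$). The potential term is linear in $\rho$ and the kinetic term $\int |\nabla\sqrt\rho|^2$ is strictly convex in $\rho$ on $S_2$, so $\rho \mapsto E_{2D}(\sqrt\rho)$ is strictly convex and its minimizer is unique, giving uniqueness of the positive ground state $|\Phi_g|$. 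Any minimizer $\Phi$ satisfies $E_{2D}(|\Phi|) \leq E_{2D}(\Phi)$ by the diamagnetic inequality $|\nabla|\Phi|| \leq |\nabla\Phi|$, with equality only when $\Phi = e^{i\theta_0}|\Phi_g|$ for a constant $\theta_0$; here we use that $|\Phi_g| > 0$ almost everywhere by the strong maximum principle applied to its Euler--Lagrange equation.

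For (iii), I test the energy against the mass-preserving rescaling $\Phi_\tau(\bx) = \tau Q(\tau\bx)$, $\tau \to \infty$, where $Q \in S_2$ is the radial Townes soliton realizing Gagliardo--Nirenberg equality $\|\nabla Q\|_2^2 = C_b\|Q\|_4^4$. Then $\|\nabla\Phi_\tau\|_2^2 = \tau^2\|\nabla Q\|_2^2$, $\|\Phi_\tau\|_4^4 = \tau^2\|Q\|_4^4$, and $\int V_2|\Phi_\tau|^2 = \int V_2(\eta/\tau)|Q(\eta)|^2\,d\eta$ stays bounded since $V_2\in L^\infty_{loc}$ and $Q$ decays exponentially. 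The dipolar integral becomes $\frac{\tau^2}{(2\pi)^2}\int m(\tau\eta)|\widehat{|Q|^2}(\eta)|^2\,d\eta$; dominated convergence together with $g(r) \to \sqrt{2/\pi}/\vep$ and the radial-symmetry average $\langle\alpha^2\rangle = (1-n_3^2)/2$ yields
\begin{equation*}
\frac{E_{2D}(\Phi_\tau)}{\tau^2} \longrightarrow \frac{\|Q\|_4^4}{2}\left[C_b + \frac{\beta+\frac{1}{2}\lambda(1-3n_3^2)}{\sqrt{2\pi}\,\vep}\right],
\end{equation*}
which is strictly negative under the assumed inequality, so $E_{2D}(\Phi_\tau) \to -\infty$ and no ground state exists. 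The main obstacle throughout is this sharp multiplier analysis: securing the exact constants $n_3^2\sqrt{2/\pi}/\vep$, $(2n_3^2-1)\sqrt{2/\pi}/\vep$, and the angular average $(3n_3^2-1)/2$ is what aligns the otherwise mysterious threshold combinations appearing in (A1)--(A2), (A1$'$)--(A2$'$), and part (iii).
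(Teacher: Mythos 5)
Your proposal is correct and follows essentially the same route as the paper: Plancherel/Fourier-multiplier bounds on the dipolar term (your closed form $\widehat{U^{2D}_\vep}(\xi)=\frac{1}{\sqrt{2\pi}|\xi|}\int e^{-s^2/2-\vep|s||\xi|}ds$ is an equivalent rewriting of the paper's $\frac{1}{\pi}\int\frac{e^{-\vep^2s^2/2}}{|\xi|^2+s^2}ds$, and your constants $n_3^2$, $2n_3^2-1$, $(3n_3^2-1)/2$ reproduce the paper's bounds exactly), the direct method with the compactness supplied by $V_2\to\infty$ for existence, strict convexity of $\rho\mapsto E_{2D}(\sqrt\rho)$ for uniqueness, and the mass-preserving concentration scaling of the radial Gagliardo--Nirenberg optimizer with the angular average for nonexistence. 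The only (shared, minor) looseness is that the potential term $\int V_2(\eta/\tau)|Q|^2\,d\eta$ in part (iii) is asserted bounded without a growth hypothesis on $V_2$; the paper does the same.
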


In order to prove this theorem, we first study the property of the
nonlocal term.

\begin{lemma}\label{lem1}(Kernel $U_\vep^{2D}$ in (\ref{u2d1}))
 For any real function $f(\bx)$ in the Schwartz
space ${\mathcal{S}}(\Bbb R^2)$, we have \be\label{u2dcov}
\widehat{U_\vep^{2D}*f}(\xi)=\hat{f}(\xi)\,\widehat{U_\vep^{2D}}(\xi)=\frac{\hat{f}(\xi)}{\pi}\int_{\Bbb
R}\frac{e^{-\vep^2s^2/2}}{|\xi|^2+s^2}ds, \qquad f\in
{\mathcal{S}}(\Bbb R^2). \ee Moreover, define the operator
$$T_{\alpha\alpha^\prime}(f)=\p_{\alpha\alpha^\prime}(U_\vep^{2D}*f),\qquad \alpha,\alpha^\prime=x,y,
$$
then  we have \be \label{bdTjk}\|T_{\alpha\alpha^\prime}f\|_2\leq
\frac{\sqrt{2}}{\sqrt{\pi}\,\vep}\|f\|_2,\qquad \|T_{\alpha\alpha^\prime}f\|_2\leq \|\nabla
f\|_2, \ee hence $T_{\alpha\alpha^\prime}$ can be extended to a bounded linear
operator from $L^2(\Bbb R^2)$ to $L^2(\Bbb R^2)$.
\end{lemma}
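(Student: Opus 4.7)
The whole lemma is Fourier-analytic, so the plan is: (i) compute $\widehat{U_\vep^{2D}}$ explicitly; (ii) read off pointwise bounds on the multiplier $\xi_\alpha\xi_{\alpha'}\widehat{U_\vep^{2D}}(\xi)$; (iii) conclude via Plancherel.

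For (i), starting from the definition $U_\vep^{2D}(\bx)=\frac{1}{2\sqrt 2\,\pi^{3/2}}\int_{\Bbb R}\frac{e^{-s^2/2}}{\sqrt{|\bx|^2+\vep^2 s^2}}\,ds$, I would compute the 2D Fourier transform under the $s$-integral. Two routes are available. Route A uses the classical identity $\mathcal F_{\bx\to\xi}\bigl((|\bx|^2+a^2)^{-1/2}\bigr)=\frac{2\pi e^{-a|\xi|}}{|\xi|}$ with $a=\vep|s|$, which yields $\widehat{U_\vep^{2D}}(\xi)=\frac{1}{\sqrt{2\pi}\,|\xi|}\int_{\Bbb R}e^{-s^2/2-\vep|s||\xi|}\,ds$, and then one rewrites this in the target form using the Poisson-kernel representation $\frac{1}{|\xi|^2+s^2}=\frac{1}{|\xi|}\int_0^\infty e^{-|\xi|t}\cos(st)\,dt$ together with a Gaussian integral in $s$. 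Route B (cleaner) inserts the 3D Fourier representation $\frac{1}{\sqrt{|\bx|^2+\vep^2 s^2}}=\frac{1}{2\pi^2}\int_{\Bbb R^3}\frac{e^{i(\xi\cdot\bx+\eta\vep s)}}{|\xi|^2+\eta^2}\,d\xi\,d\eta$ into the definition, performs the Gaussian integral in $s$ via $\int_{\Bbb R}e^{-s^2/2}e^{i\eta\vep s}\,ds=\sqrt{2\pi}\,e^{-\vep^2\eta^2/2}$, and then reads off the formula directly. All swaps of integrals are justified on Schwartz $f$ by absolute integrability. I expect Route B to be the main technical step; writing the 3D Coulomb kernel as an oscillatory integral is the key manipulation.

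For (ii), once the formula $\widehat{U_\vep^{2D}}(\xi)=\frac{1}{\pi}\int_{\Bbb R}\frac{e^{-\vep^2 s^2/2}}{|\xi|^2+s^2}\,ds$ is in hand, both multiplier bounds are immediate. First, using $|\xi|^2\le |\xi|^2+s^2$,
\[
|\xi|^2\,\widehat{U_\vep^{2D}}(\xi)\le \frac{1}{\pi}\int_{\Bbb R}e^{-\vep^2 s^2/2}\,ds=\frac{\sqrt 2}{\sqrt\pi\,\vep}.
\]
Second, the substitution $s=|\xi|u$ gives
\[
|\xi|\,\widehat{U_\vep^{2D}}(\xi)=\frac{1}{\pi}\int_{\Bbb R}\frac{e^{-\vep^2|\xi|^2 u^2/2}}{1+u^2}\,du\le \frac{1}{\pi}\int_{\Bbb R}\frac{du}{1+u^2}=1.
\]
Since $|\xi_\alpha\xi_{\alpha'}|\le|\xi|^2$, we obtain $|\xi_\alpha\xi_{\alpha'}\widehat{U_\vep^{2D}}(\xi)|\le \frac{\sqrt 2}{\sqrt\pi\,\vep}$ and also $|\xi_\alpha\xi_{\alpha'}\widehat{U_\vep^{2D}}(\xi)|\le |\xi|$.

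For (iii), by the Fourier-convention of the paper, $\widehat{T_{\alpha\alpha'}f}(\xi)=-\xi_\alpha\xi_{\alpha'}\widehat{U_\vep^{2D}}(\xi)\hat f(\xi)$, so Plancherel converts the two pointwise multiplier bounds directly into the two inequalities in (\ref{bdTjk}). The extension from $\mathcal S(\Bbb R^2)$ to a bounded operator on $L^2(\Bbb R^2)$ then follows by density. The only genuine obstacle is step (i); steps (ii) and (iii) are routine once the explicit formula is secured.
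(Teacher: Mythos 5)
Your proposal is correct and follows essentially the same route as the paper: the paper also derives $\widehat{U_\vep^{2D}}(\xi)=\frac{1}{\pi}\int_{\Bbb R}\frac{e^{-\vep^2 s^2/2}}{|\xi|^2+s^2}\,ds$ by expressing $U_\vep^{2D}$ through the 3D Coulomb kernel averaged against Gaussians and applying Plancherel in the transverse variable (your Route B in slightly different clothing), and then obtains (\ref{bdTjk}) from exactly the two pointwise multiplier bounds you state, namely $|\xi|^2\widehat{U_\vep^{2D}}\le\frac{\sqrt2}{\sqrt\pi\,\vep}$ and $|\xi|\,\widehat{U_\vep^{2D}}\le 1$, followed by Plancherel. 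No gaps.
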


\begin{proof} From (\ref{u2d1}), we have  \be
|U_\vep^{2D}(\bx)|=\left|\frac{1}{2\sqrt{2}\pi^{3/2}} \int_{\Bbb
R}\frac{e^{-s^2/2}}{\sqrt{|\bx|^2+ \vep^2 s^2}}\,ds\right|\leq
\frac{1}{2\pi|\bx|},\quad {\bf 0}\ne \bx\in {\Bbb R}^2. \ee  This
immediately implies that $U_\vep^{2D}*g$ is well-defined for any
$g\in L^1(\Bbb R^2)\bigcap L^2(\Bbb R^2)$  since the right hand side
in the above inequality is the singular kernel of Riesz potential.
Re-write $U^{2D}_\vep(\bx)$  as \cite{Bao2}
\[ U^{2D}_\vep(\bx)=\frac{1}{2\pi}\int_{\Bbb
R^2}\frac{w_\vep^2(z)w_\vep^2(z^\prime)}
{\sqrt{|\bx|^2+(z-z^\prime)^2}}dzdz^\prime, \qquad \bx\in{\Bbb R}^2,
\]
 using the Plancherel
formula, we get \be
\widehat{U_\vep^{2D}}(\xi_1,\xi_2)=\frac{1}{\pi}\int_{\Bbb
R}\frac{\widehat{w_\vep^2}(\xi_3)\overline{\widehat{w_\vep^2}}(\xi_3)}
{\xi_1^2+\xi_2^2+\xi_3^2}d\xi_3 =\frac{1}{\pi}\int_{\Bbb
R}\frac{e^{-\vep^2s^2/2}}{|\xi|^2+s^2}ds,\qquad
\xi=(\xi_1,\xi_2)\in{\Bbb R}^2, \ee which immediately implies
(\ref{u2dcov}). Here $\bar{c}$ denotes the complex conjugate of $c$.
Concerning $T_{\alpha\alpha^\prime}$, we only need to prove the
results for  $T_{xx}$ since others are similar.   Applying the Fourier transform, we have  \be
\left|\widehat{T_{xx}f}(\xi)\right|=\left|\frac{\hat{f}(\xi)}{\pi}\int_{\Bbb
R}\frac{e^{-\vep^2s^2/2}\xi_1^2}{|\xi|^2+s^2}ds\right|\leq
\frac{\left|\hat{f}(\xi)\right|}{\pi}\int_{\Bbb
R}e^{-\vep^2s^2/2}ds=\frac{\sqrt{2}}{\sqrt{\pi}\,\vep}\left|\hat{f}(\xi)\right|,\quad
\xi\in{\Bbb R}^2. \ee Thus we can get the first inequality in
(\ref{bdTjk}) and know that $T_{xx}:\ L^2\to L^2$ is bounded.
Moreover, from \be
\left|\widehat{T_{xx}f}(\xi)\right|=\left|\frac{\hat{f}(\xi)}{\pi}\int_{\Bbb
R}\frac{e^{-\vep^2s^2/2}\xi_1^2}{|\xi|^2+s^2}ds\right|\leq
\frac{|\hat{f}(\xi)|\;|\xi_1|^2}{\pi}\int_{\Bbb
R}\frac{1}{|\xi|^2+s^2}ds\leq |\xi|\;|\hat{f}(\xi)|, \ee we  obtain
the second inequality in (\ref{bdTjk}) and know that $T_{xx}:
H^1\to L^2$ is bounded too.  \end{proof}

\begin{remark}\label{LpboundT}In fact, $T_{\alpha\alpha^\prime}$ is bounded from $L^p\to L^p$,
i.e., there exists $C_p>0$ independent of $\vep$, such that \be
\|T_{\alpha\alpha^\prime}(f)\|_{p}\leq \frac{C_p}{\vep}\|f\|_{p},\quad p\in
(1,\infty). \ee This can be obtained by Minkowski inequality and $L^p$ estimates for
Poisson equation.
\end{remark}

\begin{lemma}\label{lem2} For the energy  $E_{2D}(\cdot)$ in (\ref{ener2d}), we have

(i) For any $\Phi\in S_2$, denote $\rho(\bx)=|\Phi(\bx)|^2$, then we have \be E_{2D}(\Phi)\geq
E_{2D}(|\Phi|)=E_{2D}\left(\sqrt{\rho}\right), \qquad \forall \Phi\in S_2,\ee so
the ground state $\Phi_g$ of (\ref{ener2d}) is of the form
$e^{i\theta_0}|\Phi_g|$ for some constant $\theta_0\in \Bbb R$.

(ii) Under the condition (A1) or (A2) in Theorem \ref{thm1},
$E_{2D}(\sqrt{\rho})$ is bounded below.

(iii) Under the  condition (A1$^\prime$) or (A2$^\prime$) in
Theorem \ref{thm1},  $E_{2D}(\sqrt{\rho})$ is strictly convex.
\end{lemma}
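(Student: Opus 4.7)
The plan is to handle the three parts by different but related tools. For (i) I would invoke the (zero-field) diamagnetic inequality $|\nabla|\Phi||\le |\nabla\Phi|$ a.e., which is a direct pointwise calculation: on $\{\Phi\ne 0\}$ one has $|\nabla|\Phi||^2 = |\mathrm{Re}(\bar\Phi\,\nabla\Phi)|^2/|\Phi|^2 \le |\nabla\Phi|^2$, and the inequality extends to $\{\Phi=0\}$ by standard approximation. Every other term in $E_{2D}$ depends only on $\rho=|\Phi|^2$, so $E_{2D}(\Phi)\ge E_{2D}(|\Phi|)=E_{2D}(\sqrt\rho)$. For a minimizer $\Phi_g$, equality in the diamagnetic inequality forces the phase of $\Phi_g$ to be a single constant $\theta_0$, giving $\Phi_g = e^{i\theta_0}|\Phi_g|$.

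The common tool for (ii) and (iii) is a Fourier representation of the dipolar term. Using Lemma \ref{lem1} and Parseval I would write
$$ \int \rho\, \widetilde{\varphi^{2D}}\,d\bx = -\int M(\xi)\,|\widehat\rho(\xi)|^2\, \frac{d\xi}{(2\pi)^2}, \quad M(\xi) = \bigl[(n_1\xi_1+n_2\xi_2)^2 - n_3^2|\xi|^2\bigr]\,\widehat{U_\vep^{2D}}(\xi). $$
Diagonalizing the quadratic form inside the brackets gives eigenvalues $1-2n_3^2$ and $-n_3^2$, and the proof of Lemma \ref{lem1} supplies $|\xi|^2\widehat{U_\vep^{2D}}(\xi)\le \sqrt{2/\pi}/\vep = 2/(\sqrt{2\pi}\vep)$. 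Consequently the quartic-plus-nonlocal portion of $E_{2D}(\sqrt\rho)$ is $\int [A+\tfrac{3\lambda}{4}M(\xi)]|\widehat\rho|^2\,d\xi/(2\pi)^2$ with $A=(\beta-\lambda+3n_3^2\lambda)/(2\sqrt{2\pi}\vep)$. For (ii): under (A1) ($\lambda\ge 0$) the pointwise bound $\tfrac{3\lambda}{4}M\ge -\tfrac{3\lambda n_3^2}{2\sqrt{2\pi}\vep}$ collapses the effective quartic coefficient to $(\beta-\lambda)/(2\sqrt{2\pi}\vep)$; under (A2) ($\lambda<0$), splitting on whether $n_3^2\le 1/2$ or $n_3^2>1/2$ produces the effective coefficient $(\beta+\tfrac{1}{2}(1+3|2n_3^2-1|)\lambda)/(2\sqrt{2\pi}\vep)$. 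In either case the strict inequality in the hypothesis combined with the Gagliardo--Nirenberg inequality $\|\Phi\|_4^4\le C_b^{-1}\|\nabla\Phi\|_2^2$ (using $\|\Phi\|_2=1$) yields $E_{2D}(\sqrt\rho)\ge c\,\|\nabla\Phi\|_2^2 + \int V_2\rho\, d\bx \ge 0$ for some $c>0$.

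For (iii) the kinetic contribution $\int|\nabla\sqrt\rho|^2\,d\bx = \int |\nabla\rho|^2/(4\rho)\,d\bx$ is \emph{strictly} convex in $\rho$: this follows from the joint convexity of $(v,r)\mapsto v^2/r$ on $\Bbb R\times(0,\infty)$, whose equality case (proportionality of $(\nabla\rho_1,\rho_1)$ and $(\nabla\rho_2,\rho_2)$) is ruled out by the mass constraint $\|\rho\|_{L^1}=1$. The potential term is linear and hence convex. The quartic-plus-nonlocal part is convex iff its Fourier multiplier $A+\tfrac{3\lambda}{4}M(\xi)$ is pointwise nonnegative, and rerunning the bounds of (ii) under the non-strict conditions (A1$^\prime$) or (A2$^\prime$) shows exactly this. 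A strictly convex plus convex functional is strictly convex, which proves (iii). The main obstacle I expect is the bookkeeping for the sign of the Fourier multiplier: because the symbol $M$ is indefinite in $\xi$, the precise threshold coefficients in (A1)/(A2) and (A1$^\prime$)/(A2$^\prime$) appear only after pairing an extremal pointwise bound on $M$---whose form depends on the sign of $\lambda$ and on whether $n_3^2$ exceeds $1/2$---with the explicit local quartic coefficient.
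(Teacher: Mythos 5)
Your proposal is correct and follows essentially the same route as the paper: the gradient/diamagnetic inequality with its equality case for (i), the Plancherel representation of the dipolar term with the two-sided bound $-n_3^2|\xi|^2\le (n_1\xi_1+n_2\xi_2)^2-n_3^2|\xi|^2\le(1-2n_3^2)|\xi|^2$ and the uniform bound on $|\xi|^2\widehat{U_\vep^{2D}}(\xi)$ combined with Gagliardo--Nirenberg for (ii), and the split into the strictly convex kinetic-plus-potential part and the quadratic (in $\rho$) interaction part whose nonnegativity as a quadratic form under (A1$^\prime$)/(A2$^\prime$) gives convexity for (iii). The only cosmetic differences are that you phrase the convexity of the interaction part as pointwise nonnegativity of the Fourier multiplier rather than via the identity $\theta H(\rho_1)+(1-\theta)H(\rho_2)-H(\rho_\theta)=\theta(1-\theta)H(\rho_1-\rho_2)$, and you spell out the strict convexity of $\rho\mapsto\int|\nabla\sqrt{\rho}|^2$ where the paper cites the literature.
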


\begin{proof} (i) For any $\Phi\in S_2$, then  $|\Phi|\in S_2$, and
a simple calculation shows \be E_{2D}(\Phi)-E_{2D}(|\Phi|)=\frac
12\|\nabla\Phi\|_2^2-\frac12\|\nabla|\Phi|\,\|_2^2\ge0, \qquad
\Phi\in S_2, \ee where the equality holds iff \cite{Lie} \be
|\nabla\Phi(\bx)|=\nabla|\Phi(\bx)|,\quad \hbox{a.e. } \bx\in \Bbb
R^2, \ee
 which is equivalent to
\be \Phi(\bx)=e^{i\theta_0}|\Phi(\bx)|,\quad\hbox{ for some }\
\theta_0\in\Bbb R. \ee Then the conclusion follows.

(ii) For $\sqrt{\rho}=\Phi\in S_2$, we split the energy $E_{2D}$
into two parts, i.e. \be\label{enr2dsplt}
E_{2D}(\Phi)=E_{1}(\Phi)+E_2(\Phi)=E_{1}(\sqrt{\rho})+E_2(\sqrt{\rho}),
\ee where \bea\label{e11} &&E_1(\sqrt{\rho})=\int_{\Bbb
R^2}\left[\frac12|\nabla \sqrt{\rho}|^2+V_2(\bx)\rho\right]d\bx, \\
\label{e12} &&E_2(\sqrt{\rho})=\int_{\Bbb
R^2}\left[\frac{\beta-\lambda+3n_3^2\lambda}{2\sqrt{2\pi}\,\vep}|\rho|^2-\frac{3\lambda}{4}\rho
\widetilde{\varphi^{2D}}\right]\,d\bx, \label{e1e2} \eea with
\be\label{eq:2d:nonlocal}
\widetilde{\varphi^{2D}}=\left(\p_{\bn_\perp\bn_\perp}-n_3^2\Delta\right)U_\vep^{2D}*\rho.
\ee
  Applying the Plancherel formula and Lemma \ref{lem1}, there holds
\bea
\int_{\Bbb R^2}\widetilde{\varphi^{2D}}(\bx)\rho(\bx)\,d\bx&=&\frac{1}{4\pi^2}
\int_{\Bbb R^2}\widehat{\widetilde{\varphi^{2D}}}(\xi)\bar{\hat{\rho}}(\xi)d\xi\nonumber\\
&=&\frac{-1}{4\pi^3}\int_{\Bbb
R^3}\frac{\left((n_1\xi_1+n_2\xi_2)^2-n_3^2
|\xi|^2\right)e^{-\vep^2s^2/2}}{|\xi|^2+s^2}|\hat{\rho}|^2dsd\xi.
\label{e1e22}\eea Recalling the Cauchy inequality and
$n_1^2+n_2^2+n_3^2=1$, we have \be
-n_3^2|\xi|^2\leq(n_1\xi_1+n_2\xi_2)^2-n_3^2|\xi|^2\leq
(1-2n_3^2)|\xi|^2, \qquad \xi\in{\Bbb R}^2. \ee Denoting
$C_0=\max\{|n_3^2|,|1-2n_3^2|\}$, we can derive that
\be\label{eq:u2d:bound} \left|\int_{\Bbb
R^2}\widetilde{\varphi^{2D}}(\bx)\rho(\bx)\,d\bx\right|\leq
\frac{C_0}{4\pi^3}\int_{\Bbb
R^3}e^{-\vep^2s^2/2}|\hat{\rho}|^2dsd\xi
=\frac{\sqrt{2}C_0}{\sqrt{\pi}\,\vep}\|\rho\|_2^2. \ee Hence,
$E_2(\sqrt{\rho})$ can be bounded below by $\|\rho\|_2^2$. In fact,
under the condition (A1), i.e. $\lambda\ge0$ and $\beta-\lambda>
-\sqrt{2\pi} C_b\,\vep$, we have \bea E_{2}(\sqrt{\rho})&\ge&
\frac{\beta-\lambda+3n_3^2\lambda}{2\sqrt{2\pi}\,\vep}\|\rho\|_2^2-
\frac{3\sqrt{2}n_3^2\lambda}{4\sqrt{\pi}\,\vep}\|\rho\|_2^2 >
-\frac{C_b}{2}\|\rho\|_2^2. \eea Similarly, under the condition
(A2), if $\lambda<0$ and $n_3^2\ge\frac12$, then \be
E_{2}(\sqrt{\rho})\ge\frac{\beta-\lambda+3n_3^2\lambda}{2\sqrt{2\pi}\,\vep}
\|\rho\|_2^2>-\frac{C_b}{2}\|\rho\|_2^2; \ee and  if $\lambda<0$
and $n_3^2<\frac12$, then \be
E_{2}(\sqrt{\rho})\ge\frac{\beta-\lambda+3n_3^2\lambda}{2\sqrt{2\pi}\,\vep}\|\rho\|_2^2+
\frac{3\sqrt{2}(1-2n_3^2)\lambda}{4\sqrt{\pi}\,\vep}\|\rho\|_2^2>-\frac{C_b}{2}\|\rho\|_2^2.
\ee Recalling the choice of the best constant $C_b$, under either
condition (A1) or (A2), the energy \be
E_{2D}(\sqrt{\rho})=E_1(\sqrt{\rho})+E_2(\sqrt{\rho})>\frac12
\|\nabla\sqrt{\rho}\|_2^2-\frac{C_b}{2}\|\rho\|_2^2\ge0. \ee

(iii) Again, we split the energy as (\ref{enr2dsplt}). It is well
known that $E_1(\sqrt{\rho})$ is strictly convex in $\rho$
\cite{Lie}. It remains to show that $E_2(\sqrt{\rho})$ is convex in
$\rho$. For any real function $u\in L^1(\Bbb R^2)\cap L^2(\Bbb
R^2)$, let
 \be
 H(u)=\int_{\Bbb R^2}\left[\frac{\beta-\lambda+3n_3^2\lambda}
 {2\sqrt{2\pi}\,\vep}|u|^2-\frac{3\lambda}{4} u
\left(\p_{\bn_\perp\bn_\perp}-n_3^2\Delta_\perp\right)(U_\vep^{2D}*u)\right]\,d\bx.
 \ee
 Then $E_2(\sqrt{\rho})=H(\rho)$. It suffices to show that $H(\rho)$ is convex in $\rho$.
 For this purpose, let $\sqrt{\rho_1}=\Phi_1\in S_2$ and $\sqrt{\rho_2}=\Phi_2\in S_2$,
 for any $\theta\in[0,1]$, consider $\rho_{_{\theta}}=\theta\rho_1+(1-\theta)\rho_2$
 and  $\sqrt{\rho_{_{\theta}}}\in S_2$,
 then we compute directly and get
\bea \theta
H(\rho_1)+(1-\theta)H(\rho_2)-H(\rho_{_\theta})=\theta(1-\theta)H(\rho_1-\rho_2).
\eea Similar as (\ref{e1e22}), looking at the Fourier domain, we can
obtain the lower bounds for $H(\rho_1-\rho_2)$ under the condition
(A1$^\prime$) or (A2$^\prime$), while replacing $C_b$ with $0$ in
the above proof of (ii), i.e., \be H(\rho_1-\rho_2)\ge0. \ee This
shows that $H(\rho)$, i.e.  $E_2(\sqrt{\rho})$, is convex in $\rho$.
Thus  $E_{2D}(\sqrt{\rho})$ is strictly convex in $\rho$. \end{proof}

\bigskip

{{\bf Proof of Theorem \ref{thm1}}:} (i) We first prove the
existence results. Lemma \ref{lem2} ensures that there exists a
minimizing sequence of  nonnegative function
$\{\Phi^n\}_{n=0}^\infty\subset S_2$, such that $
\lim\limits_{n\to\infty}E_{2D}(\Phi^n)=\inf\limits_{\Phi\in
S_2}E(\Phi). $ Then, under condition (A1) or (A2),  there exists a
constant $C$ such that \be
\|\nabla\Phi^n\|_2+\|\Phi^n\|_4+\int_{\Bbb R^2}
V_2(\bx)|\Phi^n(\bx)|^2d\bx\le C, \qquad n\ge0.\ee Therefore
$\Phi^n$ belongs to a weakly compact set in $L^4(\Bbb R^2)$,
$H^1(\Bbb R^2)$, and $L^2_{V_2}(\Bbb R^2)$ with a weighted
$L^2$-norm given by $\|\Phi\|_{L_{V_2}}=[\int_{{\Bbb
R}^2}|\Phi(\bx)|^2V_2(\bx)d\bx]^{1/2}$. Thus, there exists a
$\Phi^\infty\in W:=H^1(\Bbb R^2)\bigcap L^2_{V_2}(\Bbb R^2)\bigcap
L^4(\Bbb R^2)$ and a subsequence of $\{\Phi^n\}_{n=0}^\infty$ (which
we denote as the original sequence for simplicity), such that \be
\label{conveg0}
 \Phi^n\rightharpoonup\Phi^\infty,\quad  \mbox{in } W,\qquad\quad
\nabla \Phi^n\rightharpoonup\nabla\Phi^\infty,\quad \mbox{in } L^2.
\ee The confining condition
$\lim\limits_{|\bx|\to\infty}V_2(\bx)=\infty$ will give that
$\|\Phi^\infty\|_2=1$ \cite{Lieb,Bao1,Bao0}. Hence $\Phi^\infty \in
S_2$ and $\Phi^n\to\Phi^\infty$ in $L^2(\Bbb R^2)$ due to the
$L^2$-norm convergence and weak convergence of
$\{\Phi^n\}_{n=0}^\infty$. By the lower semi-continuity of the
$H^1$- and $L^2_{V_2}$-norm, for $E_1$ in (\ref{e11}), we know \be
E_1(\Phi^\infty)\leq\liminf\limits_{n\to\infty}E_1(\Phi^n). \ee By
the Sobolev inequality, there exists $C(p)>0$ depending on $p\ge2$,
such that $\|\Phi^n\|_p\leq C(p) (\|\nabla
\Phi^n\|_2+\|\Phi^n\|_2)\leq C(p)(1+C)$, uniformly  for $n\ge0$.
Applying the H\"{o}lder's inequality, we have \be
\|(\Phi^n)^2-(\Phi^\infty)^2\|_2^2\leq
C_1(\|\Phi^n\|_6^3+\|\Phi^n\|_6^3)\|\Phi^n-\Phi^\infty\|_2, \ee
which shows $\rho^n=(\Phi^n)^2\to\rho^\infty=(\Phi^\infty)^2\mbox{ in }
L^2(\Bbb R^2)$. Using the Fourier transform of $U_\vep^{2D}$
 in Lemma \ref{lem1} and (\ref{eq:u2d:bound}), it is easy to
 derive the convergence for $E_2$ in (\ref{e12}), i.e.
 \be
 E_2(\Phi^\infty)=\lim\limits_{n\to\infty}E_2(\Phi^n).
 \ee
 Hence
 \be
 E_{2D}(\Phi^\infty)=E_1(\Phi^\infty)+E_2(\Phi^\infty)\leq\liminf\limits_{n\to\infty}E_{2D}(\Phi^n).
 \ee
 Now, we see that $\Phi^\infty$ is indeed a minimizer.
 For the uniqueness part, it is straightforward by  the strict convexity of
 $E_{2D}(\sqrt{\rho})$ shown in Lemma \ref{lem2}.

 (ii) Since the nonlinear term in the equation behaviors as a cubic nonlinearity,
 it is natural to consider the following. Let $\Phi\in S_2$ be a real
 function that attains the best constant $C_b$ \cite{Weinstein},
 then $\Phi(\bx)$ is radially symmetric. Choose $\Phi_{\delta}(\bx)=\delta^{-1}
 \Phi(\delta^{-1}\bx)$ with $\delta>0$,
 then $\Phi_{\delta}\in S_2$. Denote $\varphi_\delta=\left(\p_{\bn_\perp\bn_\perp}
 -n_3^2\Delta_\perp\right)(U_\vep^{2D}*|\Phi_\delta|^2)$,
 by the same computation as in Lemma \ref{lem2}, there holds
 \bea
 \int_{\Bbb R^2}\varphi_\delta |\Phi_\delta|^2\,d\bx&=&\frac{-1}
 {4\pi^3}\int_{\Bbb R^3}\frac{(n_1\xi_1+n_2\xi_2)^2-n_3^2|\xi|^2}{|\xi|^2+s^2}
 e^{-\vep^2s^2/2}\left|\,\widehat{|\Phi|^2}(\delta\xi)\right|^2\,dsd\xi\nonumber
 \\
 &=&\frac{-1}{4\pi^3\delta^2}\int_{\Bbb R^3}\frac{(n_1\xi_1+n_2\xi_2)^2
 -n_3^2|\xi|^2}{|\xi|^2+\delta^2 s^2}e^{-\vep^2s^2/2}\left|\,\widehat{|\Phi|^2}(\xi)
 \right|^2\,dsd\xi.\nonumber
 \eea
 Using the fact that $\Phi(\bx)$ is radially symmetric,  $\widehat{|\Phi|^2}(\xi)$
 is also radially symmetric, then we obtain
 \be
 \int_{\Bbb R^2}\varphi_\delta |\Phi_\delta|^2\,d\bx=-
 \frac{(n_1^2+n_2^2-2n_3^2)+o(1)}{\sqrt{2\pi}\,\vep\delta^2}\|\Phi\|_4^4,\qquad\mbox{as }\ \delta\to 0^+.
 \ee
 Hence, as $\delta\to0^+$, we get
 \be
 E_{2D}(\Phi_\delta)=\frac{1}{2\delta^2}\left(\|\nabla\Phi\|_2^2+\frac{\beta+\frac12\lambda(1-3n_3^2)+o(1)}
 {\sqrt{2\pi}\,\vep}\|\Phi\|_4^4\right)+\int_{\Bbb R^2}V_2(\delta\bx)|\Phi|^2(\bx)d\bx.\nonumber
 \ee
Recalling that $\|\nabla\Phi\|_2^2=C_b\|\Phi\|_4^4$, we know
$\lim\limits_{\delta\to0^+}E_{2D}(\Phi_\delta)=-\infty$ if
$\beta+\frac12\lambda(1-3n_3^2)<-\sqrt{2\pi} C_b\,\vep$, i.e. there is
no ground state in this case. \hfill $\Box$

\subsection{Well-posedness for the Cauchy problem}

Here, we  study the well-posedness of the Cauchy problem
corresponding to the quasi-2D equation I (\ref{gpe2d})-(\ref{u2d1}).
Using the Fourier transform of the kernel $U_\vep^{2D}$ in Lemma
\ref{lem1}, it is straightforward to see that the nonlinear term
introduced by $U_\vep^{2D}$ behaves like cubic term. Thus, those
methods for classic cubic nonlinear Schr\"{o}dinger equation would
apply \cite{Cazen,Weinstein,Sulem}. In particular, we have the
following theorem concerning the Cauchy problem of
(\ref{gpe2d})-(\ref{u2d1}).

\begin{theorem}\label{thm1dy}
(Well-posedness of Cauchy problem) Suppose the real-valued trap
potential satisfies $V_2(\bx)\ge0$ for $\bx\in{\Bbb R}^2$ and
\be\label{cond:v2} V_2(\bx)\in C^\infty(\Bbb R^2) \hbox{ and
}D^{{\bf k}} V_2(\bx)\in L^\infty(\Bbb R^2),\qquad \hbox{for all }
{\bf k}\in{\Bbb N}_0^2\  \hbox{with}\  |{\bf k}|\ge 2,\ee then we have

(i) For any initial data $\phi(\bx,t=0)=\phi_0(\bx)\in X_2$,
 there exists a
$T_{\mbox{\rm max}}\in(0,+\infty]$ such that the problem
(\ref{gpe2d})-(\ref{u2d1})
 has a unique maximal solution
$\phi\in C\left([0,T_{\mbox{\rm max}}),X_2\right)$. It is maximal in
the sense that if $T_{\mbox{\rm max}}<\infty$, then
$\|\phi(\cdot,t)\|_{X_2}\to\infty$ when  $t\to T^-_{\mbox{\rm
max}}$.

(ii) As long as the solution $\phi(\bx,t)$ remains in the energy
space $X_2$, the {\sl $L^2$-norm} $\|\phi(\cdot,t)\|_2$ and {\sl
energy} $E_{2D}(\phi(\cdot,t))$ in (\ref{ener2d}) are conserved for
$t\in[0,T_{\rm max})$.

(iii) Under either condition (A1) or (A2) in  Theorem \ref{thm1}
with constant $C_b$ being replaced by $C_b/\|\phi_0\|_2^2$,
the solution of (\ref{gpe2d})-(\ref{u2d1}) is global in time, i.e.,
  $T_{\mbox{max}}=\infty$.
\end{theorem}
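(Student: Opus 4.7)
The plan is to treat (\ref{gpe2d})--(\ref{u2d1}) as a cubic nonlinear Schr\"odinger equation with a confining potential, using two main ingredients: Strichartz estimates for the linear propagator $e^{-itH}$ with $H:=-\frac{1}{2}\Delta + V_2$, and the $L^p$-boundedness of the nonlocal operator $T_{\alpha\alpha^\prime}$ from Lemma~\ref{lem1} and Remark~\ref{LpboundT}. The smoothness hypothesis (\ref{cond:v2}) on $V_2$ is precisely what is needed so that Fujiwara-type parametrix constructions yield dispersive and Strichartz estimates for $e^{-itH}$ on ${\Bbb R}^2$, in the spirit of the analyses used in \cite{bacm,bamsw}.

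\textbf{Local well-posedness (i).} I would write (\ref{gpe2d}) in Duhamel form
\begin{equation}
\phi(t)=e^{-itH}\phi_0 - i\int_0^t e^{-i(t-s)H}\,F(\phi(s))\,ds,
\end{equation}
where $F(\phi)$ collects the local cubic term and the nonlocal cubic term $T_{\alpha\alpha^\prime}(|\phi|^2)\,\phi$. Using the Strichartz estimates for $e^{-itH}$, the $L^p$-boundedness of $T_{\alpha\alpha^\prime}$ for $p\in(1,\infty)$ from Remark~\ref{LpboundT}, the $H^1\to L^2$ bound from Lemma~\ref{lem1}, and the 2D Sobolev embedding $H^1\hookrightarrow L^p$ ($p<\infty$), the map $F$ is locally Lipschitz on $X_2$. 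A standard contraction in $C([0,T];X_2)\cap L^q([0,T];L^r)$ for an admissible Strichartz pair $(q,r)$ produces a unique local solution, and the usual continuation argument gives the maximal time $T_{\max}$ and the blowup alternative.

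\textbf{Conservation laws (ii).} For the mass, I multiply the equation by $\bar\phi$, integrate over ${\Bbb R}^2$, and take imaginary parts; the nonlocal contribution vanishes because, by (\ref{u2dcov}) and (\ref{e1e22}), the symbol of $\rho\mapsto\widetilde{\varphi^{2D}}$ is real and even, hence the operator is self-adjoint on $L^2$. For the energy, I multiply by $\p_t\bar\phi$, integrate, and take real parts; the same self-adjointness gives $\frac{d}{dt}\int\rho\,\widetilde{\varphi^{2D}}\,d\bx = 2\int\widetilde{\varphi^{2D}}\,\mathrm{Re}(\bar\phi\,\p_t\phi)\,d\bx$, so all terms reassemble into $\frac{d}{dt}E_{2D}(\phi)=0$. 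To make these manipulations rigorous, I would approximate $\phi_0$ by data in $D(H)$ (for which the solution is classical and the conservation laws hold pointwise in $t$) and pass to the limit by the continuous dependence from part (i).

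\textbf{Global existence (iii) and main obstacle.} With mass conserved, $\|\phi(\cdot,t)\|_2=\|\phi_0\|_2$, and the Gagliardo--Nirenberg inequality gives $\|\phi\|_4^4\le C_b^{-1}\|\phi_0\|_2^2\,\|\nabla\phi\|_2^2$. Repeating the estimate (\ref{eq:u2d:bound}) used in Lemma~\ref{lem2}~(ii) with $C_b$ replaced by $C_b/\|\phi_0\|_2^2$ yields, under the rescaled assumption (A1) or (A2),
\begin{equation}
E_{2D}(\phi(\cdot,t)) \ge \delta\,\|\nabla\phi(\cdot,t)\|_2^2 + \int_{{\Bbb R}^2}V_2(\bx)|\phi(\bx,t)|^2\,d\bx
\end{equation}
for some $\delta>0$ independent of $t$. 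Conservation of energy from part (ii) then bounds $\|\phi(\cdot,t)\|_{X_2}$ uniformly, contradicting the blowup alternative and forcing $T_{\max}=\infty$. The main technical obstacle is the linear step: justifying Strichartz estimates for $e^{-itH}$ with the unbounded, confining potential $V_2$, which is exactly why condition (\ref{cond:v2}) is imposed; once this is secured, the nonlocal piece $T_{\alpha\alpha^\prime}(|\phi|^2)\,\phi$ behaves like a standard cubic nonlinearity and the remainder fits the classical NLS well-posedness framework.
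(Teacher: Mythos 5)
Your proposal is correct and follows essentially the same route as the paper: the paper simply delegates the Duhamel/Strichartz/contraction machinery and the conservation-law regularization to the standard NLS theorems in Cazenave (Theorems 9.2.1, 4.12.1, 5.7.1), after verifying the one nonstandard ingredient, namely that the nonlocal term is a locally Lipschitz cubic-type nonlinearity via the $L^p$-boundedness of $T_{\alpha\alpha^\prime}$ (the paper records this as $\|g(u)-g(v)\|_{4/3}\le C(M)\|u-v\|_4$). Your treatment of (ii) and (iii), including the Gagliardo--Nirenberg step with $C_b/\|\phi_0\|_2^2$, matches the paper's argument.
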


\begin{proof} The proof is standard. We shall use the known results
for semi-linear Schr\"{o}dinger equation \cite{Cazen}.  For $\phi\in
X_2$, denote $\rho=|\phi|^2$ and consider the following \bea
&&G(\phi,\bar{\phi}):=G(\rho)=\frac{1}{2}\int_{\Bbb R^2}|\phi|^2
\left(\p_{\bn_\perp\bn_\perp}-n_3^2\Delta\right)(U^{2D}_\vep*|\phi|^2)\,d\bx,\nonumber\\
&&g(\phi)=\frac{\delta G(\phi,\bar{\phi})}{\delta
\bar{\phi}}=\phi\left(\p_{\bn_\perp\bn_\perp}-n_3^2\Delta\right)(U^{2D}_\vep*|\phi|^2),
\qquad \phi\in X_2.\nonumber \eea Then the equations
(\ref{gpe2d})-(\ref{u2d1}) read \be
i\p_t\phi=\left[-\frac{1}{2}\Delta+V_2(\bx)\right]\phi+\beta_0|\phi|^2\phi-3\lambda
g(\phi), \qquad \bx\in{\Bbb R}^2, \quad t>0,\ee where
$\beta_0=\frac{\beta-\lambda+3n_3^2\lambda}{\sqrt{2\pi}\,\vep}$.
Using the $L^p$ boundedness of $T_{jk}$ (cf. Lemma \ref{lem1} and
Remark \ref{LpboundT}) and the Sobolev inequality,  for
$\|u\|_{X_2}+\|v\|_{X_2}\leq M$, it is easy to prove the following
\be \|g(u)-g(v)\|_{4/3}\leq C(M) \|u-v\|_4. \ee In view of the
standard Theorems 9.2.1, 4.12.1 and 5.7.1 in \cite{Cazen} and
\cite{Sulem} for the well-posedness of the nonlinear Schr\"{o}dinger
equation, we can obtain the results (i) and (ii) immediately. The
global existence (iii) comes from the uniform bound for
$\|\phi(\cdot,t)\|_{X_2}$ which can be derived from energy and
$L^2$-norm conservation. \end{proof}

When the initial data is small, there also exists global solutions
\cite{Cazen,Carles}. Otherwise, blow-up may happen in finite time,
and we have the following results.

\begin{theorem}\label{blowup-1}(Finite time blow-up) For any initial data
$\phi(\bx,t=0)=\phi_0(\bx)\in X_2$ with $\int_{\Bbb
R^2}|\bx|^2|\phi_0(\bx)|^2\,d\bx<\infty$,  if conditions (A1) and (A2)
with constant $C_b$ being replaced by $C_b/\|\phi_0\|_2^2$
  are not satisfied and assume $V_2(\bx)$ satisfies $2V_2(\bx)+ \bx\cdot
 \nabla V_2(\bx)\ge0$,  and let $\phi:=\phi(\bx,t)$ be
the solution of the problem (\ref{gpe2d}), there exists finite time
blow-up, i.e., $T_{\hbox{max}}<\infty$,  if $\lambda=0$, or
$\lambda>0$ and $n_3^2\ge\frac12$, and one of the following holds:

(i) $E_{2D}(\phi_0)<0$;

(ii) $E_{2D}(\phi_0)=0$ and ${\rm Im}\left(\int_{\Bbb
R^2}\bar{\phi}_0(\bx)\ (\bx\cdot\nabla\phi_0(\bx))\,d\bx\right)<0$;

(iii) $E_{2D}(\phi_0)>0$ and ${\rm Im}\left(\int_{\Bbb R^2}
\bar{\phi}_0(\bx)\ (\bx\cdot\nabla\phi_0(\bx))\,d\bx\right)
<-\sqrt{2E_{2D}(\phi_0)}\|\bx\phi_0\|_{2}$;

\noindent where ${\rm Im}(f)$ denotes the imaginary part of $f$.
\end{theorem}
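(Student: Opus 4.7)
The plan is to apply the classical Glassey virial/convexity argument, adapted to handle the nonlocal dipolar term, and conclude finite-time blow-up by contradiction. Introduce the variance
\[
\Theta(t)=\int_{\Bbb R^2}|\bx|^2|\phi(\bx,t)|^2\,d\bx,
\]
which is finite and twice differentiable on $[0,T_{\rm max})$ by propagation of the weighted-$L^2$ bound, using the assumption $\Theta(0)<\infty$ together with a standard multiplier-truncation argument (replace $|\bx|^2$ by $\chi(|\bx|/R)|\bx|^2$ and let $R\to\infty$). The target differential inequality is
\[
\Theta''(t)\le 4E_{2D}(\phi_0)\qquad\hbox{for all }t\in[0,T_{\rm max}).
\]
Once established, two integrations yield $\Theta(t)\le\Theta(0)+\Theta'(0)t+2E_{2D}(\phi_0)t^2$ with $\Theta'(0)=2\,{\rm Im}\int\bar\phi_0\,\bx\cdot\nabla\phi_0\,d\bx$. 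Under each of the conditions (i)--(iii), the right-hand quadratic becomes non-positive at some finite $t^\ast>0$: (i) negative leading coefficient; (ii) negative dominant linear term; (iii) positive discriminant $\Theta'(0)^2>8E_{2D}(\phi_0)\Theta(0)$ together with $\Theta'(0)<0$. Since $\Theta\ge 0$ as long as $\phi(\cdot,t)\in X_2$, the assumption $T_{\rm max}=\infty$ must fail.

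To derive the virial inequality, start from $\Theta'(t)=2\,{\rm Im}\int\bar\phi(\bx\cdot\nabla\phi)\,d\bx$ and differentiate once more using the equation. Standard 2D integration by parts produces the kinetic, trap, and cubic contributions $2\|\nabla\phi\|_2^2-2\int(\bx\cdot\nabla V_2)|\phi|^2\,d\bx+2\beta_0\|\phi\|_4^4$, with $\beta_0=(\beta-\lambda+3n_3^2\lambda)/(\sqrt{2\pi}\,\vep)$, plus a nonlocal remainder $R(\phi)=3\lambda\int|\phi|^2(\bx\cdot\nabla\widetilde{\varphi^{2D}})\,d\bx$ from the term $-\frac{3\lambda}{2}\widetilde{\varphi^{2D}}\phi$. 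Using energy conservation $E_{2D}(\phi(\cdot,t))=E_{2D}(\phi_0)$ to substitute out $2\|\nabla\phi\|_2^2+2\beta_0\|\phi\|_4^4$ yields
\[
\Theta''(t)=4E_{2D}(\phi_0)-2\!\int(2V_2+\bx\cdot\nabla V_2)|\phi|^2\,d\bx+\Big[3\lambda\!\int|\phi|^2\widetilde{\varphi^{2D}}\,d\bx+R(\phi)\Big].
\]
The trap bracket is $\le 0$ by the hypothesis $2V_2+\bx\cdot\nabla V_2\ge 0$. For the nonlocal bracket, pass to Fourier using Lemma \ref{lem1}: $\widehat{\widetilde{\varphi^{2D}}}=-M(\xi)F(\xi)\hat\rho$ with $M(\xi)=(n_1\xi_1+n_2\xi_2)^2-n_3^2|\xi|^2$ and $F(\xi)=\widehat{U^{2D}_\vep}(\xi)$. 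Computing $R(\phi)$ via $\widehat{x_j\partial_j\widetilde{\varphi^{2D}}}=-\partial_{\xi_j}(\xi_j\widehat{\widetilde{\varphi^{2D}}})$, exploiting the homogeneity $\xi\cdot\nabla_\xi M=2M$, and integrating by parts in $\xi$, the bracket collapses after cancellation to
\[
\frac{3\lambda}{4\pi^3}\int_{\Bbb R^3}\frac{s^2\,M(\xi)\,e^{-\vep^2 s^2/2}}{(|\xi|^2+s^2)^2}|\hat\rho(\xi)|^2\,ds\,d\xi.
\]
When $\lambda=0$ this vanishes; when $\lambda>0$ and $n_3^2\ge 1/2$, the Cauchy--Schwarz bound $(n_1\xi_1+n_2\xi_2)^2\le(1-n_3^2)|\xi|^2\le n_3^2|\xi|^2$ gives $M(\xi)\le 0$, so the bracket is $\le 0$, establishing $\Theta''(t)\le 4E_{2D}(\phi_0)$.

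The main obstacle is the Fourier-side analysis of the nonlocal remainder $R(\phi)$. Because the kernel $U^{2D}_\vep$ is not scale-invariant (the Gaussian factor $e^{-\vep^2 s^2/2}$ breaks homogeneity in $F$), the clean Pohozaev identity for a 2D cubic nonlinearity does not carry over, and the scaling derivative of the nonlocal energy picks up an extra term beyond what naive dimensional counting predicts. The key algebraic observation is
\[
F(\xi)+\tfrac12\xi\cdot\nabla_\xi F(\xi)=\frac{1}{\pi}\!\int_{\Bbb R}\frac{s^2\,e^{-\vep^2 s^2/2}}{(|\xi|^2+s^2)^2}\,ds\ge 0,
\]
which is precisely what collapses the two nonlocal terms ($-\frac{3\lambda}{4\pi^2}\int MF|\hat\rho|^2 d\xi$ from energy conservation and the scaled virial piece from $R(\phi)$) into a single sign-definite expression whose sign is dictated entirely by $M(\xi)$, and hence controlled by the hypothesis $n_3^2\ge 1/2$. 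The remaining technical issue---rigorous differentiability of $\Theta(t)$---is standard and handled by the multiplier-truncation procedure mentioned above.
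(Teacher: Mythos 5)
Your proposal is correct and follows essentially the same route as the paper: the same virial computation, the same substitution of the kinetic and cubic terms via energy conservation, and the same Fourier-side identity for the nonlocal remainder, in which your observation $\widehat{U^{2D}_\vep}+\tfrac12\xi\cdot\nabla_\xi\widehat{U^{2D}_\vep}=\frac{1}{\pi}\int_{\Bbb R}\frac{s^2e^{-\vep^2s^2/2}}{(|\xi|^2+s^2)^2}\,ds\ge 0$ is precisely the cancellation the paper exploits, with the sign then controlled by $M(\xi)=(n_1\xi_1+n_2\xi_2)^2-n_3^2|\xi|^2\le 0$ under $\lambda>0$, $n_3^2\ge\tfrac12$. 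The only difference is that you spell out the truncation argument for differentiability of the variance, which the paper leaves implicit.
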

\begin{proof}
Define the variance \be\label{dtv001}
\sg_V(t):=\sg_V(\phi(\cdot,t))=\int_{\Bbb
R^2}|\bx|^2|\phi(\bx,t)|^2\,d\bx=\sg_{x}(t)+\sg_{y}(t),\qquad
t\ge0,\ee where \be
\label{dtap01}\sg_\alpha(t):=\sg_\alpha(\phi(\cdot,t))=\int_{\Bbb
R^2}\alpha^2|\phi(\bx,t)|^2\,d\bx, \qquad \alpha=x,\ y.\ee

For $\alpha=x$, or $y$, differentiating (\ref{dtap01}) with respect
to $t$, integrating by parts, we get \be
\frac{d}{dt}\sg_\alpha(t)=-i\int_{\Bbb
R^2}\left[\alpha\bar{\phi}(\bx,t)\p_{\alpha}\phi(\bx,t)-
\alpha\phi(\bx,t)\p_{\alpha}\bar{\phi}(\bx,t)\right]\,d\bx, \qquad
t\ge0.\ee Similarly, we have \be\label{d2ap22}
\frac{d^2}{dt^2}\sg_\alpha(t)=\int_{\Bbb
R^2}\left[2|\p_{\alpha}\phi|^2+\beta_0
|\phi|^4+3\lambda|\phi|^2\alpha\p_{\alpha}(\p_{\bn_\perp\bn_\perp}-n_3^2\Delta)
\varphi-2\alpha|\phi|^2\p_{\alpha}V_2(\bx)\right]\,d\bx,\ee where
$\beta_0=\frac{\beta-\lambda+3\lambda n_3^2}{\sqrt{2\pi}\,\vep}$,
$\varphi=U^{2D}_\vep *|\phi|^2$. Writing $\rho=|\phi|^2$,
$\tilde{\varphi}=(\p_{\bn_\perp\bn_\perp}-n_3^2\Delta) \varphi$,
$n_\xi(\xi)=(n_1\xi_1+n_2\xi_2)^2-n_3^2|\xi|^2$ and noticing that $\rho$
is a real function,  by the Plancherel formula,  we have \beas
\int_{\Bbb R^2}|\phi|^2\left(\bx\cdot\nabla
\tilde{\varphi}\right)\,d\bx&=&\frac{-1}{4\pi^2}\int_{\Bbb R^2}\hat{\rho}(\xi)\,
\nabla\cdot\left(\xi\overline{\hat{\tilde{\varphi}}}\right)\,d\xi
=\frac{1}{4\pi^2}\int_{\Bbb R^2}\hat{\rho}(\xi)\,
\nabla\cdot\left(\xi n_\xi\widehat{U^{2D}_\vep}\overline{\hat{\rho}}\right)\,d\xi\\
&=&\frac{1}{4\pi^2}\int_{\Bbb R^2}\hat{\rho}\,\left(
\overline{\hat{\rho}}\nabla(\xi n_\xi\widehat{U^{2D}_\vep})+n_\xi\widehat{U^{2D}_\vep}
\xi\cdot \nabla\overline{\hat{\rho}} \right)\,d\xi\\
&=&\frac{1}{4\pi^2}\int_{\Bbb R^2}\left( |\hat{\rho}|^2\nabla(\xi n_\xi\widehat{U^{2D}_\vep})
+n_\xi\widehat{U^{2D}_\vep}\xi\cdot \frac12\nabla|\hat{\rho}|^2 \right)\,d\xi\\
&=&\frac{1}{4\pi^2}\int_{\Bbb R^2} (n_\xi\widehat{U_\vep^{2D}}+\frac12\xi\cdot
\nabla(n_\xi\widehat{U^{2D}_\vep})\,)|\hat{\rho}|^2\,d\xi\\
&=&-\int_{\Bbb R^2}|\phi|^2
\tilde{\varphi}\,d\bx+\frac{1}{4\pi^3}\int_{\Bbb R^3}\frac{n_\xi s^2
e^{-\vep^2s^2/2}|\hat{\rho}|^2}{(|\xi|^2+s^2)^2}\,dsd\xi. \eeas
Denote \be I(t):=I(\phi(\cdot,t))=\frac{1}{4\pi^3}\int_{\Bbb
R^3}\frac{n_ \xi s^2
e^{-\vep^2s^2/2}|\hat{\rho}|^2}{(|\xi|^2+s^2)^2}\,dsd\xi,\qquad
t\ge0,\ee using $n_\xi\in[-n_3^2|\xi|^2,(1-2n_3^2)|\xi|^2]$, we
obtain \be
\frac{-\sqrt{2}n_3^2}{\sqrt{\pi}\,\vep}\|\phi(t)\|_4^4\leq I(t)\leq
\frac{\sqrt{2}(1-2n_3^2)}{\sqrt{\pi}\,\vep}\|\phi(t)\|_4^4, \qquad
t\ge0. \ee
 If $\lambda=0$, or $\lambda>0$ and $n_3\ge\frac12$, noticing $\lambda I(t)\leq0$
 in these cases, summing (\ref{d2ap22}) for $\alpha=x$, $y$, and using the energy conservation, we have
\beas
\frac{d^2}{dt^2}\sg_V(t)&=&2\int_{\Bbb
R^2}\left[|\nabla\phi|^2+\beta_0
|\phi|^4+\frac32\lambda|\phi|^2\left(\bx\cdot\nabla
\tilde{\varphi}\right)-|\phi|^2\bx\cdot\nabla V_2(\bx)\right]\,d\bx\\
&=&4E_{2D}(\phi(\cdot,t))+3\lambda I(t)-2\int_{\Bbb R^2}|\phi|^2(2V_2(\bx)+\bx\cdot\nabla V_2(\bx))\,d\bx\\
&\leq&4E_{2D}(\phi(\cdot,t))\equiv 4E_{2D}(\phi_0), \qquad t\ge0.
\eeas Thus,
$$
\sg_V(t)\leq 2E_{2D}(\phi_0)t^2+\sg_V^\prime(0)t+\sg_V(0), \qquad t\ge0,
$$
and the conclusion follows in the same manner as those in
\cite{Sulem,Cazen} for the standard nonlinear Schr\"{o}dinger
equation.
\end{proof}

\section{Results for the quasi-2D equation II}
\setcounter{equation}{0}

In this section, we investigate the existence, uniqueness as well
 as nonexistence of ground state of the quasi-2D equation II (\ref{gpe2d2})
 and the well-posedness of the corresponding Cauchy problem.

\subsection{Existence and uniqueness of ground state}

Associated to the quasi-2D equation II (\ref{gpe2d2}), the energy is
\be\label{ener2d2} \tilde{E}_{2D}(\Phi)=\int_{\Bbb
R^2}\left[\frac12|\nabla\Phi|^2+
V_2(\bx)|\Phi|^2+\frac{\beta-\lambda+3n_3^2\lambda}{2\sqrt{2\pi}\,\vep}|\Phi|^4
-\frac{3\lambda}{4}|\Phi|^2 \varphi\right]\,d\bx, \qquad \Phi\in
X_2,\ee
 where
\be\label{vphu2d22}
\varphi(\bx)=\left(\p_{\bn_\perp\bn_\perp}-n_3^2\Delta\right)((-\Delta)^{-1/2}|\Phi|^2).
\ee The ground state $\Phi_g\in S_2$ of the equation (\ref{gpe2d2})
is defined as the minimizer of the nonconvex minimization problem:
\be \mbox{Find } \Phi_g\in S_2,\quad\mbox{such that
}\tilde{E}_{2D}(\Phi_g)=\min\limits_{\Phi\in
S_2}\tilde{E}_{2D}(\Phi). \ee

For the above ground state, we have  the following results.

\begin{theorem}\label{thm1'}(Existence and uniqueness of
ground state) Assume $0\leq V_2(\bx)\in L_{loc}^\infty(\Bbb R^2)$
 and
$\lim\limits_{|\bx|\to\infty}V_2(\bx)=\infty$, then we have

(i) There exists a ground state $\Phi_g\in S_2$ of the equation
(\ref{gpe2d2}) if one of the following conditions holds
\begin{quote}
(B1) $\lambda=0$ and $\beta> -\sqrt{2\pi}C_b\,\vep$;\\
(B2)  $\lambda>0$, $n_3=0$ and $\beta-\lambda> - \sqrt{2\pi}C_b\,\vep$;\\
(B3)  $\lambda<0$, $n_3^2\ge\frac12$ and
$\beta-(1-3n_3^2)\lambda>-\sqrt{2\pi} C_b\,\vep$.
\end{quote}

(ii)  The positive ground state $|\Phi_g|$ is unique under one of
the following conditions
\begin{quote}
(B1$^\prime$) $\lambda=0$ and $\beta\ge 0$;\\
(B2$^\prime$) $\lambda>0$, $n_3=0$ and $\beta\ge\lambda$;\\
(B3$^\prime$) $\lambda<0$, $n_3^2\ge\frac12$ and
$\beta-(1-3n_3^2)\lambda\ge0$.
\end{quote}
Moreover, any ground state $\Phi_g=e^{i\theta_0}|\Phi_g|$ for some
constant $\theta_0\in\Bbb R$.

(iii) There exists no ground state of the equation (\ref{gpe2d2}) if
one of the following conditions holds
\begin{quote}
(B1$^{\prime\prime}$) $\lambda>0$ and $n_3\neq0$;\\
(B2$^{\prime\prime}$) $\lambda<0$ and  $n_3^2<\frac12$;\\
(B3$^{\prime\prime}$) $\lambda=0$ and $\beta<- \sqrt{2\pi}C_b\,\vep$.
\end{quote}
\end{theorem}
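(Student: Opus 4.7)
The proof mirrors the structure of Theorem~\ref{thm1}: diamagnetic reduction plus energy minimization for existence, strict convexity of $\rho\mapsto\tilde E_{2D}(\sqrt\rho)$ for uniqueness, and a scaling test-function argument for nonexistence. The crucial difference from Section~2 is that the smooth kernel $U_\vep^{2D}$ is replaced by the operator $(-\Delta)^{-1/2}$, whose Fourier symbol is unbounded. The first step is therefore to compute, by Plancherel,
\begin{equation*}
\int_{{\Bbb R}^2}\varphi\,|\Phi|^2\,d\bx
=\frac{1}{4\pi^2}\int_{{\Bbb R}^2}
\frac{n_3^2|\xi|^2-(n_1\xi_1+n_2\xi_2)^2}{|\xi|}\,\bigl|\widehat{|\Phi|^2}(\xi)\bigr|^2\,d\xi,
\end{equation*}
whose multiplier lies pointwise in $[-(1-n_3^2)|\xi|,\,n_3^2|\xi|]$. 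All three parts of the theorem are read off the sign of this multiplier together with that of the coefficient $(\beta-\lambda+3n_3^2\lambda)/(2\sqrt{2\pi}\,\vep)$ of the quartic term.

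For existence (part (i)), the diamagnetic inequality $|\nabla|\Phi||\le|\nabla\Phi|$ reduces the problem to nonnegative $\Phi$. Under each of B1, B2, B3 one checks that the multiplier $-\tfrac{3\lambda}{4}(n_3^2|\xi|^2-(n_1\xi_1+n_2\xi_2)^2)/|\xi|$ is pointwise nonnegative in $\xi$, so the nonlocal energy contribution is $\ge 0$; combining this with Gagliardo--Nirenberg $\|\Phi\|_4^4\le C_b^{-1}\|\nabla\Phi\|_2^2$ on $S_2$ and the corresponding threshold hypothesis on $\beta-\lambda+3n_3^2\lambda$ yields coercivity of the form $\tilde E_{2D}(\Phi)\ge c_0\bigl(\|\nabla\Phi\|_2^2+\int_{{\Bbb R}^2}V_2|\Phi|^2\,d\bx\bigr)$. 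A minimizing sequence $\{\Phi^n\}\subset S_2$ is then bounded in $X_2$; weak compactness and the confining assumption $V_2(\bx)\to\infty$ deliver a limit $\Phi^\infty\in S_2$ with $|\Phi^n|^2\to|\Phi^\infty|^2$ in $L^2({\Bbb R}^2)$. Lower semicontinuity of the kinetic and potential parts is classical; for the nonlocal term, the Fourier representation above exhibits it as a nonnegative convex quadratic form in $\rho=|\Phi|^2$ on the space of uniformly $H^{1/2}$-bounded densities, hence weakly lower semicontinuous, which closes the argument.

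For uniqueness (part (ii)), following Lemma~\ref{lem2}(iii), I split $\tilde E_{2D}(\sqrt\rho)=E_1(\sqrt\rho)+\tilde H(\rho)$ with
\begin{equation*}
\tilde H(\rho)=\int_{{\Bbb R}^2}\frac{\beta-\lambda+3n_3^2\lambda}{2\sqrt{2\pi}\,\vep}\,\rho^2\,d\bx
-\frac{3\lambda}{4}\int_{{\Bbb R}^2}\rho\,(\p_{\bn_\perp\bn_\perp}-n_3^2\Delta)(-\Delta)^{-1/2}\rho\,d\bx.
\end{equation*}
Since $\tilde H$ is a quadratic form in $\rho$, its convexity reduces to $\tilde H(r)\ge 0$ for every real $r\in L^1\cap L^2$. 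Under each of B1$'$, B2$'$, B3$'$ both the coefficient of $\rho^2$ and the Fourier symbol $-\tfrac{3\lambda}{4}(n_3^2|\xi|^2-(n_1\xi_1+n_2\xi_2)^2)/|\xi|$ are pointwise nonnegative, so indeed $\tilde H(r)\ge 0$. Combined with Lieb's strict convexity of $E_1(\sqrt\rho)$ in $\rho$, this yields strict convexity of $\tilde E_{2D}(\sqrt\rho)$ and hence uniqueness of $|\Phi_g|$; the phase rigidity $\Phi_g=e^{i\theta_0}|\Phi_g|$ comes from the equality case of the diamagnetic inequality exactly as in Lemma~\ref{lem2}(i).

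For nonexistence (part (iii)), I argue by the critical scaling $\Phi_\delta(\bx)=\delta^{-1}\Phi(\delta^{-1}\bx)\in S_2$, under which $\|\nabla\Phi_\delta\|_2^2=\delta^{-2}\|\nabla\Phi\|_2^2$, $\|\Phi_\delta\|_4^4=\delta^{-2}\|\Phi\|_4^4$, while a change of variables in the Fourier representation gives
\begin{equation*}
\int_{{\Bbb R}^2}\varphi_\delta|\Phi_\delta|^2\,d\bx=\delta^{-3}\int_{{\Bbb R}^2}\varphi|\Phi|^2\,d\bx,
\end{equation*}
so the nonlocal term has the strongest scaling as $\delta\to 0^+$. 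In case B3$''$ the standard Weinstein optimizer of $C_b$ reduces everything to the classical cubic Schr\"odinger case, giving $\tilde E_{2D}(\Phi_\delta)\to-\infty$. In case B1$''$ ($\lambda>0$, $n_3\ne 0$) I choose $\Phi$ so that $\widehat{|\Phi|^2}$ concentrates in the (nonempty, symmetric) conical region $\{\,(n_1\xi_1+n_2\xi_2)^2\le\tfrac12 n_3^2|\xi|^2\,\}$, forcing $\int\varphi|\Phi|^2>0$ and $-\tfrac{3\lambda}{4}\delta^{-3}\int\varphi|\Phi|^2\to-\infty$; in case B2$''$ ($\lambda<0$, $n_3^2<\tfrac12$) I concentrate $\widehat{|\Phi|^2}$ near the direction parallel to $(n_1,n_2)/\sqrt{n_1^2+n_2^2}$ instead, making $\int\varphi|\Phi|^2<0$ and driving $\tilde E_{2D}(\Phi_\delta)\to-\infty$ again. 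The main technical obstacle is constructing a nonnegative $|\Phi|^2$ with prescribed Fourier concentration along a preferred direction; I would resolve this by taking $\Phi$ so that $|\Phi|^2$ is an anisotropic Gaussian, stretched in the appropriate spatial direction so that $\widehat{|\Phi|^2}$ is compressed into the desired conical region in Fourier space.
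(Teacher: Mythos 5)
Your proposal is correct and, for parts (i) and (ii), essentially reproduces the paper's argument: diamagnetic reduction, the Plancherel representation of the nonlocal form, coercivity from Gagliardo--Nirenberg together with the sign of the Fourier multiplier, and convexity of the quadratic form $\tilde H$ via the polarization identity $\theta\tilde H(\rho_1)+(1-\theta)\tilde H(\rho_2)-\tilde H(\rho_\theta)=\theta(1-\theta)\tilde H(\rho_1-\rho_2)$. You genuinely diverge in two places. First, in passing to the limit of the nonlocal term along the minimizing sequence: the paper proves actual convergence, using $\rho^n\to\rho^\infty$ in $L^p$ and the mapping properties of $(-\Delta)^{-1/2}$ (Lemma~\ref{lem3}) to get $\varphi^n\to\varphi^\infty$ in $W^{-1,p^\prime}$; you instead note that under (B1)--(B3) the nonlocal term is a \emph{nonnegative} quadratic form with a pointwise nonnegative symbol, so lower semicontinuity follows on the Fourier side (e.g.\ by Fatou, since $\hat\rho^n\to\hat\rho^\infty$ pointwise from $L^1$-convergence of $\rho^n$), and lower semicontinuity is all that is needed. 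This is more economical but exploits the sign condition, whereas the paper's duality argument is sign-independent. Second, and more substantially, for nonexistence the paper uses a single two-parameter anisotropic dilation $\Phi_{\vep_1,\vep_2}$ aligned with $(n_1,n_2)$, tuning the aspect ratio $\kappa=\vep_2/\vep_1$ to fix the sign of the $O(\vep_1^{-3})$ nonlocal term before sending $\vep_1\to0^+$; you decouple the two roles, first choosing a fixed anisotropic profile whose $\widehat{|\Phi|^2}$ concentrates in the (scale-invariant) cone where the symbol has the desired sign, so that $\int\varphi|\Phi|^2$ has a definite sign, and then applying the isotropic dilation $\Phi_\delta$, under which the nonlocal term scales as $\delta^{-3}$ and dominates the $\delta^{-2}$ kinetic and quartic terms. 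Both arguments are valid and rest on the same scaling heuristic; yours is somewhat more modular. One small slip: the pointwise range you quote for the multiplier, $[-(1-n_3^2)|\xi|,\,n_3^2|\xi|]$, is not sharp --- Cauchy--Schwarz gives the lower bound $(2n_3^2-1)|\xi|$, and it is this sharp bound that yields nonnegativity under (B3) and (B3$^\prime$); your subsequent assertion of nonnegativity is correct but uses the sharp bound rather than the range you stated.
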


Again, in order to prove this theorem, we first analyze the nonlocal
part in the equation (\ref{gpe2d2}). In fact, following the standard
proof in \cite{EMStein}, we can get

\begin{lemma}\label{lem3}(Property of fractional Poisson
equation (\ref{u2d2})) Assume $f(\bx)$ is a real valued function
good enough, for the fractional Poisson equation
$$
(-\Delta)^{-1/2}\varphi(\bx)=f(\bx),\quad \bx\in\Bbb R^2, \qquad
\lim\limits_{|\bx|\to\infty}\varphi(\bx)=0,
$$
we have
$$
\varphi(\bx)=\int_{\Bbb
R^2}\frac{f(\bx^\prime)}{2\pi|\bx-\bx^\prime|}d\bx^\prime=\left(\frac{1}{2\pi|\bx|}\right)*f,
\qquad \bx\in {\Bbb R}^2,
$$
and the Hardy-Littlewood-Sobolev inequality implies
 \be
 \|\varphi\|_{p^\ast}\leq C_p\|f\|_p,\qquad p^\ast=\frac{2p}{2-p},\qquad p\in(1,2).
 \ee
Moreover,  the first order derivatives of $\varphi$ are the Riesz
transforms of $f$ and satisfy \be \|\p_{\alpha}\varphi\|_q\leq
C_q\|f\|_q,\qquad q\in(1,\infty), \quad \alpha=x,\ y, \ee and the second
order derivatives satisfy \be
\|\p_{\alpha \alpha^\prime}\varphi\|_q=\|\p_{\alpha}\left((-\Delta)^{-1/2}\p_{\alpha^\prime}f\right)\|_q\leq
C_q\|\p_{\alpha^\prime}f\|_q, \qquad q\in(1,\infty), \quad \alpha,\alpha^\prime=x,y. \ee
\end{lemma}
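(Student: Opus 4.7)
The plan is to recognize the lemma as a repackaging of classical Riesz-potential and Riesz-transform estimates from harmonic analysis, as suggested by the paper's reference to Stein's \cite{EMStein}. A preliminary observation: for the claimed representation $\varphi = (2\pi|\bx|)^{-1} * f$ and for consistency with~(\ref{u2d2k}), the PDE in the statement must be read as $(-\Delta)^{1/2}\varphi = f$ (the exponent appears to contain a sign typo), so that $\varphi = (-\Delta)^{-1/2} f =: I_1 f$ is the Riesz potential of order one in dimension two.

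I would begin by taking the Fourier transform to obtain $\hat\varphi(\xi) = |\xi|^{-1}\hat f(\xi)$, which identifies $\varphi$ as convolution with the tempered distribution whose symbol is $|\xi|^{-1}$. In dimension $d=2$ this distribution equals $(2\pi|\bx|)^{-1}$ (a classical computation via Gaussians or Bochner's formula pins down the constant), which establishes the integral representation. The estimate $\|\varphi\|_{p^\ast}\le C_p\|f\|_p$ with $1/p^\ast = 1/p - 1/2$ and $p\in(1,2)$ is then exactly the Hardy--Littlewood--Sobolev inequality applied to $I_1$ in $\mathbb{R}^2$, so no work is required beyond citing it.

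For the first-derivative bound I would observe on the Fourier side that $\widehat{\partial_\alpha \varphi}(\xi) = (i\xi_\alpha/|\xi|)\hat f(\xi)$, so that $\partial_\alpha \varphi$ equals (up to sign) the Riesz transform $R_\alpha f$. The Riesz transforms are Calder\'on--Zygmund singular integral operators with Fourier multiplier of modulus one, and hence extend as bounded operators on $L^q(\mathbb{R}^2)$ for every $q\in(1,\infty)$. For the second-derivative bound I would commute the operators on the Fourier side to write
\[
\partial_\alpha\partial_{\alpha'}\varphi \;=\; \partial_\alpha\bigl((-\Delta)^{-1/2}\partial_{\alpha'} f\bigr) \;=\; R_\alpha(\partial_{\alpha'} f),
\]
and then apply the $L^q$-boundedness of $R_\alpha$ to $\partial_{\alpha'} f$ to conclude $\|\partial_{\alpha\alpha'}\varphi\|_q \le C_q\|\partial_{\alpha'} f\|_q$.

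The closest thing to an obstacle is purely technical: all the Fourier-side manipulations and differentiations should first be justified on a dense subclass of test functions (e.g.\ the Schwartz class, where $f$ is smooth and rapidly decaying and every pointwise computation is legitimate), and then extended to the stated $L^p$ and $W^{1,q}$ settings by density using the operator-norm estimates just established. The essential mathematical content is entirely classical, and the lemma requires no new ideas beyond careful bookkeeping of which Calder\'on--Zygmund and HLS estimate to invoke at each stage.
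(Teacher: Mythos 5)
Your proposal is correct and coincides with what the paper intends: the paper gives no proof of Lemma \ref{lem3} beyond citing the standard arguments in \cite{EMStein}, and your route (identify $\varphi=(-\Delta)^{-1/2}f$ as the Riesz potential $I_1$ with kernel $1/(2\pi|\bx|)$ in $\Bbb R^2$, invoke Hardy--Littlewood--Sobolev for the $L^p\to L^{p^\ast}$ bound, and recognize $\p_\alpha\varphi$ and $\p_{\alpha\alpha'}\varphi$ as Riesz transforms of $f$ and of $\p_{\alpha'}f$ respectively, bounded on $L^q$ by Calder\'on--Zygmund theory) is precisely that standard proof. Your observation that the displayed equation should read $(-\Delta)^{1/2}\varphi=f$ for consistency with (\ref{u2d2k}) is also correct.
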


 \begin{remark}\label{rmk} Similar results hold for $T_{\alpha\alpha^\prime}$ defined in Lemma
 \ref{lem1}, i.e.
\be \|T_{\alpha\alpha^\prime}f\|_p\leq C_p \|\nabla f\|_p,\qquad \hbox{for}\  p\in
(1,\infty). \ee
\end{remark}

Since the fractional Poisson operator $(-\Delta)^{-1/2}$ is taken as
an approximation of $U^{2D}_\vep$ (\ref{u2d1}), we consider the
convergence regarding with the derivatives.

\begin{lemma}\label{converge:2D} For any real-valued function $f\in L^p(\Bbb R^2)$, let
\be T^\vep_\alpha(f)=\p_{\alpha}(U^{2D}_\vep*f),\qquad
R_\alpha(f)=\p_{\alpha}(-\Delta)^{-1/2}f, \qquad \alpha=x,y, \ee then  $T^\vep_\alpha$
is bounded from $L^p$ to $L^p$ for $1<p<\infty$ with the bounds
independent of $\vep$. Specially, for any fixed $f\in L^p(\Bbb R^2)$
with $p\in(1,\infty)$, we have \be \lim\limits_{\vep\to
0^+}\|T^\vep_\alpha(f)-R_\alpha(f)\|_p=0,\qquad p\in(1,\infty). \ee
\end{lemma}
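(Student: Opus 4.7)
The plan is to factor $T^\vep_\alpha$ through the standard Riesz transform and an averaged Poisson semigroup. Starting from the definition of $U^{2D}_\vep$ and substituting $u=\vep s$, we rewrite
\[
U^{2D}_\vep(\bx)=\frac{1}{\sqrt{2\pi}\,\vep}\int_{\Bbb R}\frac{e^{-u^2/(2\vep^2)}}{2\pi\sqrt{|\bx|^2+u^2}}\,du.
\]
Combining this with the classical identity $\widehat{K_u}(\xi)=e^{-|u||\xi|}/|\xi|$ for $K_u(\bx)=\frac{1}{2\pi\sqrt{|\bx|^2+u^2}}$ (which follows from $K_u=\int_{|u|}^{\infty}P_v\,dv$, where $P_v$ denotes the 2D Poisson kernel with symbol $e^{-v|\xi|}$) yields the equivalent form of the Fourier representation of Lemma \ref{lem1},
\[
\widehat{U^{2D}_\vep}(\xi)=\frac{g_\vep(\xi)}{|\xi|},\qquad g_\vep(\xi):=\frac{1}{\sqrt{2\pi}\,\vep}\int_{\Bbb R}e^{-u^2/(2\vep^2)}\,e^{-|u||\xi|}\,du.
\]
The symbol of $T^\vep_\alpha=\p_\alpha(U^{2D}_\vep *\cdot)$ is then $\frac{i\xi_\alpha}{|\xi|}g_\vep(\xi)$, so on the operator level $T^\vep_\alpha=R_\alpha\circ M_\vep$, where $R_\alpha=\p_\alpha(-\Delta)^{-1/2}$ is the standard Riesz transform and
\[
M_\vep f=\frac{1}{\sqrt{2\pi}\,\vep}\int_{\Bbb R}e^{-u^2/(2\vep^2)}\,e^{-|u|\sqrt{-\Delta}}f\,du
\]
is the Gaussian average of the 2D Poisson semigroup $\{e^{-u\sqrt{-\Delta}}\}_{u\ge0}$.

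For uniform $L^p$ boundedness I would invoke two classical results. First, $R_\alpha$ is bounded on $L^p(\Bbb R^2)$ for every $p\in(1,\infty)$ by Calder\'on-Zygmund theory (also stated in Lemma \ref{lem3}), with operator norm depending only on $p$. Second, $e^{-|u|\sqrt{-\Delta}}$ is an $L^p$-contraction for every $u\in\Bbb R$ and every $p\in[1,\infty]$, so Minkowski's integral inequality yields
\[
\|M_\vep f\|_p\le\frac{1}{\sqrt{2\pi}\,\vep}\int_{\Bbb R}e^{-u^2/(2\vep^2)}\|f\|_p\,du=\|f\|_p.
\]
Composition then gives $\|T^\vep_\alpha f\|_p\le C_p\|f\|_p$ with $C_p$ independent of $\vep$.

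For the convergence statement, write $T^\vep_\alpha f-R_\alpha f=R_\alpha(M_\vep f-f)$ and, by the boundedness of $R_\alpha$, reduce the problem to showing $M_\vep f\to f$ in $L^p$. Since the Poisson semigroup is strongly continuous on $L^p(\Bbb R^2)$ for $p\in(1,\infty)$, the function $h(u):=\|e^{-|u|\sqrt{-\Delta}}f-f\|_p$ is continuous at $u=0$ with $h(0)=0$ and is uniformly bounded by $2\|f\|_p$. As $\vep\to0^+$, the Gaussian density $(2\pi\vep^2)^{-1/2}e^{-u^2/(2\vep^2)}$ forms an approximate identity concentrating at $u=0$, so by Minkowski's integral inequality and dominated convergence, $\|M_\vep f-f\|_p\le\int_{\Bbb R}(2\pi\vep^2)^{-1/2}e^{-u^2/(2\vep^2)}h(u)\,du\to h(0)=0$.

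The only non-mechanical step is the recognition of $g_\vep$ as the symbol of a Gaussian-averaged Poisson semigroup; once this identification is in place, each remaining ingredient is a standard harmonic-analytic fact. An alternative that avoids semigroup language is a direct Mihlin-H\"ormander verification on the multiplier $\frac{i\xi_\alpha}{|\xi|}g_\vep(\xi)$ with constants uniform in $\vep$, but that route is more computational and yields no sharper conclusion.
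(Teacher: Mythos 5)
Your proof is correct, but it takes a genuinely different route from the paper's. The paper treats $T^\vep_\alpha$ directly as a singular integral operator with kernel $K^\vep_\alpha(\bx)=\frac{1}{2\sqrt 2\pi^{3/2}}\int_{\Bbb R}\alpha e^{-s^2/2}(|\bx|^2+\vep^2s^2)^{-3/2}ds$: it verifies the Calder\'on--Zygmund size, gradient and cancellation conditions with constants independent of $\vep$, which gives the uniform $L^p$ bounds, and then proves the convergence $T^\vep_\alpha f\to R_\alpha f$ first in $L^2$ by a Plancherel computation on the multipliers together with dominated convergence, extending to general $p\in(1,\infty)$ by density and interpolation against the uniform bounds. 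You instead exploit the structure of the symbol, writing $\widehat{U^{2D}_\vep}(\xi)=g_\vep(\xi)/|\xi|$ with $g_\vep$ a Gaussian average of $e^{-|u||\xi|}$, so that $T^\vep_\alpha=R_\alpha\circ M_\vep$ with $M_\vep$ an average of Poisson semigroup operators; your identification of the kernel $\frac{1}{2\pi}(|\bx|^2+u^2)^{-1/2}$ as $\int_{|u|}^\infty P_v\,dv$ checks out and is consistent with the Fourier formula of Lemma \ref{lem1}. This buys you two things: the only Calder\'on--Zygmund input is the single, $\vep$-independent Riesz transform (all the $\vep$-dependence is absorbed into the $L^p$-contraction $M_\vep$), and the convergence follows in every $L^p$, $1<p<\infty$, in one step from strong continuity of the Poisson semigroup, with no detour through $L^2$ and no interpolation. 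The paper's route is more self-contained in that it never needs to recognize the semigroup structure and relies only on the standard singular-integral theorem plus Plancherel. Two details worth making explicit in your write-up, neither of which is a gap: the identity $T^\vep_\alpha=R_\alpha\circ M_\vep$ should be verified on Schwartz functions and then extended to $L^p$ by density using the boundedness of both sides (the pointwise convolution $U^{2D}_\vep*f$ is not obviously defined for a general $f\in L^p$); and the final limit is cleanest after the substitution $u=\vep v$, which turns your bound into $\|M_\vep f-f\|_p\le\int_{\Bbb R}\frac{1}{\sqrt{2\pi}}e^{-v^2/2}h(\vep v)\,dv$ with $h(u)=\|e^{-|u|\sqrt{-\Delta}}f-f\|_p$, to which dominated convergence applies directly.
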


\begin{proof} We can write $R_\alpha$ and $T^\vep_\alpha$ as \be
R_\alpha(f)=K_\alpha*f,\qquad T_\alpha^\vep(f)=K_\alpha^\vep*f, \ee where $R_\alpha$ is the
Riesz transform and \be K_\alpha(\bx)=\frac{\alpha}{2\pi|\bx|^3},\quad
K_\alpha^\vep(\bx)=\frac{1}{2\sqrt{2}\pi^{3/2}} \int_{\Bbb
R}\frac{\alpha e^{-s^2/2}}{(|\bx|^2+ \vep^2s^2)^{3/2}}\,ds,\quad
\bx\in{\Bbb R}^2, \quad \alpha=x,y. \ee It is easy to check that
$K_\alpha^\vep$  satisfies
\begin{eqnarray*}
&&|K_\alpha^\vep(\bx)|\leq B|\bx|^{-2}, \qquad |\nabla K_\alpha^\vep(\bx)|\leq B|\bx|^{-3},\qquad |\bx|>0,\\
&&\int_{R_1<|\bx|<R_2}K_j^\vep(\bx)d\bx=0,\qquad 0<R_1<R_2<\infty,
\end{eqnarray*}
for some  $\vep$-independent constant $B$. Then the standard theorem
on singular integrals \cite{EMStein} implies that $T_\alpha^\vep$ is well
defined for $L^p$  functions  and  is bounded from $L^p$ to $L^p$
with $\vep$-independent bound.

Thus, we only need to prove the convergence in $L^2$, other cases
can be derived by an approximation argument and interpolation. For
the  $L^2$ convergence, looking at the Fourier domain, we find that
\beas
\|T_\alpha^\vep(f)-R_\alpha(f)\|_2^2&=&\frac{1}{4\pi^2}\int_{\Bbb R^2}\left|
\hat{f}(\xi)\right|^2\left[\frac{\alpha}{|\xi|}-\frac{\alpha}{\pi}
\int_{\Bbb R}\frac{e^{-\vep^2s^2/2}}{|\xi|^2+s^2}ds\right]^2d\xi\\
&\leq&\frac{1}{4\pi^2}\int_{\Bbb
R^2}\left|\hat{f}(\xi)\right|^2\left[\frac{1}{\pi} \int_{\Bbb
R}\frac{(1-e^{-\vep^2s^2/2)}|\xi|}{|\xi|^2+s^2}ds\right]^2d\xi.
\eeas Notice that for fixed $0\neq \xi\in {\Bbb R}^2$, the dominated convergence
theorem suggests that \be
\lim\limits_{\vep\to0^+}\frac{1}{\pi}\left|\int_{\Bbb
R}\frac{(1-e^{-\vep^2s^2/2)}|\xi|}{|\xi|^2+s^2}ds\right|=0,\ee
 hence, the conclusion in $L^2$ case is obvious by using the
 dominated convergence theorem  again. Using approximation and noticing that
 $L^2\cap L^q$ is dense in $L^p$  for $q\in(1,\infty)$, noticing  the
 uniform bound on $T^\vep_\alpha:L^p\to L^p$ for $p\in(1,\infty)$,
 we can complete the proof. \end{proof}

\begin{lemma}\label{lem4} For the energy  $\tilde{E}_{2D}(\cdot)$ in (\ref{ener2d2}),
the following properties hold

(i) For any $\Phi\in S_2$, denote $\rho(\bx)=|\Phi(\bx)|^2$, then we have \be \tilde{E}_{2D}(\Phi)\geq
\tilde{E}_{2D}(|\Phi|)=\tilde{E}_{2D}\left(\sqrt{\rho}\right), \qquad \forall \Phi\in S_2,\ee so
the ground state $\Phi_g$ of (\ref{ener2d2}) is of the form
$e^{i\theta_0}|\Phi_g|$ for some constant $\theta_0\in \Bbb R$.

(ii) If condition (B1) or (B2) or (B3) in  Theorem \ref{thm1'}
holds, then $\tilde{E}_{2D}$ is bounded below.

(iii) If condition (B1$^\prime$) or (B2$^\prime$) or (B3$^\prime$)
in Theorem \ref{thm1'} holds,  then $\tilde{E}_{2D}(\sqrt{\rho})$ is
strictly convex.
\end{lemma}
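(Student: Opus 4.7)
The plan is to mirror the structure of Lemma \ref{lem2}, with the regularized kernel $U_\vep^{2D}$ replaced by the half-inverse Laplacian $(-\Delta)^{-1/2}$. Part (i) is essentially identical to Lemma \ref{lem2}(i): the quartic and nonlocal contributions to $\tilde{E}_{2D}$ depend only on $\rho=|\Phi|^2$, while $\|\nabla|\Phi|\,\|_2\le\|\nabla\Phi\|_2$ with equality iff $\Phi=e^{i\theta_0}|\Phi|$ for some $\theta_0\in\Bbb R$. Hence $\tilde{E}_{2D}(\Phi)\ge\tilde{E}_{2D}(|\Phi|)$ and the phase of any minimizer is pinned down.

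For (ii) I would split $\tilde{E}_{2D}(\sqrt{\rho})=E_1(\sqrt\rho)+\tilde{E}_2(\sqrt\rho)$ exactly as in (\ref{enr2dsplt}), and compute the nonlocal integral on the Fourier side. Since $(\p_{\bn_\perp\bn_\perp}-n_3^2\Delta)(-\Delta)^{-1/2}$ has symbol
\be
m(\xi)=\frac{n_3^2|\xi|^2-(n_1\xi_1+n_2\xi_2)^2}{|\xi|},
\ee
Plancherel yields $\int_{\Bbb R^2}\rho\,\varphi\,d\bx=\frac{1}{4\pi^2}\int_{\Bbb R^2}m(\xi)\,|\hat\rho(\xi)|^2\,d\xi$. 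The key observation is that $-\lambda\,m(\xi)\ge 0$ pointwise in $\xi$ under each hypothesis: (B1) is trivial since $\lambda=0$; under (B2), $n_3=0$ forces $m\le 0$ and $\lambda>0$ flips the sign; under (B3), the elementary bound $(n_1\xi_1+n_2\xi_2)^2\le (1-n_3^2)|\xi|^2$ together with $n_3^2\ge\tfrac12$ gives $m\ge(2n_3^2-1)|\xi|\ge 0$, which is flipped by $\lambda<0$. Discarding this nonnegative contribution leaves
\be
\tilde{E}_{2D}(\Phi)\ge \frac12\|\nabla\Phi\|_2^2+\frac{c}{2\sqrt{2\pi}\,\vep}\|\Phi\|_4^4,\qquad c:=\beta-\lambda+3n_3^2\lambda,
\ee
and in each of (B1)--(B3) the hypothesis forces $c>-\sqrt{2\pi}\,C_b\,\vep$, so Gagliardo-Nirenberg with $\|\Phi\|_2=1$ yields $\tilde{E}_{2D}(\Phi)\ge 0$.

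For (iii), $E_1(\sqrt\rho)$ is strictly convex in $\rho$ by the Lieb result used in Lemma \ref{lem2}(iii), so it suffices to show that $\tilde H(\rho):=\tilde{E}_2(\sqrt\rho)$ is convex. Since $\tilde H$ is quadratic in $\rho$, the identity $\theta\tilde H(\rho_1)+(1-\theta)\tilde H(\rho_2)-\tilde H(\rho_{_\theta})=\theta(1-\theta)\tilde H(\rho_1-\rho_2)$ reduces the problem to $\tilde H(\rho_1-\rho_2)\ge 0$. The Fourier representation
\be
\tilde H(\rho)=\frac{1}{4\pi^2}\int_{\Bbb R^2}\left[\frac{\beta-\lambda+3n_3^2\lambda}{\sqrt{2\pi}\,\vep}-\frac{3\lambda}{2}m(\xi)\right]|\hat\rho(\xi)|^2\,d\xi
\ee
together with the same sign analysis as in (ii), now reinforced by the stronger assumption $\beta-\lambda+3n_3^2\lambda\ge 0$ appearing in (B1$^\prime$)--(B3$^\prime$), makes the bracket nonnegative for every $\xi$.

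The main obstacle relative to Lemma \ref{lem2} is that $m(\xi)$ grows linearly in $|\xi|$, so the nonlocal term cannot be controlled by $\|\rho\|_2^2$ as in (\ref{eq:u2d:bound}); the argument must instead extract positivity from the sign of $m$ and $\lambda$ in each regime, which is precisely why the existence conditions split into three cases (B1)--(B3). A secondary technical point is to justify that $\int\rho\,\varphi\,d\bx$ is finite for $\Phi\in X_2$: integrating by parts reduces it to pairings such as $\int\p_{\bn_\perp}\rho\cdot\p_{\bn_\perp}(-\Delta)^{-1/2}\rho\,d\bx$, where $\p_{\bn_\perp}\rho=2\,\mathrm{Re}(\bar\Phi\,\p_{\bn_\perp}\Phi)\in L^{4/3}$ by H\"older (using the 2D Sobolev embedding $H^1\hookrightarrow L^4$ and $\nabla\Phi\in L^2$), and the Riesz-transform $L^p$-bound in Lemma \ref{lem3} places $\p_{\bn_\perp}(-\Delta)^{-1/2}\rho\in L^4$, closing the estimate.
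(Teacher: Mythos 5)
Your proposal is correct and follows essentially the same route as the paper: Plancherel reduces the nonlocal term to the symbol $m(\xi)$, whose sign (combined with that of $\lambda$) under each of (B1)--(B3) makes the dipolar contribution nonnegative, Gagliardo--Nirenberg then handles the quartic term, and convexity in (iii) follows from the quadratic identity $\theta\tilde H(\rho_1)+(1-\theta)\tilde H(\rho_2)-\tilde H(\rho_{_\theta})=\theta(1-\theta)\tilde H(\rho_1-\rho_2)$ together with pointwise nonnegativity of the Fourier multiplier. The only quibble is an immaterial factor-of-two slip in the normalization of your Fourier representation of $\tilde H$ in (iii), and your closing remark on the $L^{4/3}$--$L^{4}$ pairing is a welcome justification of finiteness that the paper leaves implicit.
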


\begin{proof} (i) It is similar to the case of Lemma \ref{lem2}.

(ii) Similar as  Lemma \ref{lem2}, for $\Phi\in S_2$, denote
$\rho=|\Phi|^2$,  we only need to consider the lower bound of the following functional
\be\label{Hrho} \tilde{H}(\rho)=-\lambda\int_{\Bbb
R^2}\rho\left(\p_{\bn_\perp\bn_\perp}-n_3^2\Delta\right)[(-\Delta)^{-1/2}\rho]\,d\bx.
\ee Using the Plancherel formula and Cauchy inequality, for
$\lambda<0$ and $n_3^2\ge\frac12$, we have \bea
\tilde{H}(\rho)=\frac{\lambda}{4\pi^2}\int_{\Bbb R^2}\frac{(n_1\xi_1+n_2\xi_2)^2-
n_3^2|\xi_3|^2}{|\xi|}\left|\hat{\rho}(\xi)\right|^2\,d\xi
\ge\frac{\lambda}{4\pi^2}\int_{\Bbb
R^2}(1-2n_3^2)|\xi|\;\left|\hat{\rho}(\xi)\right|^2\,d\xi\ge0. \quad \eea
For $\lambda>0$ and $n_3=0$, it is easy to see
$\tilde{H}(\rho)\ge0$. Hence, assertion (ii) is proven.

(iii)  Similar as  Lemma \ref{lem2}, it is sufficient to prove the
convexity of $\tilde{H}(\rho)$ in $\rho$. For $\sqrt{\rho_1}\in
S_2$, $\sqrt{\rho_2}\in S_2$ and any $\theta\in [0,1]$, denote
$\rho_{_\theta}=\theta\rho_1+(1- \theta)\rho_2$, we have \be \theta
\tilde{H}(\rho_1)+(1-\theta)\tilde{H}(\rho_2)-\tilde{H}(\rho_{_\theta})
=\theta(1-\theta)\tilde{H}(\rho_1-\rho_2),
\ee where the RHS is nonnegative under the given condition, i.e.,
$\tilde{H}(\rho)$ is convex. \end{proof}

\bigskip

{\bf{Proof of Theorem \ref{thm1'}:}} (i) We only need to consider
the existence since the uniqueness is a consequence of the convexity
of $\tilde{E}_{2D}(\sqrt{\rho})$ in Lemma \ref{lem4}. For existence,
we may apply the same arguments in Theorem \ref{thm1}, where
instead,  for sequence $\rho^n=(\Phi^n)^2$, we have to show
\be\label{2dgs} \liminf\limits_{n\to\infty}\tilde{H}(\rho^n)\ge
\tilde{H}(\rho^\infty),\quad \mbox{ with }\
\rho^\infty=|\Phi^\infty|^2. \ee Denote
$$
\varphi^n=\left(\p_{\bn_\perp\bn_\perp}-n_3^2\Delta\right)[(-\Delta)^{-1/2}\rho^n],
\qquad  n=0,1,\ldots,\mbox { or } \ n=\infty.
$$
Using $\Phi^n\to\Phi^\infty$ in $L^2(\Bbb R^2)$ and
$\Phi^n\rightharpoonup \Phi^\infty$ in $H^1(\Bbb R^2)$, then
$\rho^n\to\rho^\infty$ in $L^p(\Bbb R^2)$ $p>1$, and Lemma
\ref{lem3} shows that $\varphi^n\to\varphi^\infty$ in $W^{-1,p^\prime}(\Bbb
R^2)$ which is the dual space of $W^{1,p}$ with
$p^\prime=p/(p-1)$. Thus (\ref{2dgs}) is true and the existence of
ground state follows.

(ii) To prove the nonexistence results, we try to find the case
where $\tilde{E}_{2D}$ doesn't have  lower bound. For any $\Phi\in
S_2$, denote $\rho(\bx)=|\Phi(\bx)|^2$ and let $\theta\in \Bbb R$
such that
$(\cos\theta,\sin\theta)=\frac{1}{\sqrt{n_1^2+n_2^2}}(n_1,n_2)$ when
$n_1^2+n_2^2\neq0$ and $\theta=0$ if $n_1=n_2=0$. For any
$\vep_1,\vep_2>0$, consider the following function \be
\Phi_{\vep_1,\vep_2}(x,y)=\vep_1^{-1/2}\vep_2^{-1/2}\Phi(\vep_1^{-1}
(x\cos\theta+y\sin\theta),\vep_2^{-1}(-x
\sin\theta+y\cos\theta)), \ee denoting
$\rho_{\vep_1,\vep_2}=|\Phi_{\vep_1,\vep_2}|^2$, then \be
\widehat{\rho_{\vep_1,\vep_2}}(\xi_1,\xi_2)=\hat{\rho}
(\vep_1(\xi_1\cos\theta+\xi_2\sin\theta),\vep_2(-\xi_1
\sin\theta+\xi_2\cos\theta)). \ee By the Plancherel formula and
changing variables, we get \bea
\tilde{H}(\rho_{\vep_1,\vep_2})&=&\frac{\lambda}{4\pi^2}\int_{\Bbb R^2}
\frac{(n_1\xi_1+n_2\xi_2)^2-n_3^2|\xi^2|}{|\xi|}|\widehat{\rho_{\vep_1,
\vep_2}}|^2d\xi\nonumber\\
&=&\frac{\lambda}{4\pi^2}\int_{\Bbb R^2}\frac{(n_1^2+n_2^2)\eta_1^2-
n_3^2|\eta|^2}{|\eta|}|\widehat{\rho}|^2(\vep_1\eta_1,\vep_2\eta_2)d\eta\nonumber\\
&=&\frac{\lambda}{4\vep_1^2\vep_2\pi^2}\int_{\Bbb
R^2}\frac{(1-n_3^2)
\eta_1^2-n_3^2(\eta_1^2+\frac{\vep_2^2}{\vep_1^2}\eta_2^2)}{\sqrt{\eta_1^2+
\frac{\vep_2^2}{\vep_1^2}\eta_2^2}}
|\widehat{\rho}|^2(\eta_1,\eta_2)d\eta\nn. \eea Let
$\kappa=\frac{\vep_2}{\vep_1}$, then the  dominated convergence
theorem implies \bea \tilde{H}( \rho_{\vep_1,\vep_2})=\left\{\ba{ll}
\frac{1-2n_3^2+o(1)}{4\vep_1^2\vep_2}\lambda\int_{\Bbb
R^2}|\eta_1|\;\left|\widehat{\rho}(\eta_1,\eta_2)\right|^2\,d\eta,
&\kappa\to0^+, \\
\frac{-n_3^2+o(1)}{4\vep_1^2\vep_2}\lambda\int_{\Bbb
R^2}|\eta_1|\;\left|\widehat{\rho}(\eta_1,\eta_2)\right|^2\,d\eta,
&\kappa\to+\infty. \label{b2256}\ea\right.\qquad \eea For fixed
$\kappa>0$ and letting $\vep_1\to0^+$, we have $\int_{\Bbb
R^2}V_2(\bx)|\Phi_{\vep_1,\vep_2}|^2\,d\bx=O(1)$ and \be
\|\nabla\Phi_{\vep_1,\vep_2}\|_2^2=\frac{1}{\vep_1^2}\|\p_{x}\Phi\|_2^2
+\frac{1}{\vep_2^2}\|\p_{y}\Phi\|_2^2,\qquad
\|\Phi_{\vep_1,\vep_2}\|_4^4=\frac{1}{\vep_1\vep_2}\|\Phi\|_4^4. \ee
Thus under the condition (B1$^{\prime\prime}$), i.e $n_3\neq0$ and
$\lambda>0$, choosing $\kappa$ large enough, we get \be
\tilde{E}_{2D}(\Phi_{\vep_1,\vep_2})=\frac{C_1}{\vep^2_1}+\frac{C_2}{\kappa^2\vep_1^2}+
\frac{C_3}{\kappa\vep_1^2}+C_4\lambda\frac{-n_3^2+o(1)}{\kappa\vep_1^3}+O(1),
\ee where $C_k$ ($k=1,2,3,4$) are  constants independent of
$\kappa$, $\vep_1$ and $C_4>0$. Since $n_3\neq0$, the last term is
negative for $\kappa$ large, sending $\vep_1\to0^+$, one immediately
finds that
$\lim\limits_{\vep_1\to0^+,\vep_2=\kappa\vep_1}\tilde{E}_{2D}(\Phi_{\vep_1,\vep_2})=-\infty$,
which justifies the nonexistence.  Under the condition
(B2$^{\prime\prime}$), i.e. $n_3^2< \frac12$ and $\lambda<0$, by
choosing $\kappa$ small enough in (\ref{b2256}), sending $\vep_1$ to
$0^+$, we will have the same results. Case (B3$^{\prime\prime}$)
will reduce to Theorem \ref{thm1}. \hfill $\Box$

\subsection{Existence results for the Cauchy problem}

Let us consider the Cauchy problem of equation (\ref{gpe2d2}),
noticing the nonlinearity
$\phi(\p_{\bn_\perp\bn_\perp}-n_3^2\Delta)((-\Delta)^{-1/2}|\phi|^2)$
is actually a derivative nonlinearity,  it would bring
significant difficulty in analyzing the dynamic behavior. The common
approach to solve the Schr\"{o}dinger equation is trying to solve
the corresponding integral equation by  fixed point theorem.
However, the loss of order 1 derivative due to the nonlocal term
will cause trouble. This can be overcome by the smoothing effect of
inhomogeneous problem $iu_t+\Delta u=g(x,t)$, which provides a gain
of order 1 derivative \cite{Bour,Kpv0}. When $V_2(\bx)=0$, i.e. 
without external trapping potential which corresponds to the free
expansion of a dipolar BEC after turning off the confinement, the above
approach can be extended straightforward. However, when $V_2(\bx)\neq0$,
i.e. with an external trapping potential, especially a confinement 
trapping potential with    
$\lim_{|\bx|\to\infty}V_2(\bx)=\infty$, 
the approach in \cite{Bour,Kpv0} has some difficulties.  By
configuring that
$(\p_{\bn_\perp\bn_\perp}-n_3^2\Delta)((-\Delta)^{-1/2}|\phi|^2)$ is
almost a first order derivative, we are able to establish the
well-posedness of (\ref{gpe2d2}) with a general external potential 
$V_2(\bx)$ by a different approach.

The Cauchy problem of the Schr\"{o}dinger equation with derivative
nonlinearity has been investigated extensively  in the literatures
\cite{Ho,Kpv1}. Here, we present an existence results in the energy
space with the special structure of our nonlinearity, which will
show that the approximation (\ref{gpe2d2}) of (\ref{gpe2d}) is
reasonable in suitable sense. We are interested in the case of $\lambda\neq 0$.

\begin{theorem}\label{dyn2d2}
(Existence  for Cauchy problem) Suppose the real potential
$V_2(\bx)$ satisfies (\ref{cond:v2}) and
$\lim_{|\bx|\to\infty}V_2(\bx)=\infty$,
and initial value $\phi_0(\bx)\in X_2$,
either condition  (B2) or (B3)  in Theorem \ref{thm1'} holds with
constant $C_b$ being replaced by $C_b/\|\phi_0\|_2^2$, then
there exists a solution $\phi\in L^\infty([0,\infty);X_2)\cap
W^{1,\infty}([0,\infty);X_2^\ast)$ for the Cauchy problem of
(\ref{gpe2d2}). Moreover, there holds for $L^2$ norm and energy
$\tilde{E}_{2D}$ (\ref{ener2d2}) conservation, i.e. \be
\|\phi(\cdot,t)\|_{L^2(\Bbb R^2)}=\|\phi_0\|_{L^2(\Bbb R^2)},\quad
\tilde{E}_{2D} (\phi(\cdot,t))\leq \tilde{E}_{2D}(\phi_0),
\quad\forall t\ge0. \ee
\end{theorem}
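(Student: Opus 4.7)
The plan is to build the solution by compactness on a regularized family. As the paragraph preceding the theorem emphasizes, the nonlocal term $\phi\,(\p_{\bn_\perp\bn_\perp}-n_3^2\Delta)(-\Delta)^{-1/2}|\phi|^2$ loses essentially one derivative, so the Strichartz fixed-point used for Theorem \ref{thm1dy} is not directly available. In exchange, I only aim at a weak-$*$ solution with an energy inequality, which is exactly the form of the statement. The scheme is: (a) regularize the singular multiplier, (b) produce a global mass- and energy-conserving approximation in $X_2$, (c) extract a subsequential weak-$*$ limit using the compactness supplied by the confining potential, (d) pass to the limit in the equation, and (e) inherit mass conservation and the energy inequality.

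For (a)--(b), for $\delta>0$ let $J_\delta$ be a smooth Fourier cut-off, $\widehat{J_\delta f}(\xi)=\chi(\delta|\xi|)\hat f(\xi)$ with $\chi\in C_c^\infty(\Bbb R)$ equal to $1$ near $0$, and consider
\begin{equation*}
i\p_t\phi^\delta=\left[-\tfrac12\Delta+V_2\right]\phi^\delta+\beta_0|\phi^\delta|^2\phi^\delta-3\lambda\,\phi^\delta\,(\p_{\bn_\perp\bn_\perp}-n_3^2\Delta)(-\Delta)^{-1/2}J_\delta(|\phi^\delta|^2),
\end{equation*}
with $\beta_0=(\beta-\lambda+3\lambda n_3^2)/(\sqrt{2\pi}\,\vep)$ and $\phi^\delta(\cdot,0)=\phi_0$. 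The regularized nonlocal operator is now an $L^2$-bounded Fourier multiplier, so the argument of Theorem \ref{thm1dy} applies verbatim and yields a local $X_2$-valued solution that conserves both the $L^2$-norm and the naturally modified energy $\tilde E^\delta_{2D}$. The Fourier computation behind Lemma \ref{lem4}(ii) goes through unchanged because the extra nonnegative factor $\chi(\delta|\xi|)$ preserves the sign of the nonlocal quadratic form; hence under (B2) or (B3) the modified nonlocal contribution remains nonnegative. Coupled with the 2D Gagliardo--Nirenberg inequality (with $C_b$ replaced by $C_b/\|\phi_0\|_2^2$ thanks to mass conservation), this yields the $\delta$-uniform coercivity
\begin{equation*}
\tilde E^\delta_{2D}(\phi^\delta(t))\ \ge\ c\,\bigl(\|\nabla\phi^\delta(t)\|_2^2+\|V_2^{1/2}\phi^\delta(t)\|_2^2\bigr),
\end{equation*}
with $c>0$ independent of $\delta$ and $t$; since $\tilde E^\delta_{2D}(\phi_0)\to\tilde E_{2D}(\phi_0)$ by dominated convergence on the Fourier side, the $\phi^\delta$ extend globally and are uniformly bounded in $L^\infty([0,\infty);X_2)$. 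Using the equation together with Lemma \ref{lem3}, Remark \ref{rmk} and the Sobolev embeddings $X_2\hookrightarrow L^p(\Bbb R^2)$ for $p<\infty$, each nonlinear term is uniformly bounded in $X_2^\ast$, so $\p_t\phi^\delta$ is uniformly bounded in $L^\infty([0,\infty);X_2^\ast)$.

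For (c)--(d) I would extract a subsequence with $\phi^\delta\stackrel{*}{\rightharpoonup}\phi$ in $L^\infty_t X_2$ and $\p_t\phi^\delta\stackrel{*}{\rightharpoonup}\p_t\phi$ in $L^\infty_t X_2^\ast$. The confining hypothesis $\lim_{|\bx|\to\infty}V_2(\bx)=\infty$ makes the embedding $X_2\hookrightarrow L^2(\Bbb R^2)$ compact, so the Aubin--Lions lemma upgrades convergence to $\phi^\delta\to\phi$ strongly in $C([0,T];L^2(\Bbb R^2))$ for each $T$; interpolation with the uniform $L^\infty_tL^p$ bound then gives $|\phi^\delta|^2\to|\phi|^2$ in $C([0,T];L^p(\Bbb R^2))$ for every $p\in[1,\infty)$, and the cubic term passes to the limit directly. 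To handle the nonlocal term I would test the equation against $\psi\in C_c^\infty$ and move one derivative off $(\p_{\bn_\perp\bn_\perp}-n_3^2\Delta)(-\Delta)^{-1/2}J_\delta(|\phi^\delta|^2)$ onto the product $\phi^\delta\bar\psi$. What remains is a Riesz-type multiplier of order zero acting on $J_\delta(|\phi^\delta|^2)$, which is bounded on every $L^p$ ($1<p<\infty$) by Lemma \ref{lem3}; combined with the strong $L^p$ convergence of $|\phi^\delta|^2$ and $J_\delta\to\mathrm{Id}$ in $L^p$ (Lemma \ref{converge:2D}), this delivers the desired passage, matched against the weak $L^2$ convergence of $\nabla\phi^\delta$. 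Mass conservation for $\phi$ follows from the strong $L^2$ limit, and the energy inequality follows from weak lower semi-continuity of the kinetic and potential quadratic forms on $X_2$ together with the strong convergence already obtained for the cubic and nonlocal parts: $\tilde E_{2D}(\phi(t))\le\liminf_{\delta\to 0}\tilde E^\delta_{2D}(\phi^\delta(t))=\tilde E_{2D}(\phi_0)$.

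The principal obstacle is precisely this derivative loss: strong $H^1$-convergence of $\phi^\delta$ cannot be expected, which is why the conclusion only furnishes weak-$*$ regularity in time and an energy inequality in place of an identity. The decisive step is therefore the integration-by-parts that converts the first-order Fourier multiplier into a zeroth-order Riesz-type one, so that the compactness extracted from the confining trap is enough to close the nonlinear term; without the trap, additional smoothing estimates in the spirit of \cite{Bour,Kpv0} would be required, as the authors themselves point out.
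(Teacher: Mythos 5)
Your proposal is correct and follows essentially the same architecture as the paper's proof: regularize the singular nonlocal term, use positivity of the regularized dipolar energy under (B2)/(B3) to get $\delta$-uniform $X_2$ bounds and global existence, exploit the compact embedding $X_2\hookrightarrow L^2$ from the confining trap to upgrade to strong $L^p$ convergence, pass to the limit in the weak formulation by integrating by parts so that only a zeroth-order Riesz-type operator acts on $|\phi^\delta|^2$, and obtain the energy inequality by lower semicontinuity. The only (cosmetic) difference is the regularization itself: you truncate the density on the Fourier side with $J_\delta$, whereas the paper replaces $(-\Delta)^{-1/2}$ by the quasi-2D kernel $U^{2D}_\delta*$ so that Theorem \ref{thm1dy} and Lemmas \ref{lem1}, \ref{converge:2D} apply verbatim to the approximating problems.
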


\begin{proof} We first consider the Cauchy problem for the following
equation, \be\label{eq:appro} i \p_t
\phi^\delta(\bx,t)=\H_{\bx}^V\phi^\delta+g_1(\phi^\delta)+g_2(\phi^\delta),
\qquad \bx\in{\Bbb R}^2, \quad t>0, \ee with the initial data
$\phi^\delta(\bx,0)=\phi_0(\bx)$,
$\beta_0=\frac{\beta-\lambda+3\lambda n_3^2}{\sqrt{2\pi}\,\vep}$,
$\varphi^\delta=U^{2D}_\delta*|\phi^\delta|^2$ where $U^{2D}_\delta$ is given in
(\ref{u2d1}) as $U^{2D}_\delta(\bx)=\frac{1}{2\sqrt{2}\pi^{3/2}}
 \int_{\Bbb R}\frac{e^{-s^2/2}}{\sqrt{|\bx|^2+
\delta^2s^2}}\,ds$ ($\delta>0$), and \be
 \H_{\bx}^V=-\frac{1}{2}\Delta+V_2(\bx),\quad g_1(\phi^\delta)=
 \beta_0|\phi^\delta|^2\phi^\delta,\quad g_2(\phi^\delta)=
 -\frac{3\lambda}{2}\phi^\delta( \p_{\bn_\perp\bn_\perp}
-n_3^2\Delta)\varphi^\delta. \ee Then our quasi-2D equation II
(\ref{gpe2d2}) can be written as \be\label{gpe2d2r} i \p_t
\phi=\H_{\bx}^V\phi+g_1(\phi)+\tilde{g}_2(\phi), \ee where
\be\tilde{g}_2(\phi)=-\frac{3\lambda}{2}\phi(
\p_{\bn_\perp\bn_\perp} -n_3^2\Delta)(-\Delta)^{-1/2}(|\phi|^2).\ee
We denote the pairing of $X_2$ and its dual $X_2^\ast$ by
$\langle,\rangle_{X_2,X_2^\ast}$ as \be \langle f_1,f_2
\rangle_{X_2,X_2^\ast}=\mbox{Re}\int_{\Bbb
R^2}f_1(\bx)\bar{f}_2(\bx)\,d\bx, \ee where $\mbox{Re}(f)$ denotes
the real part of $f$. Using the results in Theorem \ref{thm1dy} and
\cite{Cazen}, we see that there exists a unique maximal solution
$\varphi^\delta\in C([-T_{\rm min}^\delta,T_{\rm max}^\delta],
X_2)\cap C^1([-T_{\rm min}^\delta,T_{\rm max}^\delta],X_2^\ast)$.
Here maximal means that if either $t\uparrow T_{\rm max}^\delta$  or
$t\downarrow -T_{\rm min}^\delta$,
$\|\phi^\delta(t)\|_{X_2}\to\infty$. We want to show that as
$\delta\to 0^+$, $\phi^\delta$ will converge to a solution of
equation (\ref{gpe2d2}).

 First, we show that $T_{\rm min}^\delta=+\infty$ and
 $T_{\rm max}^\delta=+\infty$. The energy conservation for (\ref{eq:appro}) is
\be E_\delta(t):=\frac12\|\nabla
\phi^\delta\|_2^2+\frac12\beta_0\|\phi^\delta\|_4^4+ \int_{\Bbb
R^2}V_2(\bx)|\phi^\delta|^2d\bx+E_{\rm dip}^\delta(t)\equiv
E_\delta(0), \qquad t\ge0, \ee where \be E_{\rm
dip}^\delta(t)=-\frac{3\lambda}{4}\int_{\Bbb R^2}|\phi^\delta|^2(
\p_{\bn_\perp\bn_\perp} -n_3^2\Delta)\varphi^\delta d\bx, \qquad
t\ge0. \ee Similar computation as in Lemma \ref{lem4} confirms that
$E_{\rm dip}^\delta\ge0$.
  Hence energy conservation will imply that $\|\phi^\delta(t)\|_{X_2}<\infty$
  for all $t$, i.e. $T^\delta_{\rm max}=T^\delta_{\rm min}=+\infty$.

We notice that \be X_2\hookrightarrow H^1\hookrightarrow
L^2\hookrightarrow H^{-1}\hookrightarrow X_2^\ast,\ee where $H^{-1}$
is viewed as the dual of $H^1$. Consider a bounded time interval
$I=[-T,T]$, it follows from energy conservation that there exists a
constant $C_1(\phi_0)>0$ such that \be\label{bound1}
\|\phi^\delta\|_{C([-T,T];X_2)}\leq C_1(\phi_0). \ee Moreover, Lemma
\ref{lem1} and Remark \ref{rmk} would imply \be \|\phi^\delta(
\p_{\bn_\perp\bn_\perp} -n_3^2\Delta)\varphi^\delta\|_q\leq
C\|\phi^\delta\|_{q^\ast}\|\nabla|\phi^\delta|^2\|_p \leq
C\|\phi^\delta\|_{q^\ast}\|\phi^\delta\|_{2p/(2-p)}\|\nabla\phi^\delta\|_2,
\ee
  for $q,p\in (1,2)$, $\frac{1}{q^\ast}+\frac{1}{p}=\frac1q$. Then we have
  \be\label{bound2}
 \| \phi^\delta\|_{C^1([-T,T];X_2^\ast)}\leq C_2(\phi_0).
  \ee
Thus, from (\ref{bound1}) and (\ref{bound2}), there exists a
sequence $\delta_n\to0^+$ ($n=1,2,\ldots,$) and a function $\phi\in
L^\infty([-T,T];X_2) \cap W^{1,\infty}([-T,T];X_2^\ast)$
\cite{Cazen},  such that \be\label{weakH1}
\phi^{\delta_n}(t)\rightharpoonup\phi(t)\quad \mbox{in }
X_2,\,\mbox{for all }t\in [-T,T]. \ee For each $t\in[-T,T]$, due to
the mass conservation of the equation (\ref{eq:appro}), we know
$\|\phi^{\delta_n}(t)\|_2=\|\phi_0\|_2$, by a similar proof in
Theorem \ref{thm1}, the weak convergence of $\phi^{\delta_n}(t)$ in
$X_2$ would imply that $\phi^{\delta_n}(t)$ converges strongly in
$L^2$, which is a consequence of the fact that
$V_2(\bx)$ is a confining potential. So,
$\lim\limits_{n\to\infty}\|\phi^{\delta_n}(t)\|_2=\|\phi(t)\|_2$,
and it turns out that \cite{Cazen} \be\label{strongL2}
\phi^{\delta_n}\to\phi,\quad \mbox{ in } C([-T,T];L^2(\Bbb R^2)).
\ee In view of (\ref{weakH1}), (\ref{strongL2}) and the
Gagliardo-Nirenberg's inequality, we obtain \be\label{strongLp}
\phi^{\delta_n} \to\phi,\quad \mbox{ in } C([-T,T];L^p(\Bbb R^2)),
\quad \mbox{ for all } p\in[2,\infty). \ee We now try to say that
$\phi$ actually solves equation (\ref{gpe2d2}). For any function
$\psi(\bx)\in X_2$ and $f(t)\in C_c^\infty([-T,T])$, from equation
(\ref{eq:appro}), we have \be\label{weakf} \int_{-T}^T\left[\langle
i\phi^{\delta_n},\psi\rangle_{X_2,X_2^\ast}f^\prime(t) +\langle
\H_{\bx}^V\phi^{\delta_n}+g_1(\phi^{\delta_n})+g_2(\phi^{\delta_n}),
\psi\rangle_{X_2,X_2^\ast}f(t)\right]dt=0. \ee Recalling
$|g_1(u)-g_1(v)|\leq C(|u|^2+|v|^2)|u-v|$, (\ref{strongLp}) implies
that \cite{Cazen} for all $t\in[-T,T]$ \be\label{sl:nonlinear}
g_1(\phi^{\delta_n}(t))\to g_1(\phi(t)),\quad\hbox{in } L^\rho(\Bbb
R^2) \text{ for some } \rho\in[1,\infty), \ee
 \be\label{g1} \langle g_1(\phi^{\delta_n}(t)),\psi(t) \rangle_{X_2,X_2^\ast}
 \to \langle g_1(\phi(t)),\psi(t)\rangle_{X_2,X_2^\ast}.\ee
 For $g_2(\phi^{\delta_n})$, consider  $\varphi^{\delta_n}(\bx,t)$,
 noticing
  $\p_{\alpha}\varphi^{\delta_n}=T_\alpha^{\delta_n}(|\phi^{\delta_n}|^2)$ ($\alpha=x,y$)
  (defined in Lemma \ref{converge:2D}),  we have proven in  Lemma
  \ref{converge:2D}  that
  $T_\alpha^{\delta_n}$ is uniformly bounded from $L^p$ to $L^p$  and
  \be
  T_\alpha^{\delta_n}(|\phi(t)|^2)\to R_\alpha(|\phi(t)|^2)=\p_{\alpha}(-\Delta)^{-1/2}(|\phi(t)|^2)
   \hbox{ in } L^p(\Bbb R^2), \qquad p\in(1,\infty),\quad
   \delta_n\to0^+\,.
  \ee
 Rewriting
  \be
  T_\alpha^{\delta_n}(|\phi^{\delta_n}(t)|^2)=T_\alpha^{\delta_n}(|\phi^{\delta_n}(t)|^2-
  |\phi(t)|^2)+T_\alpha^{\delta_n}(|\phi(t)|^2),
  \ee
  recalling the fact (\ref{strongLp}), we immediately have
  \be
  T_\alpha^{\delta_n}(|\phi^{\delta_n}(t)|^2)\to R_\alpha(|\phi(t)|^2) \mbox{ in } L^p(\Bbb
  R^2),\quad
  \mbox{ for some  } p\in(1,\infty),
  \ee
  which is actually
  \be\label{con:deri}
  \p_{\alpha}\varphi^{\delta_n}(t)\to
  \p_{\alpha}\left((-\Delta)^{-1/2}|\phi(t)|^2\right),\quad
  \hbox{ in } L^p(\Bbb R^2),\quad  \hbox{ for some  } p\in(1,\infty).
  \ee
 Hence, integration by parts, for $\alpha^\prime=x,y$,
 \beas
 \langle \phi^{\delta_n}(t)\p_{\alpha\alpha^\prime}\varphi^{\delta_n}(t),\psi(t) \rangle_{X_2,X_2^\ast}&=&
 \mbox{Re}\int_{\Bbb R^2}\phi^{\delta_n}(t)\p_{\alpha\alpha^\prime}\varphi^{\delta_n}(t) \bar{\psi}(t)d\bx\\
 &=&-\hbox{Re}\int_{\Bbb R^2}\p_{\alpha}\varphi^{\delta_n}(t) (\p_{\alpha^\prime}\phi^{\delta_n}(t)
 \bar{\psi}(t)+\phi^{\delta_n}(t)\overline{\p_{\alpha^\prime}\psi}(t))d\bx,
 \eeas
passing to the limit as $n\to\infty$,
\beas
 \lim\limits_{n\to\infty}\langle \phi^{\delta_n}(t)\p_{\alpha\alpha^\prime}\varphi^{\delta_n}(t),\psi(t) \rangle_{X_2,X_2^\ast}
 &=&-\mbox{Re}\int_{\Bbb R^2}R_\alpha(|\phi(t)|^2) (\p_{\alpha^\prime}\phi(t)\bar{\psi}(t)+\phi(t)\overline{\p_{\alpha^\prime}\psi}(t))d\bx\\
 &=&\langle \phi(t)\p_{\alpha\alpha^\prime}(-\Delta)^{-1/2}(|\phi(t)|^2),\psi(t) \rangle_{X_2,X_2^\ast},
 \eeas in view of (\ref{con:deri}) and (\ref{weakH1}), we obtain
\be\label{sl:nonlocal}
\lim\limits_{n\to\infty}\langle g_2(\varphi^{\delta_n}(t)),\psi(t) \rangle_{X_2,X_2^\ast}
=\langle \tilde{g}_2(\phi(t)),\psi(t) \rangle_{X_2,X_2^\ast}.
\ee
Combining the above results and (\ref{g1}) together, sending $n\to\infty$,
dominated convergence theorem will yield
\beas
\int_{-T}^T\left[\langle i\phi,\psi\rangle_{X_2,X_2^\ast}f^\prime(t)+\langle
\H_{\bx}^V\phi+g_1(\phi)+\tilde{g}_2(\phi),\psi\rangle_{X_2,X_2^\ast}f(t)\right]dt=0,
\eeas
which proves that
\be\label{weaksl}
i \p_t
\phi=\H_{\bx}^V\phi+g_1(\phi)+\tilde{g}_2(\phi), \mbox{ in } X_2^\ast, \quad \mbox{ a.a. } t\in[-T,T],
\ee
with $\phi(t=0)=\phi_0$, and $\phi\in L^\infty([-T,T];X_2)\cap W^{1,\infty}([-T,T];X_2^\ast)$.
Moreover, by lower semi-continuity of $X_2$ norm, (\ref{sl:nonlinear}) and (\ref{sl:nonlocal}),
the energy $\tilde{E}_{2D}$ (\ref{ener2d2}) satisfies
\be
\tilde{E}_{2D}(\phi(t))\leq \tilde{E}_{2D}(\phi_0).
\ee
It is easy to see that we can choose $T=\infty$.\end{proof}

\medskip

If  the uniqueness of the $L^\infty([-T,T];X_2)\cap W^{1,\infty}([-T,T];X_2^\ast)$
solution to the quasi-2D equation II (\ref{gpe2d2}) is known, we can prove that the
solution constructed above in the Theorem \ref{dyn2d2} is actually
$C([-T,T];X_2)\cap C^1([-T,T];X_2^\ast)$  and conserves the energy.

\medskip

Next, we discuss possible finite time blow-up for the continuous
solutions of the quasi-2D equation II (\ref{gpe2d2}). To this
purpose, the following assumptions are introduced:

(A) Assumption on the trap and coefficient of the cubic term, i.e.
$V_2(\bx)$ satisfies $3V_2(\bx)+ \bx\cdot \nabla V_2(\bx)\ge0$,
$\frac{\beta-\lambda+3\lambda
n_3^2}{\sqrt{2\pi}\,\vep}\ge-\frac{C_b}{\|\phi_0\|_2^2}$, with
$\phi_0$ being the initial data of equation (\ref{gpe2d2});

(B) Assumption on the trap and coefficient of the nonlocal term, i.e. $V_2(\bx)$
satisfies $2V_2(\bx)+ \bx\cdot \nabla V_2(\bx)\ge0$, $\lambda=0$ or $\lambda>0$ and $n_3^2\ge\frac12$.

\begin{theorem}(Finite time blow-up)
For any initial data $\phi(\bx,t=0)=\phi_0(\bx)\in X_2$ with $\int_{\Bbb
R^2}|\bx|^2|\phi_0(\bx)|^2\,d\bx<\infty$,
if conditions (B1), (B2) and (B3) with constant $C_b$ being replaced by
$C_b/\|\phi_0\|_2^2$ are not satisfied,
let $\phi:=\phi(\bx,t)$ be a
$C([0,T_{max}),X_2)$ solution of the problem (\ref{gpe2d2}) with
$L^2$ norm and energy conservation, then there exists finite time
blow-up, i.e., $T_{\mbox{max}}<\infty$, if one of the following
condition holds:

(i) $\tilde{E}_{2D}(\phi_0)<0$, and either Assumption (A) or (B) holds;

(ii) $\tilde{E}_{2D}(\phi_0)=0$ and ${\rm Im}\left(\int_{\Bbb
R^2}\bar{\phi}_0(\bx)\ (\bx\cdot\nabla\phi_0(\bx))\,d\bx\right)<0$,
and either Assumption (A) or (B) holds;

(iii) $\tilde{E}_{2D}(\phi_0)>0$, and ${\rm Im}\left(\int_{\Bbb R^2}
\bar{\phi}_0(\bx)\ (\bx\cdot\nabla\phi_0(\bx))\,d\bx\right)
<-\sqrt{3\tilde{E}_{2D}(\phi_0)}\|\bx\phi_0\|_{2}$ if Assumption (A)
holds, or ${\rm Im}\left(\int_{\Bbb R^2} \bar{\phi}_0(\bx)\
(\bx\cdot\nabla\phi_0(\bx))\,d\bx\right)
<-\sqrt{2\tilde{E}_{2D}(\phi_0)}\|\bx\phi_0\|_{2}$ if Assumption (B)
holds.
\end{theorem}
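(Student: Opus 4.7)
The strategy is the classical virial/variance method, adapted from Theorem \ref{blowup-1} but with the nonlocal interaction now given by the homogeneous-of-degree-$(-1)$ operator $(-\Delta)^{-1/2}$ rather than the smoothed kernel $U^{2D}_\vep$. Set $\sigma_V(t):=\int_{\Bbb R^2}|\bx|^2|\phi(\bx,t)|^2\,d\bx$; the finite-second-moment hypothesis on $\phi_0$ is propagated by the flow, so $\sigma_V(t)$ is well-defined on $[0,T_{\max})$. Multiplying (\ref{gpe2d2}) by $|\bx|^2\bar\phi$ and taking real/imaginary parts (with a standard regularisation to justify formal manipulations, as in the derivation of (\ref{d2ap22})) yields $\sigma_V'(0)=2{\rm Im}\int_{\Bbb R^2}\bar\phi_0(\bx\cdot\nabla\phi_0)\,d\bx$ and
\[
\sigma_V''(t)=\int_{\Bbb R^2}\!\Big[2|\nabla\phi|^2+2\beta_0|\phi|^4+3\lambda|\phi|^2\,\bx\cdot\nabla\tilde\varphi-2|\phi|^2\,\bx\cdot\nabla V_2\Big]d\bx,
\]
where $\beta_0=(\beta-\lambda+3n_3^2\lambda)/(\sqrt{2\pi}\,\vep)$ and $\tilde\varphi=(\p_{\bn_\perp\bn_\perp}-n_3^2\Delta)(-\Delta)^{-1/2}|\phi|^2$.

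The key new computation is the evaluation of $J:=\int_{\Bbb R^2}|\phi|^2\,\bx\cdot\nabla\tilde\varphi\,d\bx$. The Fourier symbol of $\rho\mapsto\tilde\varphi$ is $m(\xi)=\big(n_3^2|\xi|^2-(n_1\xi_1+n_2\xi_2)^2\big)/|\xi|$, which is \emph{homogeneous of degree one}. Setting $\rho_\lambda(\bx):=\lambda^2\rho(\lambda\bx)$, Plancherel gives the scaling identity $\int\rho_\lambda\tilde\varphi[\rho_\lambda]\,d\bx=\lambda^3\int\rho\tilde\varphi\,d\bx$; differentiating at $\lambda=1$ produces the Pohozaev-type relation $\int(\bx\cdot\nabla\rho)\tilde\varphi\,d\bx=-\tfrac12\int\rho\tilde\varphi\,d\bx$, whence one integration by parts yields $J=-\tfrac32\int|\phi|^2\tilde\varphi\,d\bx$. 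Substituting back and using the energy identity to re-express the kinetic and cubic parts in terms of $\tilde E_{2D}(\phi_0)$ produces two equivalent virial expressions:
\begin{eqnarray*}
\sigma_V''(t)&=&6\tilde E_{2D}(\phi_0)-\|\nabla\phi\|_2^2-\beta_0\|\phi\|_4^4-2\!\int_{\Bbb R^2}(3V_2+\bx\cdot\nabla V_2)|\phi|^2\,d\bx,\\
\sigma_V''(t)&=&4\tilde E_{2D}(\phi_0)-\tfrac{3\lambda}{2}\!\int_{\Bbb R^2}|\phi|^2\tilde\varphi\,d\bx-2\!\int_{\Bbb R^2}(2V_2+\bx\cdot\nabla V_2)|\phi|^2\,d\bx.
\end{eqnarray*}

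Under Assumption (A), the first form together with the Gagliardo-Nirenberg bound $\|\phi\|_4^4\le(\|\phi_0\|_2^2/C_b)\|\nabla\phi\|_2^2$ (valid by mass conservation) and the hypothesis $\beta_0\ge-C_b/\|\phi_0\|_2^2$ give $-\|\nabla\phi\|_2^2-\beta_0\|\phi\|_4^4\le 0$, hence $\sigma_V''(t)\le 6\tilde E_{2D}(\phi_0)$. Under Assumption (B), the elementary estimate $m(\xi)\ge(2n_3^2-1)|\xi|\ge0$ when $n_3^2\ge\tfrac12$ (and trivially when $\lambda=0$) yields $\lambda\int|\phi|^2\tilde\varphi\,d\bx\ge0$ via Plancherel, so the second form gives $\sigma_V''(t)\le 4\tilde E_{2D}(\phi_0)$. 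In either case the nonnegative quantity $\sigma_V(t)$ is dominated by a quadratic polynomial in $t$ with leading coefficient $3\tilde E_{2D}(\phi_0)$ or $2\tilde E_{2D}(\phi_0)$ respectively, and the three sign conditions (i)--(iii) are precisely those under which this polynomial must become negative at some finite time, contradicting $\sigma_V(t)\ge 0$ and forcing $T_{\max}<\infty$. The main obstacle is establishing the Pohozaev identity for the degree-one multiplier $m(\xi)$; once it is in hand, the remainder parallels the standard NLS blow-up argument (cf.~\cite{Sulem,Cazen}) and the treatment of Theorem \ref{blowup-1}.
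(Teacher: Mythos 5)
Your proposal is correct and follows essentially the same route as the paper: the virial identity for $\sg_V(t)$, the key relation $\int|\phi|^2\,\bx\cdot\nabla\tilde\varphi\,d\bx=-\tfrac32\int|\phi|^2\tilde\varphi\,d\bx$, the two resulting forms $\sg_V''\le 6\tilde E_{2D}(\phi_0)$ under Assumption (A) (via Gagliardo--Nirenberg and mass conservation) and $\sg_V''\le 4\tilde E_{2D}(\phi_0)$ under Assumption (B) (via nonnegativity of the Fourier multiplier when $n_3^2\ge\tfrac12$), and the quadratic-polynomial comparison. The only cosmetic difference is that you obtain the Pohozaev-type identity from the degree-one homogeneity of the symbol by a scaling argument, whereas the paper computes it directly in Fourier variables as in its Theorem \ref{blowup-1}; both are equivalent.
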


\begin{proof} Calculating  derivatives of  the variance defined in
(\ref{dtv001}),  for $\alpha=x,y$, we have \be
\frac{d}{dt}\sg_\alpha(t)=2\,{\rm Im}\left(\int_{\Bbb R^2}
\bar{\phi}\ \alpha\p_{\alpha}\phi\,d\bx\right), \qquad
t\ge0,\ee and \be\label{d2ap23}
\frac{d^2}{dt^2}\sg_\alpha(t)=\int_{\Bbb
R^2}\left[2|\p_{\alpha}\phi|^2+\beta_0
|\phi|^4+3\lambda|\phi|^2\alpha\p_{\alpha}(\p_{\bn_\perp\bn_\perp}-n_3^2\Delta)
\varphi-2\alpha|\phi|^2\p_{\alpha}V_2(\bx)\right]\,d\bx,\ee where
$\beta_0=\frac{\beta-\lambda+3\lambda n_3^2}{\sqrt{2\pi}\vep}$,
$(-\Delta)^{1/2}\varphi=|\phi|^2$. Writing $\rho=|\phi|^2$,
$\tilde{\varphi}=(\p_{\bn_\perp\bn_\perp}-n_3^2\Delta) \varphi$ and
noticing that $\rho$ is a real function,  by the Plancherel formula,
similar as Theorem \ref{blowup-1}, we get
\be \int_{\Bbb R^2}|\phi|^2\left(\bx\cdot\nabla
\tilde{\varphi}\right)\,d\bx =-\frac32\int_{\Bbb R^2}|\phi|^2
\tilde{\varphi}\,d\bx. \ee Hence, summing (\ref{d2ap23}) for
$\alpha=x$, $y$, and using the energy conservation,  if Assumption
(A) holds, we have
\begin{eqnarray}
\frac{d^2}{dt^2}\sg_V(t)&=&2\int_{\Bbb
R^2}\left(|\nabla\phi|^2+\beta_0|\phi|^4-\frac
94\lambda |\phi|^2\left(\p_{\bn_\perp\bn_\perp}-n_3^2\Delta\right)\varphi -|\phi|^2(\bx\cdot\nabla
V_2(\bx))\right)\,d\bx\nn\\
&=&6\tilde{E}_{2D}(\phi)-\int_{\Bbb R^2}(|\nabla\phi|^2+\beta_0|\phi|^4)\,d\bx-2\int_{\Bbb
R^2}|\phi|^2\left(3V_2(\bx)+\bx\cdot\nabla V_2(\bx)\right)\,d\bx\nn\\
&\leq&6\tilde{E}_{2D}(\phi(\cdot,t))\leq 6\tilde{E}_{2D}(\phi_0), \qquad t\ge0.
\end{eqnarray}
Thus,
$$
\sg_V(t)\leq 3\tilde{E}_{2D}(\phi_0)t^2+\sg_V^\prime(0)t+\sg_V(0), \qquad t\ge0,
$$
and the conclusion follows as in Theorem \ref{blowup-1}.
 If Assumption (B) holds, the energy contribution of the nonlocal part is non-positive and we have
\begin{eqnarray}
\frac{d^2}{dt^2}\sg_V(t)&=&2\int_{\Bbb
R^2}\left(|\nabla\phi|^2+\beta_0|\phi|^4-\frac
94\lambda |\phi|^2\left(\p_{\bn_\perp\bn_\perp}-n_3^2\Delta\right)\varphi -|\phi|^2(\bx\cdot\nabla
V_2(\bx))\right)\,d\bx\nn\\
&=&4\tilde{E}_{2D}(\phi)-\frac{3\lambda}{2}\int_{\Bbb R^2}|\phi|^2\tilde{\varphi}\,d\bx-2\int_{\Bbb
R^2}|\phi|^2\left(2V_2(\bx)+\bx\cdot\nabla V_2(\bx)\right)\,d\bx\nn\\
&\leq&4\tilde{E}_{2D}(\phi(\cdot,t))\leq 4\tilde{E}_{2D}(\phi_0), \qquad t\ge0,
\end{eqnarray}
and the conclusion follows in a similar way as  the Assumption (A) case.
\end{proof}

\section{Results for quasi-1D equation}
In this section, we prove the existence and uniqueness of
the ground state for quasi-1D equation (\ref{gpe1d}) and
establish the well-posedness for dynamics.

\subsection{Existence and uniqueness of ground state}
Associated to the quasi-1D equation (\ref{gpe1d}), the energy is
\be\label{ener1d} E_{1D}(\Phi)=\int_{\Bbb
R}\left[\frac12|\p_{z}\Phi|^2+V_1(z)|\Phi|^2+\frac12\beta_{1D}|\Phi|^4
+\frac{3\lambda(1-3n_3^2)}{16\sqrt{2\pi}\,\vep}|\Phi|^2
\varphi\right]\,dz, \ee
 where $\beta_{1D}=\frac{\beta+\frac12\lambda(1-3n_3^2)}{2\pi\vep^2}$ and
\be\label{kernel1d} \varphi(z)=\p_{zz}(U_\vep^{1D}*|\Phi|^2),\quad
U_\vep^{1D}(z)=
\frac{2e^{\frac{z^2}{2\vep^2}}}{\sqrt{\pi}}\int_{|z|}^\infty
e^{-\frac{s^2}{2\vep^2}}\,ds. \ee Again, the ground state $\Phi_g\in
S_1$ of the equation (\ref{gpe1d}) is defined as the minimizer of
the nonconvex minimization problem: \be \mbox{Find } \Phi_g\in
S_1,\quad\mbox{such that }E_{1D}(\Phi_g)=\min\limits_{\Phi\in S_1}\;
E_{1D}(\Phi). \ee

For the above ground state, we have  the following results.

\begin{theorem}\label{thm1''}(Existence and uniqueness of ground state) Assume
$0\leq V_1(z)\in L^\infty_{\rm loc}(\Bbb R)$ and
$\lim\limits_{|z|\to\infty}V_1(z)=\infty$, for any parameter
$\beta$, $\lambda$ and $\vep$, there exists a ground state
$\Phi_g\in S_1$ of the  quasi-1D equation
(\ref{gpe1d})-(\ref{poisson1d}), and the positive ground state
$|\Phi_g|$ is unique under one of the following conditions:

(C1)  $\lambda(1-3n_3^2)\ge0$ and $\beta-(1-3n_3^2)\lambda\ge0$;

 (C2)  $\lambda(1-3n_3^2)<0$ and  $\beta+\frac{\lambda}{2}(1-3n_3^2)\ge0$.

\noindent Moreover, $\Phi_g=e^{i\theta_0}|\Phi_g|$ for some constant
$\theta_0\in\Bbb R$.
\end{theorem}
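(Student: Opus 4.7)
The plan is to mirror the three-step architecture of Theorems \ref{thm1} and \ref{thm1'}: (i) reduce to real nonnegative densities via the pointwise inequality $|\nabla|\Phi||\le|\nabla\Phi|$ as in Lemma \ref{lem2}(i), (ii) prove existence by weak compactness in $X_1$, and (iii) prove uniqueness via strict convexity of $E_{1D}(\sqrt\rho)$ in $\rho=|\Phi|^2$. The crucial simplification compared with the 2D case is that the cubic nonlinearity is \emph{subcritical} in 1D: Gagliardo--Nirenberg gives $\|\Phi\|_4^4\le C\|\p_z\Phi\|_2\|\Phi\|_2^3$ for $\Phi\in S_1$, so Young's inequality allows us to absorb any coefficient multiplying $\|\Phi\|_4^4$ into a small fraction of $\|\p_z\Phi\|_2^2$ plus a constant depending only on $\|\Phi\|_2=1$. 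This is why the existence claim requires no restriction on $\beta$, $\lambda$, $\vep$.

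As a preparatory step I would analyze $U_\vep^{1D}$ in Fourier space along the lines of Lemma \ref{lem1}, by representing it as the integral of $1/(4\pi|\br|)$ against $w_\vep^2(\bx)w_\vep^2(\bx')$ in the transverse variables. This yields a nonnegative symbol of the form $\widehat{U_\vep^{1D}}(\xi)=\frac{1}{\pi}\int_{\Bbb R^2}\frac{e^{-\vep^2|\eta|^2/2}}{|\eta|^2+\xi^2}\,d\eta$ (up to the paper's normalization), from which one reads off two essential facts: $\xi^2\widehat{U_\vep^{1D}}(\xi)\ge 0$, and $\sup_\xi\xi^2\widehat{U_\vep^{1D}}(\xi)$ is a finite explicit constant of order $1/\vep$. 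Plancherel then produces $\int\rho\,\p_{zz}(U_\vep^{1D}*\rho)\,dz=-\frac{1}{2\pi}\int\xi^2\widehat{U_\vep^{1D}}(\xi)|\hat\rho|^2\,d\xi\le 0$ and a corresponding $L^2$--$L^2$ bound on $\rho\mapsto\p_{zz}(U_\vep^{1D}*\rho)$. Steps (i) and (ii) are then line-by-line adaptations of their 2D counterparts: coercivity of $E_{1D}$ plus the confining hypothesis $V_1\to\infty$ produces a minimizing sequence bounded in $X_1$; a subsequence converges strongly in $L^2(\Bbb R)$ by Rellich-type compactness in a confining potential; $H^1$-interpolation upgrades this to strong $L^p$ convergence for all $p\in[2,\infty)$; the cubic and nonlocal terms consequently pass to the limit, and weak lower semi-continuity of the kinetic and potential terms delivers a minimizer.

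For step (iii) I would decompose $E_{1D}(\sqrt\rho)=E_1(\sqrt\rho)+E_2(\rho)$, with $E_1$ strictly convex in $\rho$ by the standard Lieb--Loss result, and use Plancherel to write
\[E_2(\rho)=\frac{1}{2\pi}\int_{\Bbb R}\left[\frac{\beta+\frac{1}{2}\lambda(1-3n_3^2)}{4\pi\vep^2}-\frac{3\lambda(1-3n_3^2)}{16\sqrt{2\pi}\,\vep}\xi^2\widehat{U_\vep^{1D}}(\xi)\right]|\hat\rho(\xi)|^2\,d\xi,\]
so that convexity of $E_2$ in $\rho$ is equivalent to nonnegativity of the bracketed multiplier $m(\xi)$ for a.e.\ $\xi$. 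Under (C2), $\lambda(1-3n_3^2)<0$ combined with $\xi^2\widehat{U_\vep^{1D}}\ge 0$ makes the second term nonnegative, and $\beta+\frac{1}{2}\lambda(1-3n_3^2)\ge 0$ makes the first nonnegative, so $m\ge 0$ trivially. Under (C1), the second term is nonpositive, and one needs the explicit value of $\sup_\xi \xi^2\widehat{U_\vep^{1D}}(\xi)$ to see that the cubic piece dominates precisely when $\beta-(1-3n_3^2)\lambda\ge 0$, which is exactly (C1). Strict convexity of $E_1$ then upgrades the convexity of $E_2$ to strict convexity of $E_{1D}(\sqrt\rho)$, and the phase reduction $\Phi_g=e^{i\theta_0}|\Phi_g|$ follows as in Lemma \ref{lem2}(i). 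The main technical obstacle I anticipate is pinning down the exact value of $\sup_\xi \xi^2\widehat{U_\vep^{1D}}(\xi)$ from the explicit integral formula and verifying that it matches exactly the algebraic threshold in (C1); all remaining ingredients are direct transcriptions of the 2D arguments already developed in Sections 2 and 3.
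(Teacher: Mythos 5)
Your proposal reproduces the paper's own argument essentially step for step: the paper's Lemma \ref{lem5} computes $\widehat{U_\vep^{1D}}(\xi)=\frac{\sqrt{2}\,\vep}{\sqrt{\pi}}\int_0^\infty\frac{e^{-\vep^2 s/2}}{\xi^2+s}\,ds$ and the resulting bound $0\le\xi^2\widehat{U_\vep^{1D}}(\xi)\le\frac{2\sqrt{2}}{\sqrt{\pi}\,\vep}$, gets unconditional boundedness below from the one-dimensional Sobolev inequality $\|f\|_\infty^2\le\|f\|_2\|f'\|_2$, carries existence over from Theorem \ref{thm1}, and proves uniqueness exactly as you do by checking nonnegativity of the Fourier multiplier of $H_{1D}$ under (C1) (via the supremum of the symbol) and (C2) (via its nonnegativity). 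The one "obstacle" you flag is harmless: $\xi^2\widehat{U_\vep^{1D}}(\xi)$ increases monotonically to $\frac{\sqrt{2}\,\vep}{\sqrt{\pi}}\int_0^\infty e^{-\vep^2 s/2}\,ds=\frac{2\sqrt{2}}{\sqrt{\pi}\,\vep}$ as $|\xi|\to\infty$, and plugging this into the multiplier yields precisely the threshold $\beta-(1-3n_3^2)\lambda\ge0$ of (C1).
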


To complete the proof, we first study the property of the convolution
kernel  $U_\vep^{1D}$ (\ref{poisson1d}).

\begin{lemma}\label{lem5}(Kernel $U_\vep^{1D}$(\ref{poisson1d}))
For any $f(z)$ in the Schwartz space ${\mathcal{S}}(\Bbb R)$, we
have \be\label{u1dcov}
\widehat{U_\vep^{1D}*f}(\xi)=\hat{f}(\xi)\widehat{U_\vep^{1D}}(\xi)=
\frac{\sqrt{2}\,\vep\hat{f}(\xi)}{\sqrt{\pi}}\int_0^\infty\frac{e^{-\vep^2
s/2}}{|\xi|^2+s}ds, \quad \xi\in{\Bbb R}. \ee Hence \be
\|\p_{zz}(U_\vep^{1D}*f)\|_2\leq\frac{2\sqrt{2}}{\sqrt{\pi}\,\vep}\|f\|_2.
\ee
\end{lemma}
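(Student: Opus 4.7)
The plan mirrors the derivation of Lemma~\ref{lem1}. I would first rewrite $U_\vep^{1D}$ as an effective one-dimensional kernel obtained by averaging the three-dimensional Coulomb Green's function against the transverse Gaussian ground state $w_\vep(\bx)$, then compute its Fourier transform using Plancherel in the transverse directions, and finally extract the second-derivative $L^2$ bound by an elementary pointwise estimate.

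The preparatory step is the integral representation
\[
U_\vep^{1D}(z)=4\sqrt{2\pi}\,\vep\int_{\Bbb R^2\times\Bbb R^2}\frac{w_\vep^2(\bx)\,w_\vep^2(\bx')}{4\pi\sqrt{|\bx-\bx'|^2+z^2}}\,d\bx\,d\bx'.
\]
This is verified by passing to the relative coordinate $\bu=\bx-\bx'$: the integration over $\bx+\bx'$ produces the Gaussian self-convolution $(w_\vep^2*w_\vep^2)(\bu)=(2\pi\vep^2)^{-1}e^{-|\bu|^2/(2\vep^2)}$, after which polar coordinates in $\bu$ and the substitution $u=\sqrt{r^2+z^2}$ reduce the expression to the complementary-error-function form of $U_\vep^{1D}$ given in~(\ref{poisson1d}).

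With this representation in hand, I would insert the spectral formula $\frac{1}{4\pi|\br|}=(2\pi)^{-3}\int_{\Bbb R^3}e^{i\br\cdot\xi}|\xi|^{-2}\,d\xi$ and carry out the $\bx,\bx'$ integrations using $\widehat{w_\vep^2}(\xi_\perp)=e^{-\vep^2|\xi_\perp|^2/4}$. Taking the one-dimensional Fourier transform in $z$ collapses the $\xi_3$-integration via a Dirac delta and produces
\[
\widehat{U_\vep^{1D}}(\xi)=\frac{\sqrt{2\pi}\,\vep}{\pi^2}\int_{\Bbb R^2}\frac{e^{-\vep^2|\xi_\perp|^2/2}}{|\xi_\perp|^2+|\xi|^2}\,d\xi_\perp,
\]
and polar coordinates together with the substitution $s=|\xi_\perp|^2$ turn this into exactly~(\ref{u1dcov}).

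The second-derivative bound then follows at once: by Plancherel applied to $\p_{zz}(U_\vep^{1D}*f)$ together with the pointwise inequality $|\xi|^2/(|\xi|^2+s)\leq1$, one has
\[
|\xi|^2\bigl|\widehat{U_\vep^{1D}}(\xi)\bigr|\leq\frac{\sqrt{2}\,\vep}{\sqrt{\pi}}\int_0^\infty e^{-\vep^2 s/2}\,ds=\frac{2\sqrt{2}}{\sqrt{\pi}\,\vep},
\]
which, combined with Plancherel for $f$, gives the claim. The main obstacle is the first step: tracking the $\vep$-dependent constant while executing the Gaussian convolution and the radial substitution requires care. Once the representation is correctly in place, the remaining Fourier computation closely parallels Lemma~\ref{lem1} and involves only routine Gaussian manipulations.
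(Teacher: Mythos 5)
Your proposal is correct and follows essentially the same route as the paper: rewriting $U_\vep^{1D}$ as the transverse Gaussian average of the Coulomb kernel with prefactor $\frac{\sqrt{2}\,\vep}{\sqrt{\pi}}$ (your constant $4\sqrt{2\pi}\,\vep/(4\pi)$ agrees), computing $\widehat{U_\vep^{1D}}$ by Plancherel in the transverse variables, and reducing to (\ref{u1dcov}) by polar coordinates and $s=|\xi_\perp|^2$, after which the bound follows from $|\xi|^2/(|\xi|^2+s)\le 1$. The only difference is that you verify the integral representation directly via the Gaussian self-convolution, whereas the paper cites it from the reference; your constants check out throughout.
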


\begin{proof} For any $f(z)\in\mathcal{S}({\Bbb R})$,  rewrite the
kernel as \cite{Bao2}\be\label{eq:u1d}
U_\vep^{1D}(z)=\frac{\sqrt{2}\,\vep}{\sqrt{\pi}}\int_{\Bbb R^4}
\frac{w_\vep^2(x,y)w_\vep^2(x^\prime,y^\prime)}
{\sqrt{z^2+(x-x^\prime)^2+(y-y^\prime)^2}}\,dxdydx^\prime dy^\prime,
\nonumber \ee
applying Fourier transform to both sides and using the Plancherel
formula as in Lemma \ref{lem1}, we obtain \be
\widehat{U_\vep^{1D}}(\xi)=\frac{\sqrt{2}\,\vep}{\pi^{3/2}}\int_{\Bbb
R^2}\frac{|\widehat{w_\vep^2}(\xi_1,\xi_2)|^2}{\xi^2+\xi_1^2+\xi_2^2}\,d\xi_1\,d\xi_2,
\qquad \xi\in{\Bbb R}. \ee Then a direct computation  yields the
conclusion. \end{proof}

\begin{lemma}\label{lem6} For the energy  $E_{1D}(\cdot)$ in (\ref{ener1d}), we have

(i) For any $\Phi\in S_1$, denote $\rho(z)=|\Phi(z)|^2$, then  \be
E_{1D}(\Phi)\geq E_{1D}(|\Phi|)=E\left(\sqrt{\rho}\right), \qquad
\forall \Phi\in S_1,\ee so the ground state $\Phi_g$ of
(\ref{ener1d}) is of the form $e^{i\theta_0}|\Phi_g|$ for some
constant $\theta_0\in \Bbb R$.

(ii) $E_{1D}$ is bounded below.

 (iii) If   condition (C1) or (C2) in  Theorem \ref{thm1''} holds,  $E_{1D}(\sqrt{\rho})$
 is strictly convex in $\rho$.
\end{lemma}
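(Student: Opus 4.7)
The plan is to follow the pattern established in Lemmas \ref{lem2} and \ref{lem4}, exploiting the Fourier representation of $U_\vep^{1D}$ provided by Lemma \ref{lem5}.

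For (i), I will invoke the standard pointwise inequality $|\nabla|\Phi|\,|\le|\nabla\Phi|$, with equality almost everywhere iff $\Phi=e^{i\theta_0}|\Phi|$ for some real $\theta_0$. Since every other term in $E_{1D}$ depends only on $|\Phi|^2$, this gives $E_{1D}(\Phi)\ge E_{1D}(|\Phi|)$, and the characterization of minimizers follows verbatim as in Lemma \ref{lem2}.

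For (ii), the key observation is that in one space dimension both the cubic and the nonlocal terms are energy-subcritical. The 1D Gagliardo--Nirenberg inequality gives $\|\Phi\|_4^4\le C\|\Phi\|_2^3\|\p_z\Phi\|_2$, and Lemma \ref{lem5} provides $\|\p_{zz}(U_\vep^{1D}*|\Phi|^2)\|_2\le(2\sqrt{2}/\sqrt{\pi}\,\vep)\|\,|\Phi|^2\|_2=(2\sqrt{2}/\sqrt{\pi}\,\vep)\|\Phi\|_4^2$, so that
\[
\left|\int_{\Bbb R}|\Phi|^2\,\p_{zz}(U_\vep^{1D}*|\Phi|^2)\,dz\right|\le\frac{2\sqrt{2}}{\sqrt{\pi}\,\vep}\|\Phi\|_4^4.
\]
Hence every nonlinear contribution to $E_{1D}$ is bounded by $C_\vep\|\p_z\Phi\|_2$ on $S_1$, which is absorbed into $\tfrac12\|\p_z\Phi\|_2^2$ by Young's inequality; therefore $E_{1D}$ is bounded below with no restriction on the parameters.

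For (iii), I will decompose $E_{1D}(\sqrt{\rho})=E_1(\sqrt{\rho})+F(\rho)$, where $E_1(\sqrt\rho)=\int(\tfrac12|\p_z\sqrt\rho|^2+V_1\rho)\,dz$ is strictly convex in $\rho$ by the same classical argument used in Lemma \ref{lem2}. The remainder
\[
F(\rho)=\int_{\Bbb R}\!\Bigl[\tfrac12\beta_{1D}\rho^2+a\,\rho\,\p_{zz}(U_\vep^{1D}*\rho)\Bigr]dz,\qquad a=\frac{3\lambda(1-3n_3^2)}{16\sqrt{2\pi}\,\vep},
\]
is quadratic in $\rho$; via the Plancherel formula and (\ref{u1dcov}) it reads
\[
F(\rho)=\frac{1}{2\pi}\int_{\Bbb R}\Bigl[\tfrac{\beta_{1D}}{2}-a\,\xi^2\,\widehat{U_\vep^{1D}}(\xi)\Bigr]|\hat\rho(\xi)|^2\,d\xi.
\]
The bracket is non-negative under each of (C1) and (C2): in (C2), $a<0$ and $\beta_{1D}\ge0$, so the bracket is manifestly non-negative; in (C1), $a\ge0$ and the uniform bound $\xi^2\widehat{U_\vep^{1D}}(\xi)\le 2\sqrt{2}/(\sqrt{\pi}\,\vep)$, an immediate consequence of $\xi^2/(\xi^2+s)\le 1$ in the representation (\ref{u1dcov}) together with $\int_0^\infty e^{-\vep^2 s/2}\,ds=2/\vep^2$, reduces the required inequality $\tfrac12\beta_{1D}\ge a\,\xi^2\widehat{U_\vep^{1D}}(\xi)$ to $\beta\ge(1-3n_3^2)\lambda$, which is exactly (C1). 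Hence $F$ is convex in $\rho$, and together with the strict convexity of $E_1$ the assertion follows. The main obstacle I anticipate is the bookkeeping here: making the Fourier-space estimate sharp enough that the resulting algebraic condition is precisely $\beta-(1-3n_3^2)\lambda\ge0$, rather than a strictly stronger inequality; the explicit constant in Lemma \ref{lem5} is what makes this reduction match (C1) on the nose.
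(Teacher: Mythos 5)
Your proof is correct and follows essentially the same route as the paper: part (i) via $|\p_z|\Phi||\le|\p_z\Phi|$, part (ii) via the 1D Gagliardo--Nirenberg/Sobolev inequality combined with the $L^2$ bound on $\p_{zz}(U_\vep^{1D}*\cdot)$ from Lemma \ref{lem5}, and part (iii) by isolating the quadratic functional (your $F$ is exactly the paper's $H_{1D}$), proving its nonnegativity in Fourier space, and using the bilinear identity $\theta F(\rho_1)+(1-\theta)F(\rho_2)-F(\rho_\theta)=\theta(1-\theta)F(\rho_1-\rho_2)$. The only difference is that you carry out explicitly the constant check $\xi^2\widehat{U_\vep^{1D}}(\xi)\le 2\sqrt{2}/(\sqrt{\pi}\,\vep)$ showing that the symbol's positivity reduces exactly to $\beta\ge(1-3n_3^2)\lambda$ under (C1), a computation the paper leaves implicit.
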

\begin{proof} Part (i) is similar to that in Lemma \ref{lem1}. Part (ii) is
 well-known, once we notice the property of kernel $U_\vep^{1D}$ (cf. Lemma
 \ref{lem5}) and the Sobolev inequality in one dimension,
\be
\|f\|^2_\infty\leq \|f\|_2\|f^\prime\|_2.
\ee

(iii) We come to the convexity of $E_{1D}(\sqrt{\rho})$. Following
Lemma \ref{lem2}, we only need  to consider the functional \be
H_{1D}(\rho)=\int_{\Bbb
R}\left[\frac{\beta+\lambda(1-3n_3^2)/2}{4\pi\vep^2}\rho^2
+\frac{3\lambda(1-3n_3^2)}{16\sqrt{2\pi\vep^2}}\rho\p_{zz}(U_\vep^{1D}*\rho)\right]\,dz.
\ee Then under condition (C1) or (C2), using the Plancherel formula
and Lemma \ref{lem5}, after similar computation as in Lemma
\ref{lem1}, we would have $H_{1D}(\rho)\ge0$. For arbitrary
$\sqrt{\rho_1},\sqrt{\rho_2}\in S_1$ and $\theta\in[0,1]$, denote
$\rho_{_\theta}=\theta\rho_1+(1-\theta)\rho_2$, then
$\sqrt{\rho_{_\theta}}\in S_1$ and \be \theta
H_{1D}(\rho_1)+(1-\theta)
H_{1D}(\rho_2)-H_{1D}(\rho_{_\theta})=\theta(1-\theta)H_{1D}(\rho_1-\rho_2)\ge0,
\ee which proves the convexity. \end{proof}

\bigskip

{\bf{Proof of Theorem \ref{thm1''}}:} The uniqueness  follows from the strict
 convexity in Lemma \ref{lem6}. The
existence part is similar  as  Theorem  \ref{thm1} and we omit it here for brevity. \hfill$\Box$

\subsection{Well-posedness for the Cauchy problem}
Concerning the Cauchy problem, Lemma \ref{lem5} shows that the
nonlinearity in quasi-1D equation (\ref{gpe1d}) is almost  a cubic
nonlinearity, while the same property has been observed for
quasi-2D equation I (\ref{gpe2d})-(\ref{u2d1}). Hence similar
results as Theorem \ref{thm1dy} can be obtained for equation
(\ref{gpe1d})  and we omit the proof here.

\begin{theorem}
(Well-posedness for Cauchy problem) Suppose the real-valued trap
potential satisfies $V_1(z)\ge0$ for $z\in{\Bbb R}$ and $V_1(z)\in
C^\infty(\Bbb R)$, $D^k V_1(z)\in L^\infty(\Bbb R)$ for all
integers $k\ge 2$, for any initial data
$\phi(z,t=0)=\phi_0(z)\in X_1$,
 there exists
 a unique solution
$\phi\in C\left([0,\infty),X_1\right)\cap C^1\left([0,\infty),X_1^\ast\right)$
to the Cauchy problem of equation (\ref{gpe1d}).
\end{theorem}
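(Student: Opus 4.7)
The plan is to mimic the proof of Theorem \ref{thm1dy} for the quasi-2D equation I, exploiting the fact that, by Lemma \ref{lem5}, the nonlocal term in (\ref{gpe1d}) behaves like a cubic nonlinearity rather than a derivative nonlinearity. Rewrite the equation as
\be
i\partial_t\phi = H_z\phi + \beta_{1D}|\phi|^2\phi + \mu_\vep\, g(\phi),\qquad H_z = -\tfrac12\partial_{zz}+V_1(z),
\ee
with $\mu_\vep = -\frac{3\lambda(3n_3^2-1)}{8\sqrt{2\pi}\,\vep}$ and $g(\phi) = \phi\,\partial_{zz}(U_\vep^{1D}*|\phi|^2)$. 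By Lemma \ref{lem5} the linear map $f \mapsto \partial_{zz}(U_\vep^{1D}*f)$ is bounded on $L^2$, and an $L^p$-version (analogous to Remark \ref{LpboundT}, via Minkowski's inequality and the smooth Gaussian structure of $U_\vep^{1D}$) gives boundedness on $L^p$ for $p\in(1,\infty)$.

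Step 1 (local well-posedness). Using the $L^p$-boundedness above together with the one-dimensional embedding $H^1(\mathbb{R})\hookrightarrow L^\infty(\mathbb{R})$, I would establish the Lipschitz estimate
\be
\|g(u)-g(v)\|_{r'} \le C(M)\,\|u-v\|_r,\qquad \|u\|_{X_1}+\|v\|_{X_1}\le M,
\ee
for a suitable admissible pair $(r,r')$. Together with the analogous (and easier) bound for $|\phi|^2\phi$, Theorems 9.2.1, 4.12.1 and 5.7.1 of Cazenave apply to $H_z$ under the assumed growth condition on $V_1$, yielding a unique maximal solution $\phi\in C([0,T_{\max}),X_1)\cap C^1([0,T_{\max}),X_1^\ast)$ together with the blow-up alternative $\|\phi(t)\|_{X_1}\to\infty$ as $t\to T_{\max}^-$.

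Step 2 (conservation laws and global existence). Standard regularization arguments give mass conservation $\|\phi(t)\|_2=\|\phi_0\|_2$ and energy conservation $E_{1D}(\phi(t)) = E_{1D}(\phi_0)$ on $[0,T_{\max})$. To globalize, I invoke the one-dimensional Gagliardo-Nirenberg inequality $\|f\|_4^4 \le C\|f\|_2^3\|f_z\|_2$. Combined with Lemma \ref{lem5}, which yields
\be
\left|\int_{\mathbb R}|\phi|^2\,\partial_{zz}(U_\vep^{1D}*|\phi|^2)\,dz\right|
\le \|\phi\|_4^2\,\|\partial_{zz}(U_\vep^{1D}*|\phi|^2)\|_2
\le \tfrac{2\sqrt 2}{\sqrt\pi\,\vep}\,\|\phi\|_4^4,
\ee
both nonlinear contributions to $E_{1D}$ are bounded by a constant multiple of $\|\phi\|_4^4$. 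An $\eta$-Young splitting $\|\phi\|_4^4\le \eta\|\partial_z\phi\|_2^2+C_\eta\|\phi\|_2^6$ then lets me absorb the quartic terms into the kinetic energy (for $\eta$ small, independently of the signs of $\beta$ and $\lambda$), giving a uniform a priori bound on $\|\partial_z\phi(t)\|_2$ and, via $V_1\ge 0$, on $\|\phi(t)\|_{X_1}$. The blow-up alternative then forces $T_{\max}=+\infty$.

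The main obstacle is really conceptual rather than technical: recognizing via Lemma \ref{lem5} that the apparent second-derivative loss in $g$ is illusory, so the problem reduces to a subcritical cubic NLS in dimension one. Once this is observed, the argument is essentially the one of Theorem \ref{thm1dy}, with the added bonus that the 1D Gagliardo-Nirenberg inequality is strictly subcritical and therefore no smallness or sign restriction on $\beta$, $\lambda$ is needed for global existence.
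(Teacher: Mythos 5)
Your proposal is correct and follows essentially the same route as the paper, which omits the proof by noting (via Lemma \ref{lem5}) that the nonlocal term is effectively a cubic nonlinearity and then invoking the argument of Theorem \ref{thm1dy}. Your additional globalization step — using the subcritical one-dimensional Gagliardo--Nirenberg inequality to bound the energy from below without any sign condition on $\beta$ and $\lambda$ — is exactly what underlies the paper's unconditional statement (compare Lemma \ref{lem6}(ii)), so it is a faithful filling-in of the omitted details rather than a different method.
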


\section{Convergence rate of dimension reduction}
\setcounter{equation}{0}

In this section, we  discuss the dimension reduction of 3D GPPS to
lower dimensions. Inspired by the previous work of Ben Abdallah et
al. \cite{bamsw,bacm}  for GPE without the dipolar term (i.e.
$\lambda=0$) and \cite{baca,bafm} for Schr\"odinger-Poisson systems,
we are going to find a limiting $\vep$-independent
equation as $\vep\to0^+$. Thus in quasi-2D equation I (\ref{gpe2d}),
II (\ref{gpe2d2}) and quasi-1D equation (\ref{gpe1d}), we have to
consider the coefficients to be $O(1)$.
 The existence of the global solution for  the full 3D system
(\ref{gpe})-(\ref{poisson})  has been proven in \cite{Carles,Bao1}
when $\beta\ge0$ and $\lambda\in[-\frac12\beta,\beta]$, hence we
would expect the limiting equation in lower dimensions valid in a
similar regime. Thus in lower dimensions,
 we require that in the quasi-2D case, $\beta=O(\vep)$, $\lambda=O(\vep)$, and in
 the quasi-1D case, $\beta=O(\vep^2)$, $\lambda=O(\vep^2)$, i.e. we are considering
 the weak interaction regime, then we would get an $\vep$-independent limiting equation.
 In this regime, we will see that  GPPS will reduce to a regular GPE in lower dimensions.

 \subsection{Reduction to 2D}
 We consider the weak interaction regime, i.e.,   $\beta\to \vep\beta$, $\lambda\to\vep\lambda$.
In {\sl{Case I}} (\ref{case1}),  for full 3D GPPS
(\ref{gpe})-(\ref{poisson}), introduce the re-scaling $z\to \vep z$,
$\psi\to \vep^{-1/2}\psi^\vep$ which preserves the normalization,
then
 \bea\label{rescalgpe2}
 i\p_t\psi^\vep(\bx,z,t)=\left[\H_{\bx}^V+\frac{1}{\vep^2}\H_z+(\beta-\lambda)|\psi^\vep|^2-3
 \vep\lambda\p_{\bn_\vep\bn_\vep}\varphi^\vep\right] \psi^\vep,\quad (\bx,z)\in \Bbb
 R^3, \ t>0,\quad
 \eea
 where $\bx=(x,y) \in \Bbb R^2$ and
 \bea
 &&\H_{\bx}^V=-\frac{1}{2}(\p_{xx}+\p_{yy})+V_2(x,y),\qquad \H_z=-\frac12\p_{zz}+\frac{z^2}{2},\\
 &&\bn_\vep=(n_1,n_2,n_3/\vep),\qquad \p_{\bn_\vep}=\bn_\vep\cdot\nabla,\qquad
 \p_{\bn_\vep\bn_\vep}=\p_{\bn_\vep}(\p_{\bn_\vep}),\\
 &&(-\p_{xx}-\p_{yy}-\frac{1}{\vep^2}\p_{zz})\varphi^\vep=\frac{1}{\vep}|\psi^\vep|^2,\qquad
 \lim\limits_{|\bx|\to \infty}\varphi^\vep(\bx)=0.\label{rescalps}
 \eea
 It is well-known that $\H_z$ has eigenvalues $\mu_k=k+1/2$ with corresponding
 eigenfunction $w_k(z)$ ($k=0,1,\ldots$), where $\{w_k\}_{k=0}^\infty$ forms an
 orthornormal   basis  of $L^2(\Bbb R)$ \cite{DF,GS}, specifically, $w_0(z)=
 \frac{1}{\pi^{1/4}}e^{-z^2/2}$.  Following \cite{bamsw}, it is convenient to
 consider the initial data concentrated on the ground mode of $\H_z$, i.e.,
 \be\label{init}
 \psi^\vep(\bx,z,0)=\phi_0(\bx)w_0(z),\quad \phi_0\in X_2 \mbox{ and } \|\phi_0\|_{L^2(\Bbb R^2)}=1.
 \ee

In {\sl{Case I}} (\ref{case1}),
 when $\vep\to0^+$, quasi-2D equation I (\ref{gpe2d}), II (\ref{gpe2d2})
 will yield  an $\vep$-independent equation in the weak interaction regime,
 \be\label{eq:2d:weak}
 i\p_t \phi(\bx,,t)=\H_{\bx}^V\phi+\frac{\beta-(1-3n_3^2)\lambda}{\sqrt{2\pi}}
 |\phi|^2\phi,\quad \bx=(x,y)\in\Bbb R^2,
 \ee
 with initial condition $\phi(\bx,0)=\phi_0(\bx)$.
 We follow the ideas in \cite{baca,bamsw,bacm} to show the convergence from the
 3D GPPS to the 2D approximation.  First, let us state the main result.

\begin{theorem}\label{redthm1}(Dimension reduction to 2D) Suppose $V_2$ satisfies
condition (\ref{cond:v2}), $-\frac{\beta}{2}\leq \lambda \leq\beta$
and $\beta\ge0$, let $\psi^\vep\in C([0,\infty);X_3)$ and
 $\phi\in C([0,\infty);X_2)$ be the unique solutions of equations (\ref{rescalgpe2})-(\ref{init})
 and (\ref{eq:2d:weak}), respectively, then for any $T>0$, there exists $C_{T}>0$ such that
\be \left\|\psi^\vep(\bx,z,t)-e^{-i\frac{\mu_0
t}{\vep^2}}\phi(\bx,t)w_0(z)\right\|_{L^2(\Bbb R^3)}\leq
C_{T}\,\vep, \qquad \forall t\in [0,T]. \ee
\end{theorem}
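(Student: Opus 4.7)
The plan is to filter out the fast $z$-oscillation and then perform an adiabatic decomposition onto the ground mode $w_0$ of $\H_z$. Setting $\Psi^\vep(\bx,z,t) = e^{i\mu_0 t/\vep^2}\psi^\vep(\bx,z,t)$ absorbs the fast phase and $\Psi^\vep$ satisfies
\begin{equation*}
i\p_t \Psi^\vep = \H_{\bx}^V \Psi^\vep + \tfrac{1}{\vep^2}(\H_z-\mu_0)\Psi^\vep + (\beta-\lambda)|\Psi^\vep|^2\Psi^\vep - 3\vep\lambda\,\p_{\bn_\vep\bn_\vep}\varphi^\vep\,\Psi^\vep,
\end{equation*}
with initial data $\phi_0(\bx)w_0(z)$. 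Let $\Pi_0$ denote the orthogonal projection in $L^2_z$ onto $\mathrm{span}\{w_0\}$ and $\Pi_\perp=I-\Pi_0$; decompose
\begin{equation*}
\Psi^\vep(\bx,z,t) = \phi^\vep(\bx,t)w_0(z) + \eta^\vep(\bx,z,t),\quad \phi^\vep(\bx,t):=\int_{\Bbb R}\Psi^\vep(\bx,z,t)w_0(z)\,dz,\quad \Pi_0\eta^\vep=0.
\end{equation*}

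The first step is to control $\eta^\vep$. Since $\beta\ge0$ and $-\beta/2\le\lambda\le\beta$, the known global well-posedness of the 3D GPPS and mass/energy conservation give a uniform-in-$\vep$ bound on $\|\Psi^\vep\|_{L^\infty([0,T];X_3)}$. On $\mathrm{Range}(\Pi_\perp)$, $\H_z-\mu_0$ is invertible with norm of its inverse bounded by $1$, so the equation for $\eta^\vep$, solved via Duhamel against the fast propagator $e^{-it(\H_z-\mu_0)/\vep^2}$, can be integrated by parts once in time (in the spirit of \cite{bamsw,bacm}); each such integration gains a factor $\vep^2$ at the cost of one time derivative of the nonlinear source. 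Since $\eta^\vep(0)=0$ and the source is controlled by the $X_3$ a priori bound (together with Gagliardo--Nirenberg and the $L^p$-boundedness of the Riesz transforms associated to the 3D Poisson kernel), this yields $\|\eta^\vep\|_{L^\infty([0,T];L^2(\Bbb R^3))} = O(\vep^2)$.

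The second step is to derive an effective equation for $\phi^\vep$ by testing the $\Psi^\vep$-equation against $w_0$. The linear part contributes $\H_{\bx}^V\phi^\vep$; substituting $\Psi^\vep = \phi^\vep w_0 + \eta^\vep$ into the cubic term and using $\int_{\Bbb R}|w_0|^4 dz = 1/\sqrt{2\pi}$ produces the contribution $\tfrac{\beta-\lambda}{\sqrt{2\pi}}|\phi^\vep|^2\phi^\vep$ with $O(\vep^2)$ remainder in $L^1([0,T];L^2(\Bbb R^2))$. For the dipolar term, I solve the anisotropic Poisson equation $(-\Delta_\bx - \vep^{-2}\p_{zz})\varphi^\vep = \vep^{-1}|\Psi^\vep|^2$ in Fourier variables, substitute $|\Psi^\vep|^2=|\phi^\vep|^2|w_0|^2+2\mathrm{Re}(\bar\eta^\vep \phi^\vep w_0)+|\eta^\vep|^2$, and expand $\p_{\bn_\vep\bn_\vep}=\p_{\bn_\perp\bn_\perp}+2(n_3/\vep)\p_{\bn_\perp}\p_z+(n_3^2/\vep^2)\p_{zz}$. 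Projecting against $w_0$ and passing $\vep\to 0^+$ via the Fourier formula of Lemma \ref{lem1} (exactly the calculation that produces the kernel $U_\vep^{2D}$) extracts the leading contribution $\tfrac{3n_3^2\lambda}{\sqrt{2\pi}}|\phi^\vep|^2\phi^\vep$; all remaining terms are $O(\vep)$ in $L^1([0,T];L^2(\Bbb R^2))$ thanks to the $\eta^\vep$ bound and the uniform $L^p$ estimates of Lemmas \ref{lem1}, \ref{converge:2D} and \ref{lem3}. Collecting,
\begin{equation*}
i\p_t\phi^\vep = \H_{\bx}^V\phi^\vep + \tfrac{\beta-(1-3n_3^2)\lambda}{\sqrt{2\pi}}|\phi^\vep|^2\phi^\vep + r^\vep,\qquad \phi^\vep(\cdot,0)=\phi_0,\qquad \|r^\vep\|_{L^1([0,T];L^2(\Bbb R^2))}=O(\vep).
\end{equation*}

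The final step compares $\phi^\vep$ to the solution $\phi$ of (\ref{eq:2d:weak}). The difference $e^\vep=\phi^\vep-\phi$ solves a Schr\"odinger equation with locally Lipschitz cubic nonlinearity and source $r^\vep$; Theorem \ref{thm1dy} provides a uniform $X_2$-bound on $\phi$ over $[0,T]$, and the $X_3$-bound on $\Psi^\vep$ transfers to the analogous bound on $\phi^\vep$, so an $L^2$-energy estimate combined with Gronwall's lemma gives $\|e^\vep\|_{L^\infty([0,T];L^2(\Bbb R^2))}\le C_T\vep$. Since $\|(\phi^\vep-\phi)w_0\|_{L^2(\Bbb R^3)}=\|e^\vep\|_{L^2(\Bbb R^2)}$, combining with the $O(\vep^2)$ bound on $\eta^\vep$ and undoing the phase filter yields the claim. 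The main obstacle is the dipolar term: because $\bn_\vep=(n_1,n_2,n_3/\vep)$, the pieces of $-3\vep\lambda\p_{\bn_\vep\bn_\vep}\varphi^\vep$ carry different $\vep$-scalings and only the combined $w_0$-projected leading order produces the clean $3n_3^2\lambda/\sqrt{2\pi}$ coefficient; cleanly extracting this while showing that the off-diagonal $\phi^\vep$--$\eta^\vep$ couplings through the nonlocal kernel are $O(\vep)$ in the appropriate dual norms is the delicate part, relying crucially on the Fourier-side identities behind the quasi-2D kernel $U_\vep^{2D}$ and on the uniform $L^p$ convergence of Lemma \ref{converge:2D}.
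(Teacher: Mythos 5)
Your overall architecture (project onto the ground mode of $\H_z$, bound the orthogonal remainder, derive an effective equation for $\phi^\vep$ with an $O(\vep)$ source, then compare with $\phi$ via Gronwall) is the same as the paper's, and your identification of the coefficient $\frac{\beta-(1-3n_3^2)\lambda}{\sqrt{2\pi}}$ is correct. But there is a genuine gap in the closing step. You claim the remainder $r^\vep$ is $O(\vep)$ in $L^1([0,T];L^2(\Bbb R^2))$ and then run a plain $L^2$-energy estimate. The available a priori bounds are only $H^1$ (uniformly in $\vep$), and the dipolar remainder involves quantities of the type $\nabla(|\phi^\vep|^2)\cdot\phi^\vep$ (through the Riesz-transform structure of $P_\vep(\varphi^\vep)$) as well as the singular pieces $\vep^{-1}\p_{xz}\varphi^\vep$, $\vep^{-1}\p_{yz}\varphi^\vep$ paired against $\psi^\vep-\Pi\psi^\vep$. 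With $\phi^\vep\in H^1(\Bbb R^2)$ one only gets $\nabla(|\phi^\vep|^2)\in L^p$ for $p<2$, so these terms land in $L^q(\Bbb R^2)$ for some $q<2$ (the paper uses $q=6/5$), not in $L^2$. Consequently the Gronwall argument cannot be closed with a bare $L^2$-energy estimate: one needs the inhomogeneous Strichartz estimates for $e^{-it\H_{\bx}^V}$ with a conjugate admissible pair $(q^\ast,q)$, $q<2$, which is precisely how the paper (Lemma 5.4 and the decomposition $g^\vep=J_1^\vep+J_2^\vep+J_3^\vep$) absorbs the sub-$L^2$ integrability of the source. The same remark applies to your cubic remainder: the difference $\int(|\psi^\vep|^2\psi^\vep-|\phi^\vep w_0|^2\phi^\vep w_0)w_0\,dz$ is also only controlled in $L^\gamma$ with $\gamma<2$.

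Two smaller points. First, your claimed $O(\vep^2)$ bound on $\eta^\vep$ via integration by parts in time against the fast propagator is not justified as stated: the integration by parts costs a time derivative of the source, and $\p_t\Psi^\vep$ contains $\vep^{-2}(\H_z-\mu_0)\eta^\vep$, so the gain is circular without additional regularity in $z$. Fortunately this is unnecessary: energy conservation alone gives $\|\psi^\vep-\Pi\psi^\vep\|_{L^2(\Bbb R^3)}\le C\vep$ (the paper's Lemma 5.2), and $O(\vep)$ suffices for the theorem. Second, your extraction of the local term $3n_3^2\lambda|\phi^\vep|^2\phi^\vep$ by "passing $\vep\to0^+$" through the Fourier formula needs a rate, and Lemma 3.3 provides only qualitative convergence; the clean way (as in the paper) is to use the Poisson equation $(-\Delta_{\bx}-\vep^{-2}\p_{zz})\varphi^\vep=\vep^{-1}|\psi^\vep|^2$ to replace $-3\vep\lambda n_3^2\vep^{-2}\p_{zz}\varphi^\vep$ exactly by $3\lambda n_3^2|\psi^\vep|^2+3\vep\lambda n_3^2\Delta_{\bx}\varphi^\vep$, so that every remaining dipolar term carries an explicit factor of $\vep$ and only a uniform bound (not a convergence rate) on the nonlocal operators is required.
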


 Under the assumption, we have the global existence of $\psi^\vep$ \cite{Bao1,Carles} as well as
 $\phi$ \cite{bamsw,Cazen}.
 Define the projection operator onto the ground mode $\H_z$ by
 \be\label{psiproj}
\Pi \psi^\vep(\bx,z,t)=e^{-i\mu_0t/\vep^2}\phi^\vep(\bx,t)w_0(z),
\qquad (\bx,z)\in{\Bbb R}^3, \quad t\ge0,\ee where
\be\label{phieps}\phi^\vep(\bx,t)=e^{i\mu_0t/\vep^2}\int_{\Bbb
R}\psi^\vep(\bx,z,t)w_0(z)dz, \qquad \bx=(x,y)\in{\Bbb R}^2, \quad
t\ge0.\ee Since the space $(\bx,z)\in{\Bbb R}^3$ is anisotropic, we introduce the
$L_{z}^pL_{\bx}^q$ space by the norm \be
\|f\|_{(p,q)}:=\|f\|_{L^p_zL_{\bx}^q}=\|\,\|f(\cdot,z)\|_{L_{\bx}^q}\|_{L_z^p},\quad
p,q\in [1,\infty]. \ee The corresponding anisotropic Sobolev
inequalities are available \cite{bamsw}.

\begin{lemma}\label{redlem1}(Uniform bound) Let $\psi^\vep$ and $\phi$ be the solutions of
(\ref{rescalgpe2}) and  (\ref{eq:2d:weak}), respectively,
$\phi^\vep$ be defined in (\ref{phieps}),
$\lambda\in[-\frac{\beta}{2},\beta]$ and
 $\beta\ge0$, we have
\be
\psi^\vep\in L^\infty((0,\infty),H^1(\Bbb R^3)),\quad \phi,\phi^\vep\in L^\infty((0,T), H^1(\Bbb R^2)),
\ee
with  uniform bound in $\vep$. Moreover, for $p\in[2,\infty]$,
\be
 \|\psi^\vep-\Pi\psi^\vep\|^2_{L^2(\Bbb R^3)}+\|\p_z(\psi^\vep-\Pi\psi^\vep)
 \|_{L^2(\Bbb R^3)}^2\leq C\vep^2,\qquad
 \|\psi^\vep-\Pi\psi^\vep\|_{(p,2)}\leq C\vep,
 \ee
 with $C$ depending on $\|\phi_0\|_{X_2}$, uniform in time $t$.
\end{lemma}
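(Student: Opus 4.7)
The plan is to derive all estimates from the conserved mass and energy of the rescaled 3D system (\ref{rescalgpe2})--(\ref{rescalps}), combined with the spectral gap of the 1D harmonic oscillator $\H_z$ and an anisotropic Sobolev embedding. Since the initial data $\psi^\vep(\bx,z,0)=\phi_0(\bx)w_0(z)$ lies entirely in the ground mode of $\H_z$, one has exactly $\langle\H_z\psi^\vep(\cdot,0),\psi^\vep(\cdot,0)\rangle_{L^2(\Bbb R^3)}=\mu_0$. I split the conserved rescaled energy as $E^\vep(\psi^\vep)=\frac{1}{\vep^2}\langle\H_z\psi^\vep,\psi^\vep\rangle+R(\psi^\vep)$, where $R$ collects the $\bx$-kinetic, external-potential $V_2$, quartic, and dipolar terms. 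A direct calculation at $t=0$ gives $R(\psi^\vep(\cdot,0))=R_0$ bounded by a constant depending only on $\|\phi_0\|_{X_2}$, so $E^\vep(0)=\mu_0/\vep^2+R_0$.

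The hard step is showing that $R(\psi^\vep(t))$ stays bounded below by a constant independent of $\vep$ and $t\ge 0$. Under $\beta\ge 0$ and $-\beta/2\le\lambda\le\beta$ this parallels the 3D GPPS energy bound of \cite{Bao1,Carles}: the quartic plus dipolar contributions are nonnegative up to a part that is absorbed into the $\bx$-kinetic via Gagliardo--Nirenberg, using the conserved mass $\|\psi^\vep\|_{L^2(\Bbb R^3)}=1$. Granting this, energy conservation yields $\langle\H_z\psi^\vep,\psi^\vep\rangle-\mu_0=\vep^2(R_0-R(\psi^\vep))\le C\vep^2$ for all $t\ge 0$, and simultaneously $\|\nabla_\bx\psi^\vep\|_{L^2}$ is uniformly bounded. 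Since $-\p_{zz}\le 2\H_z$ in the form sense, $\|\p_z\psi^\vep\|_{L^2}^2\le 2\langle\H_z\psi^\vep,\psi^\vep\rangle$ is also bounded, so $\psi^\vep\in L^\infty((0,\infty);H^1(\Bbb R^3))$ uniformly in $\vep$.

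For the spectral-gap estimate, expanding $\psi^\vep-\Pi\psi^\vep=\sum_{k\ge 1}c_k(\bx,t)w_k(z)$ in the Hermite basis and using $\mu_k\ge\mu_0+1$ for $k\ge 1$ gives, via the commutation of $\H_z$ with $\Pi$,
\begin{equation*}
\|\psi^\vep-\Pi\psi^\vep\|_{L^2(\Bbb R^3)}^2 \le \langle\H_z(\psi^\vep-\Pi\psi^\vep),\psi^\vep-\Pi\psi^\vep\rangle = \langle\H_z\psi^\vep,\psi^\vep\rangle - \mu_0\|\Pi\psi^\vep\|_{L^2}^2,
\end{equation*}
and together with $\|\Pi\psi^\vep\|_{L^2}^2+\|\psi^\vep-\Pi\psi^\vep\|_{L^2}^2=1$ this yields $\|\psi^\vep-\Pi\psi^\vep\|_{L^2}^2\le\langle\H_z\psi^\vep,\psi^\vep\rangle-\mu_0+\mu_0\|\psi^\vep-\Pi\psi^\vep\|_{L^2}^2\le C\vep^2$. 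The same identity combined with $-\p_{zz}\le 2\H_z$ gives $\|\p_z(\psi^\vep-\Pi\psi^\vep)\|_{L^2}^2\le C\vep^2$. For $\phi^\vep$, Cauchy--Schwarz in $z$ applied to (\ref{phieps}) (using $\|w_0\|_{L^2}=1$) gives $\|\phi^\vep(\cdot,t)\|_{H^1(\Bbb R^2)}^2\le\|\psi^\vep(\cdot,t)\|_{L^2(\Bbb R^3)}^2+\|\nabla_\bx\psi^\vep(\cdot,t)\|_{L^2(\Bbb R^3)}^2$, uniformly bounded; for $\phi$, the $H^1$ bound on $[0,T]$ follows from the well-posedness Theorem~\ref{thm1dy} applied to the 2D cubic NLS (\ref{eq:2d:weak}). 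Finally, the 1D identity $\sup_z\|f(\cdot,z)\|_{L^2_\bx}^2\le 2\|f\|_{L^2(\Bbb R^3)}\|\p_zf\|_{L^2(\Bbb R^3)}$, obtained by writing $\|f(\cdot,z)\|_{L^2_\bx}^2=\int_{-\infty}^z 2\,\mathrm{Re}\int_{\Bbb R^2}f\,\overline{\p_zf}\,d\bx\,dz'$ and Cauchy--Schwarz, applied to $f=\psi^\vep-\Pi\psi^\vep$ yields $\|\psi^\vep-\Pi\psi^\vep\|_{(\infty,2)}\le C\vep$; the intermediate case $p\in(2,\infty)$ follows by interpolating the $(2,2)$ and $(\infty,2)$ bounds.
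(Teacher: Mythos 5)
Your proof is correct and takes essentially the same route as the paper: conservation of mass and of the rescaled energy, nonnegativity of the quartic-plus-dipolar contribution under $\beta\ge 0$, $-\beta/2\le\lambda\le\beta$, the spectral gap of $\H_z$ above $\mu_0$, and an anisotropic Sobolev/interpolation step for the $(p,2)$ bound. The only cosmetic difference is that the paper gets the nonnegativity of the interaction energy directly from the Poisson estimate $\vep^2\|\p_{\bn_\vep}\nabla\varphi^\vep\|_{L^2}^2\le\|\psi^\vep\|_{L^4}^4$ (so no Gagliardo--Nirenberg absorption is needed), and it runs the spectral-gap argument on the Hermite coefficients $\sum_{k\ge1}(\mu_k-\mu_0)\|\phi_k\|_{L^2}^2$ rather than through your equivalent operator identity.
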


 \begin{proof}
From energy conservation of equation (\ref{rescalgpe2}), we know
that  the following energy is constant, \be
E(t):=(\H_{\bx}^V\psi^\vep(t),\psi^\vep(t))+\frac{1}{\vep^2}(\H_z\psi^\vep(t),\psi^\vep(t))
+\frac{\beta-\lambda}{2}\|\psi^\vep\|_{L^4(\Bbb
R^3)}^4+\frac{3\lambda\vep^2}{2}\|\p_{\bn_\vep}
\nabla\varphi^\vep(t)\|_{L^2(\Bbb R^3)}^2,\nonumber \ee where
$(\cdot,\cdot)$ denotes the standard $L^2$ inner product in $\Bbb
R^3$. Using standard $L^p$ estimates for Poisson equation
(\ref{rescalps}), we have
$\|\p_{\bn_\vep}\nabla\varphi^\vep(t)\|_{L^2(\Bbb R^3)}\leq
\frac{1}{\vep}\|\psi^\vep(t)\|_{L^4(\Bbb R^3)}^2$, which implies \be
\frac{\beta-\lambda}{2}\|\psi^\vep\|_{L^4(\Bbb
R^3)}^4+\frac{3\lambda\vep^2}{2}\|\p_{\bn_\vep}
\nabla\varphi^\vep(t)\|_{L^2(\Bbb R^3)}^2\ge0,\quad \mbox{ and }\
E(0)=\frac{\mu_0}{\vep^2}+C_0, \ee where $C_0$ depends on
$\|\phi_0\|_{X_2}$. Writing
$\psi^\vep(t)=\sum\limits_{k=0}^\infty\phi_k(\bx,t)w_k(z)$, and
using the $L^2$ conservation
$\sum\limits_{k=0}^\infty\|\phi_k(t)\|_{L^2(\Bbb R^2)}^2=1$, we can
deduce from the energy conservation that \bea
 \frac{\mu_0}{\vep^2}+C_0&\ge&(\H_{\bx}^V\psi^\vep(t),\psi^\vep(t))+
 \frac{1}{\vep^2}(\H_z\psi^\vep(t),\psi^\vep(t))
=(\H_{\bx}^V\psi^\vep(t),\psi^\vep(t))+\frac{1}{\vep^2}\sum\limits_{k=0}^\infty\mu_k
\|\phi_k(t)\|_{L^2(\Bbb R^2)}^2\nonumber\\
&=&(\H_{\bx}^V\psi^\vep(t),\psi^\vep(t))+\frac{1}{\vep^2}\sum\limits_{k=1}^\infty(\mu_k-\mu_0)
\|\phi_k(t)\|_{L^2(\Bbb R^2)}^2+\frac{\mu_0}{\vep^2}.\nonumber
\eea
Hence,
\bea
&&\|\p_{x}\psi^\vep\|^2_{L^2(\Bbb R^3)}+\|\p_{y}\psi^\vep\|^2_{L^2(\Bbb R^3)}\leq
(\H_{\bx}^V\psi^\vep(t),\psi^\vep(t))\leq C_0,\\
&&\|\p_z\psi^\vep\|^2_{L^2(\Bbb R^3)}\leq (\H_z\psi^\vep,\psi^\vep)\leq \mu_0+C_0\vep^2,\\
&&\|\psi^\vep-\Pi\psi^\vep\|^2_{L^2(\Bbb R^3)}\leq \frac{1}{\mu_1-\mu_0}
\sum\limits_{k=1}^\infty(\mu_k-\mu_0)\|\phi_k(t)\|_{L^2(\Bbb R^2)}^2\leq 2C_0\vep^2,\\
&&\|\p_z(\psi^\vep-\Pi\psi^\vep)\|^2_{L^2(\Bbb R^3)}\leq \sum\limits_{k=1}^\infty
\frac{\mu_k}{\mu_k-\mu_0}(\mu_k-\mu_0)\|\phi_k(t)\|_{L^2(\Bbb R^2)}^2\leq \frac32C_0\vep^2.
\eea
Estimate on $\|\psi^\vep-\Pi\psi^\vep\|_{(p,2)}$ follows from Sobolev embedding.
\end{proof}

 We also need the following Strichartz estimates for the unitary group $e^{-it\H_{\bx}^V}$,
 which is valid when $V_2$ satisfies condition (\ref{cond:v2}) \cite{Cazen}.

\begin{definition} In two dimensions, let $q^\prime$, $r^\prime$ be the conjugate
 index of $q$ and $r$ ($1\leq q,r\leq \infty$) respectively, i.e. $1=1/q^\prime+1/q
 =1/r^\prime+1/r$, we call the pair $(q,r)$  admissible and $(q^\prime,r^\prime)$ conjugate admissible if
\be \frac{2}{q}=2\left(\frac{1}{2}-\frac{1}{r}\right),\quad 2\leq
r<\infty. \ee
\end{definition}
The following
estimates are established in \cite{Cazen,Carles,Stri}.

\begin{lemma}\label{stri}(Strichartz's estimates) Let $(q,r)$ be an admissible
pair and $(\gamma,\rho)$ be a conjugate admissible pair, $I\subset
{\Bbb R}$ be a bounded interval satisfying $0\in I$, then we have

(i) There exists a constant  $C$ depending on $I$ and $q$ such that
\be \left\|e^{-it\H_{\bx}^V}\varphi\right\|_{L^q(I,L^r(\Bbb R^2))}\leq
C(I,q) \|\varphi\|_{L^2(\Bbb R^2)}. \ee

(ii)  If $f\in L^{\gamma}(I,L^{\rho}(\Bbb R^2))$, there exists a
constant  $C$ depending on $I$, $q$ and $\rho$, such that \be
\left\|\int_{I\bigcap s\leq
t}e^{-i(t-s)\H_{\bx}^V}f(s)\,ds\right\|_{L^q(I,L^r(\Bbb R^2))}\leq
C(I,q,\rho) \|f\|_{L^{\gamma}(I,L^{\rho}(\Bbb R^2))}. \ee
\end{lemma}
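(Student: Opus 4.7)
The plan is to derive both estimates from the abstract Strichartz framework of Keel--Tao applied to the unitary group $e^{-it\H_{\bx}^V}$, with a short-time dispersive bound of Fujiwara type for subquadratic perturbations of the Laplacian as the key analytic input. First, since $V_2\ge 0$ is smooth with $D^{\mathbf{k}}V_2\in L^\infty$ for all $|\mathbf{k}|\ge 2$, the potential grows at most quadratically and $\H_{\bx}^V$ is essentially self-adjoint on $C_c^\infty({\Bbb R}^2)$; Stone's theorem then supplies a strongly continuous unitary group $\{e^{-it\H_{\bx}^V}\}_{t\in{\Bbb R}}$ on $L^2({\Bbb R}^2)$, giving the required $L^2\to L^2$ isometry hypothesis.

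The substantive step is the short-time dispersive estimate. Under (\ref{cond:v2}) Fujiwara's parametrix construction produces, for some $T_0>0$ and $C>0$, an oscillatory-integral kernel $K_t(\bx,\bx')$ for $e^{-it\H_{\bx}^V}$ of WKB form on $0<|t|\le T_0$, which satisfies
\be
\left\|e^{-it\H_{\bx}^V}\right\|_{L^1({\Bbb R}^2)\to L^\infty({\Bbb R}^2)}\le C|t|^{-1},\qquad 0<|t|\le T_0.
\ee
Combined with $L^2$-unitarity, this is exactly the Keel--Tao hypothesis with decay exponent $\sigma=1$. In two space dimensions the admissibility condition $2/q=1-2/r$ with $r<\infty$ stated above excludes precisely the endpoint $(q,r)=(2,\infty)$, so every pair appearing in the lemma is non-endpoint and the abstract theorem directly yields (i) and (ii) on $(-T_0,T_0)$.

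To pass from $(-T_0,T_0)$ to an arbitrary bounded interval $I\ni 0$, I would partition $I$ into at most $N\le\lceil |I|/T_0\rceil+1$ subintervals $I_j$ of length $\le T_0$ centered at points $t_j\in I$. Using the group law $e^{-it\H_{\bx}^V}\vp=e^{-i(t-t_j)\H_{\bx}^V}\bigl(e^{-it_j\H_{\bx}^V}\vp\bigr)$ together with the $L^2$-unitarity of $e^{-it_j\H_{\bx}^V}$, the short-time estimate (i) applies on each $I_j$ with initial-data norm $\|\vp\|_{L^2({\Bbb R}^2)}$; summing the $L^q(I_j,L^r)$ contributions in $j$ via Minkowski's inequality produces (i) on the whole of $I$ with constant $C(I,q)$ depending only on $|I|$ and $q$. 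For (ii) the same localization, combined with the Christ--Kiselev lemma to pass from the untruncated Duhamel integral $\int_{I}$ to the retarded version $\int_{s\le t,\,s\in I}$ (valid since the admissibility excludes the endpoint), delivers the inhomogeneous bound with constant $C(I,q,\rho)$.

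The main obstacle is the short-time dispersive estimate itself, since the explicit oscillatory Gaussian kernel of the free propagator $e^{it\Delta/2}$ is no longer available once $V_2$ is present. The hypothesis (\ref{cond:v2}) is tailored precisely to Fujiwara's semiclassical construction: bounded derivatives of order $\ge 2$ ensure that the classical action and amplitude in the WKB parametrix are uniformly controlled on short time intervals, forcing the perturbed kernel to inherit the same $|t|^{-1}$ decay in 2D as the free Schr\"odinger kernel. Once this dispersive bound is in hand, the rest of the argument is soft functional analysis via the Keel--Tao abstract Strichartz theorem and Christ--Kiselev interpolation.
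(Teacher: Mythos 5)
The paper itself offers no proof of this lemma --- it is quoted as known, with the remark ``The following estimates are established in \cite{Cazen,Carles,Stri}'' --- so there is nothing internal to compare against; what you have done is reconstruct the standard argument that underlies those citations, and your reconstruction is sound. The chain is correct at every joint: condition (\ref{cond:v2}) puts $V_2$ in the subquadratic class for which $\H_{\bx}^V$ is essentially self-adjoint and Fujiwara's parametrix gives the kernel bound $\|e^{-it\H_{\bx}^V}\|_{L^1\to L^\infty}\leq C|t|^{-1}$ for $0<|t|\leq T_0$; with $L^2$-unitarity this feeds the Keel--Tao machinery with $\sigma=1$, and the paper's admissibility condition $2/q=2(1/2-1/r)$, $2\le r<\infty$, indeed excludes the 2D endpoint $(2,\infty)$, so only the non-endpoint theorem is needed; Christ--Kiselev applies for the retardation because conjugate admissibility forces $\gamma<2\leq q$ strictly (the pair $(\gamma',\rho')=(2,\infty)$ being excluded). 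The one step you gloss is the extension of the retarded estimate (ii) from $(-T_0,T_0)$ to a general bounded $I$: after partitioning, the diagonal blocks $s,t\in I_j$ follow from the short-time retarded estimate, but the off-diagonal blocks with $s\in I_k$, $k<j$, require writing $\int_{I_k}e^{-i(t-s)\H_{\bx}^V}f(s)\,ds=e^{-i(t-t_{k+1})\H_{\bx}^V}\int_{I_k}e^{-i(t_{k+1}-s)\H_{\bx}^V}f(s)\,ds$ and composing the \emph{dual} homogeneous estimate (bounding the inner integral in $L^2$ by $\|f\|_{L^\gamma(I_k,L^\rho)}$) with the homogeneous estimate on $I_j$; since $I$ is bounded there are finitely many blocks and the constant degrades only by a factor depending on $|I|/T_0$, which is why $C$ may depend on $I$. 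A trivial further remark: for (i) the subinterval contributions are combined by summing $q$-th powers of the $L^q(I_j,L^r)$ norms rather than by Minkowski, but this changes nothing. With that one patching detail filled in, your sketch is a complete and correct proof, in substance the same one found in \cite{Carles}.
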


Now, we are able to prove the theorem.

{\bf{Proof of Theorem \ref{redthm1}}:} In view of Lemma \ref{redlem1}, we can derive
\bea
\|\psi^\vep-e^{-i\frac{t\mu_0}{\vep^2}}\phi w_0(z)\|_{L^2(\Bbb R^3)}&\leq&\|\psi^\vep
-\Pi\psi^\vep\|_{L^2(\Bbb R^3)}+\|\Pi\psi^\vep-e^{-i\frac{t\mu_0}{\vep^2}}
\phi w_0(z)\|_{L^2(\Bbb R^3)}\nonumber\\
&\leq&C\vep+\|\phi^\vep(t)-\phi(t)\|_{L^2(\Bbb R^2)}.\label{L2error}
\eea Hence, we need to estimate the difference between $\phi^\vep$
and $\phi$. By (\ref{rescalgpe2}) and (\ref{rescalps}), we know
$\phi^\vep(\bx,t)$ in (\ref{phieps}) solves the following equation
\beas
&&i\p_t\phi^\vep=\H_{\bx}^V\phi^\vep+(\beta-\lambda+3n_3^3\lambda)e^{i\mu_0t/\vep^2}
\int_{\Bbb R}|\psi^\vep|^2\psi^\vep w_0(z)dz+\vep g^\vep,\\
&& g^\vep=e^{i\mu_0t/\vep^2}\int_{\Bbb R}
P_\vep(\varphi^\vep) \psi^\vep w_0(z)dz,\eeas
where the differential operator $P_\vep$ is defined as \be P_\vep(\varphi^\vep)=
-3\lambda((n_1^2-n_3^2)\p_{xx}+(n_2^2-n_3^2)\p_{yy}+2n_1n_2\p_{xy}+\frac{2}{\vep}
(n_1n_3\p_{xz}+n_2n_3\p_{yz}))\varphi^\vep.
\ee
Denote $\chi^\vep(\bx,t)=\phi^\vep-\phi$, noticing that $\|w_0\|_4^4=1/\sqrt{2\pi}$,
$\chi^\vep$ satisfies the following equation
\beas
&&i\p_t\chi^\vep=\H_{\bx}^V\chi^\vep+f^\vep_1+f^\vep_2+\vep g^\vep,\quad \chi^\vep(t=0)=0,\\
&&f^\vep_1=\frac{\beta-\lambda+3n_3^3\lambda}{\sqrt{2\pi}}(|\phi^\vep|^2\phi^\vep-|\phi|^2\phi),\\
&&f^\vep_2=(\beta-\lambda+3n_3^3\lambda)e^{i\mu_0
t/\vep^2}\int_{\Bbb R}\left(|\psi^\vep|^2\psi^\vep- e^{-i\mu_0
t/\vep^2}|\phi^\vep w_0|^2\phi^\vep w_0\right)w_0(z)dz. \eeas
Applying Strichartz estimates on bounded interval $[0,T]$ and
recalling that $(\infty,2)$ is an admissible pair, we can obtain \be
\|\chi^\vep\|_{L^\infty([0,T];L^2(\Bbb R^2))}\leq C
\left[\|f^\vep_1\|_{L^{\rho^\ast} ([0,T];L^\rho(\Bbb
R^2))}+\|f^\vep_2\|_{L^{\gamma^\ast}([0,T];L^\gamma(\Bbb R^2))}+
\vep\|g^\vep\|_{L^{q^\ast}([0,T];L^q(\Bbb R^2))}\right],\nonumber \ee
where $(\rho^\ast,\rho)$, $(\gamma^\ast,\gamma)$ and $(q^\ast,q)$
are some conjugate admissible pairs. By a similar argument in
\cite{bamsw}, we have the estimates for $f^\vep_1$ and $f^\vep_2$
which comes from the cubic nonlinearity, i.e. for appropriate
$\rho\in(1,2)$ and $\gamma\in(1,2)$, \be
\|f^\vep_1\|_{L^{\rho^\ast}([0,T];L^\rho(\Bbb R^2))}\leq C
\|\chi^\vep\|_{L^{\rho^\ast}([0,T];L^2(\Bbb R^2))},\quad
\|f^\vep_2\|_{L^{\gamma^\ast}([0,T];L^\gamma(\Bbb R^2))}\leq C\vep.
\ee
 The basic tools involved are the H\"{o}lder's inequality, Sobolev inequalities and the
 estimates in Lemma \ref{redthm1}, and we omit the proof of this part here for brevity. So,
 \be\label{red:int}
\|\chi^\vep\|_{L^\infty([0,T];L^2(\Bbb R^2))}\leq C
(\|\chi^\vep\|_{L^{\rho^\ast} ([0,T];L^\rho(\Bbb
R^2))}+\vep\|g^\vep\|_{L^{q^\ast}([0,T];L^q(\Bbb R^2))}+\vep). \ee

 Next, we shall estimate $g^\vep$.  Let
 $\varphi^\vep_1$ and $\varphi^\vep_2$  be the solution of the  rescaled Poisson equation
 (\ref{rescalps}) with $|\psi^\vep|^2$ replaced by
 $|\Pi\psi^\vep|^2$ and $|\psi^\vep|^2-|\Pi\psi^\vep|^2$, respectively,
 then $\varphi^\vep=\varphi_1^\vep+\varphi_2^\vep$ and we can rewrite
 \be g^\vep=J_1^\vep+J_2^\vep+J_3^\vep,\ee
 where \be  J_1^\vep=\int_{\Bbb R}P_\vep(\varphi_1^\vep)\phi^\vep w_0^2dz,\, J_2^\vep
 =e^{\frac{i\mu_0t}{\vep^2}}\int_{\Bbb R}P_\vep(\varphi^\vep)(\psi^\vep-\Pi\psi^\vep)w_0dz,\, J_3^\vep=
 e^{\frac{i\mu_0t}{\vep^2}}\int_{\Bbb R}P_\vep(\varphi_2^\vep)\Pi\psi^\vep w_0dz.
 \nonumber\ee
 For $J^\vep_1$, this one reduces to the quasi-2D equation I (\ref{gpe2d}),
 where we have that
 \be
 J^\vep_1=-3\lambda (\p_{\bn_\perp\bn_\perp}-n_3^2\Delta)\varphi_{2D}^\vep
 \phi^\vep,\mbox{ and }\varphi^\vep_{2D}=U^{2D}_\vep*|\phi^\vep|^2,
 \ee
 with $U^{2D}_\vep$ given in (\ref{u2d1}). In view of the property of $U^{2D}_\vep$ in Lemma
 \ref{lem1} and Remark \ref{rmk}, recalling $\phi^\vep\in L^\infty([0,\infty);H^1(\Bbb R^2))$,
  using H\"{o}lder's inequality and Sobolev inequality, we obtain
 \be
 \|J^\vep_1\|_p\leq \|P_\vep(\varphi_{2D}^\vep)\|_{p_1}\|\phi^\vep\|_{p_2}\leq
 C\|\nabla|\phi^\vep|^2\|_{p_1} \|\phi^\vep\|_{p_2}\leq C,
 \ee
 where $1<p<p_1<2$, $\frac{1}{p}=\frac{1}{p_1}+\frac{1}{p_2}$.

 For $J_2^\vep$, applying Minkowski inequality, H\"{o}lder's inequality and Sobolev
 inequality, as well as
  estimates for Poisson equation, noticing $\psi^\vep\in L^\infty([0,\infty);H^1(\Bbb R^3))$
  and Lemma \ref{redlem1}, we estimate
 \beas
 \|J_2^\vep\|_p\leq\|P_\vep(\varphi^\vep)(\psi^\vep-\Pi\psi^\vep)w_0\|_{(1,p)}\leq C
 \|P_\vep(\varphi^\vep)\|_{L^{p^\ast}(\Bbb R^3)}\|\psi^\vep-\Pi\psi^\vep\|_{(\infty,2)}
 \leq C\vep\|\frac{|\psi^\vep|^2}{\vep}\|_{L^{p^\ast}(\Bbb R^3)} \leq C,
 \eeas
 where $p^\ast=2p/(2-p)\leq 3$.

 For $J^\vep_3$, similar as $J^1_\vep$, $J^2_\vep$, we have
 \beas
 \|J_3^\vep\|_p&\leq&\|P_\vep(\varphi_2^\vep)\Pi\psi^\vep w_0\|_{(1,p)}\leq C\|P_{\vep}
 (\varphi_2^\vep)\|_{L^{p_1}(\Bbb R^3)}\|\phi^\vep\|_{L^{p_2}(\Bbb R^2)}
 \leq\frac{C}{\vep}\|\,|\psi^\vep|^2-|\Pi\psi^\vep|^2\,\|_{L^{p_1}(\Bbb R^3)}\\&\leq&\frac{C}
 {\vep}\|\psi^\vep-\Pi\psi^\vep\|_{L^2(\Bbb R^3)}(\|\psi^\vep\|_{L^{p_3}(\Bbb R^3)}+
 \|\Pi\psi^\vep\|_{L^{p_3}(\Bbb R^3)})\leq C,
 \eeas
 where $p_3=2p_1^2/(2-p_1)\leq 6$. Hence, by choosing $p=6/5$, and $p_1=4/3$ would
 satisfy all the conditions for $J_k^\vep$ (k=1,2,3), where we shall derive that uniformly in $t$,
 \be
 \|g^\vep\|_{L^p(\Bbb R^2)}\leq \|J_1^\vep\|_{L^p(\Bbb R^2)}+\|J_2^\vep\|_{L^p(\Bbb R^2)}
 +\|J_3^\vep\|_{L^p(\Bbb R^2)}\leq C.
 \ee
  Then choose $q=p$ in (\ref{red:int}), we have
 \be
 \|\chi^\vep\|_{L^\infty([0,T];L^2(\Bbb R^2))}\leq C
 \left[\|\chi^\vep\|_{L^{\rho^\ast}([0,T];L^2(\Bbb R^2))}+\vep\right].
 \ee
 Applying the results for all $t\in[0,T]$, we find
 \be
 \|\chi^\vep(t)\|_2^{\rho^\ast}\leq C\left[\int_0^t\|\chi^\vep(s)\|_2^{\rho^\ast}\,ds
 +\vep^{\rho^\ast}\right],\quad t\in[0,T],
 \ee
 and Gronwal's inequality will give that $\|\chi^\vep(t)\|_2\leq C\vep$ for all $t\in[0,T]$.
 Combining with (\ref{L2error}), we can draw the desired conclusion.
\hfill$\Box$

 \subsection{Reduction to 1D}
 In this case, we again consider the weak interaction regime $\beta\to \vep^{2}\beta$,
 $\lambda\to\vep^{2}\lambda$.
In {\sl{Case II}} (\ref{case2}),  for the full 3D GPPS
(\ref{gpe})-(\ref{poisson}), introducing the re-scaling $x\to
\vep x$, $y\to \vep y$, $\psi\to \vep^{-1}\psi^\vep$ which preserves the
normalization,  then
 \be\label{rescalgpe1}
 i\p_t\psi^\vep(\bx,z,t)=\left[\H_z^V+\frac{1}{\vep^2}\H_{\bx}+(\beta-\lambda)|\psi^\vep|^2-3
 \vep\lambda\p_{\tilde{\bn}_{\vep}\tilde{\bn}_\vep}\varphi^\vep \right]\psi^\vep,\quad (\bx,z)\in \Bbb R^3,
 \ee
 where $\bx=(x,y)\in\Bbb R^2$ and
 \bea
 &&\H_z^V=-\frac{1}{2}\p_{zz}+V_1(z),\quad \H_{\bx}=-\frac12(\p_{xx}+\p_{yy}+x^2+y^2),\\
 &&\tilde{\bn}_\vep=(n_1/\vep,n_2/\vep,n_3),\quad \p_{\tilde{\bn}_\vep}=\tilde{\bn}_\vep
 \cdot\nabla,\quad \p_{\tilde{\bn}_\vep\tilde{\bn}_\vep}=\p_{\tilde{\bn}_\vep}(\p_{\tilde{\bn}_\vep}),\\
 &&(-\frac{1}{\vep^2}\p_{xx}-\frac{1}{\vep^2}\p_{yy}-\p_{zz})\varphi^\vep=\frac{1}{\vep^2}|\psi^\vep|^2,\qquad
 \lim\limits_{|\bx|\to \infty}\varphi^\vep(\bx)=0.\label{rescalps1}
 \eea
 Note that the ground state mode of $\H_{\bx}$ would be given by $w_0(x)w_0(y)$ with eigenvalue 1,
 and the initial data is then assumed to be
\be\label{init1}
 \psi^\vep(\bx,z,0)=\phi_0(z)w_0(x)w_0(y),\quad \phi_0\in X_1 \mbox{ and } \|\phi_0\|_{L^2(\Bbb R)}=1.
 \ee

In {\sl{Case II}} (\ref{case2}) in the introduction,
 when $\vep\to0^+$, the quasi-1D equation  (\ref{gpe1d}) will lead to an $\vep$-independent equation in the weak interaction regime,
 \be\label{eq:1d:weak}
 i\p_t \phi(z,t)=\H_z^V\phi+\frac{\beta+\frac12\lambda (1-3n_3^2)}{2\pi}|\phi|^2\phi,\qquad z\in\Bbb
 R,\quad t>0,
 \ee
 with the initial condition $\phi(z,0)=\phi_0(z)$.

 Following the steps in the last subsection, we can prove the following results.
 \begin{theorem}(Dimension reduction to 1D)
 Suppose the real-valued trap
potential satisfies  $V_1(z)\ge0$ for $z\in{\Bbb R}$ and $V_1(z)\in
C^\infty(\Bbb R)$, $D^k V_1(z)\in L^\infty(\Bbb R)$ for all
$k\ge 2$. Assume $-\frac{\beta}{2}\leq \lambda \leq\beta$ and
$\beta\ge0$, and let $\psi^\vep\in C([0,\infty);X_3)$ and $\phi\in
C([0,\infty);X_1)$ be the unique solutions of the equations
(\ref{rescalgpe1})-(\ref{init1}) and (\ref{eq:1d:weak}),
respectively, then for any $T>0$, there exists $C_{T}>0$ such that
\be
\left\|\psi^\vep(\bx,z,t)-e^{-it/\vep^2}\phi(z,t)w_0(x)w_0(y)\right\|_{L^2(\Bbb
R^3)}\leq C_{T}\,\vep, \qquad \forall t\in [0,T]. \ee
\end{theorem}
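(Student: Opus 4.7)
The plan is to mirror the argument for Theorem \ref{redthm1}, replacing the role of the transverse $z$-variable by the pair $\bx = (x,y)$ and using the ground mode $w_0(x)w_0(y)$ of $\H_{\bx}$ (eigenvalue $1$, with the first spectral gap of size $1$). First I would define the projection
\be
\Pi\psi^\vep(\bx,z,t) = e^{-it/\vep^2}\phi^\vep(z,t)\,w_0(x)w_0(y), \qquad \phi^\vep(z,t) = e^{it/\vep^2}\int_{\Bbb R^2}\psi^\vep(\bx,z,t)\,w_0(x)w_0(y)\,d\bx,
\ee
and establish the 1D analog of Lemma \ref{redlem1}. The conserved energy for (\ref{rescalgpe1})--(\ref{rescalps1}), combined with the $L^2$-estimate $\vep\|\p_{\tilde{\bn}_\vep}\nabla\varphi^\vep\|_2 \le C\|\psi^\vep\|_4^2$ coming from the Poisson equation, makes the contact-plus-dipolar contribution non-negative under $-\beta/2\le\lambda\le\beta$, giving $\psi^\vep\in L^\infty((0,\infty);H^1(\Bbb R^3))$ uniformly in $\vep$. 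Expanding $\psi^\vep=\sum_{k,\ell}\phi_{k\ell}(z,t)w_k(x)w_\ell(y)$ and exploiting the spectral gap of $\H_{\bx}$ then yields $\|\psi^\vep-\Pi\psi^\vep\|_{L^2(\Bbb R^3)}\le C\vep$ and the matching transverse-derivative bound.

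Next I would derive the equation for $\phi^\vep$ by multiplying (\ref{rescalgpe1}) by $e^{it/\vep^2}w_0(x)w_0(y)$ and integrating in $\bx$. Using $\int_{\Bbb R^2}w_0^4(x)w_0^4(y)\,d\bx=1/(2\pi)$, the cubic piece supplies the coefficient $(\beta-\lambda+3n_3^2\lambda)/(2\pi)$, while the dipolar contribution $-3\vep\lambda\p_{\tilde{\bn}_\vep\tilde{\bn}_\vep}\varphi^\vep$ splits---exactly as in the 2D proof---into $J_1^\vep+J_2^\vep+J_3^\vep$: $J_1^\vep$ is the pure ground-mode part, which by an explicit Fourier computation using the representation $\widehat{U_\vep^{1D}}$ from (\ref{u1dcov}) reproduces the quasi-1D nonlocal operator $\p_{zz}(U_\vep^{1D}*|\phi^\vep|^2)$ appearing in (\ref{gpe1d}); $J_2^\vep$ absorbs $\psi^\vep-\Pi\psi^\vep$; and $J_3^\vep$ carries the Poisson potential sourced by $|\psi^\vep|^2-|\Pi\psi^\vep|^2$. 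Comparing with the limiting equation (\ref{eq:1d:weak}) and writing $\chi^\vep=\phi^\vep-\phi$, one obtains
\be
i\p_t\chi^\vep = \H_z^V\chi^\vep + f_1^\vep + f_2^\vep + \vep g^\vep,\qquad \chi^\vep(0)=0,
\ee
where $f_1^\vep$ is the cubic difference $\propto|\phi^\vep|^2\phi^\vep-|\phi|^2\phi$, $f_2^\vep$ collects the $O(\vep)$ remainders (including the gap between $U_\vep^{1D}*\cdot$ and its $\vep\to 0^+$ limit, controlled through Lemma \ref{lem5}), and $g^\vep=J_1^\vep/\vep+J_2^\vep/\vep+J_3^\vep/\vep$ is the dipolar residue.

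I would then run 1D Strichartz estimates for $e^{-it\H_z^V}$, valid under the stated hypotheses on $V_1$ (admissible pairs $2/q=1/2-1/r$, $2\le r\le\infty$), to bound $\|\chi^\vep\|_{L^\infty([0,T];L^2(\Bbb R))}$ by dual-admissible norms of the sources. The cubic terms are handled just as in the proof of Theorem \ref{redthm1}, giving $\|f_1^\vep\|_{L^{\rho^\ast}L^\rho}\le C\|\chi^\vep\|_{L^{\rho^\ast}L^2}$ and $\|f_2^\vep\|_{L^{\gamma^\ast}L^\gamma}\le C\vep$. For $g^\vep$ one combines Lemma \ref{lem5}, the 1D analog of Remark \ref{LpboundT} (the operator $f\mapsto\p_{zz}(U_\vep^{1D}*f)$ is bounded on $L^p(\Bbb R)$ uniformly in $\vep$ for $p\in(1,\infty)$, established by the Calder\'on--Zygmund reasoning parallel to Lemma \ref{converge:2D}) with H\"{o}lder's and Sobolev's inequalities and the $L^2$-bound on $\psi^\vep-\Pi\psi^\vep$, to get $\|g^\vep\|_{L^p(\Bbb R)}\le C$ uniformly in $t$ for a suitable $p\in(1,2)$. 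Gronwall then yields $\|\chi^\vep(t)\|_{L^2(\Bbb R)}\le C_T\vep$, and combining with the projection-error estimate closes the bound $\|\psi^\vep-e^{-it/\vep^2}\phi\,w_0(x)w_0(y)\|_{L^2(\Bbb R^3)}\le C_T\vep$ on $[0,T]$.

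The main obstacle lies in the identification step for $J_1^\vep$: the Fourier multiplier of $\vep\p_{\tilde{\bn}_\vep\tilde{\bn}_\vep}$ contains the singular summand $(n_1\xi_1+n_2\xi_2)^2/\vep$, and one must show that when paired with the Poisson denominator $|\xi_\bx|^2/\vep^2+\xi_z^2$ and averaged against the Gaussian weights $|\widehat{w_0^2}(\xi_1)|^2|\widehat{w_0^2}(\xi_2)|^2$ coming from the ground-mode projection, this singularity is cancelled and reduces exactly to the symbol of $\p_{zz}(U_\vep^{1D}*\cdot)$ modulo an $O(\vep)$ residue. This is the 1D counterpart of the Plancherel computation (\ref{e1e22})--(\ref{eq:u2d:bound}) and has to be carried out carefully, but once done the remainder of the argument is a direct transcription of the 2D proof.
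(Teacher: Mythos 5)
Your proposal follows exactly the route the paper intends: the paper's own ``proof'' of this theorem is the single sentence that one follows the steps of the 2D reduction, and your outline --- projection onto the ground mode of $\H_{\bx}$, uniform $H^1$ bounds and the spectral-gap estimate $\|\psi^\vep-\Pi\psi^\vep\|_{L^2}\le C\vep$ from energy conservation, the decomposition of the dipolar residue into $J_1^\vep+J_2^\vep+J_3^\vep$, Strichartz plus Gronwall --- is precisely that transcription. You have also correctly isolated the one place where the transcription is not mechanical, namely that the transverse second derivatives now carry the factor $1/\vep^2$ and their traceless part $\bigl(\tfrac{n_1^2-n_2^2}{2}(\p_{xx}-\p_{yy})+2n_1n_2\p_{xy}\bigr)\varphi^\vep$ is only tamed after pairing with the radially symmetric weight $w_0^2(x)w_0^2(y)$, which kills it on the ground-mode component and leaves an $O(\vep)$ contribution from the orthogonal complement.

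One concrete slip: the cubic coefficient you assign to the projected equation, $(\beta-\lambda+3n_3^2\lambda)/(2\pi)$, is the Case I value copied over. In Case II it is the transverse Laplacian, not $\p_{zz}$, that gets converted into a local term through the Poisson equation $(-\vep^{-2}\Delta_{\bx}-\p_{zz})\varphi^\vep=\vep^{-2}|\psi^\vep|^2$: symmetrizing $n_1^2\p_{xx}+n_2^2\p_{yy}$ into its trace part $\tfrac{n_1^2+n_2^2}{2}\Delta_{\bx}$ produces $\tfrac{3\lambda(n_1^2+n_2^2)}{2}|\psi^\vep|^2$ plus lower-order remainders, so the projected cubic coefficient is $\bigl(\beta-\lambda+\tfrac{3}{2}\lambda(1-n_3^2)\bigr)/(2\pi)=\bigl(\beta+\tfrac{\lambda}{2}(1-3n_3^2)\bigr)/(2\pi)$, which is what matches the limiting equation (\ref{eq:1d:weak}). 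This is not cosmetic: if the cubic coefficients of the $\phi^\vep$- and $\phi$-equations differed by an $O(1)$ amount, the term $f_1^\vep$ would no longer be controlled by $\chi^\vep$ alone and the Gronwall step would only return an $O(1)$ error rather than $O(\vep)$. With this bookkeeping corrected, the rest of your argument goes through as written and coincides with the paper's intended proof.
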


\section{Conclusion}
\setcounter{equation}{0}

We have analyzed the effective lower dimensional models  for three
dimensional Gross-Pitaevskii-Poisson system (GPPS) describing
dipolar Bose-Einstein condensates (BEC)  in anisotropic confinement.
The
 quasi-2D approximate equations I (\ref{gpe2d})
and  II (\ref{gpe2d2})  are introduced in the case that the trap is
strongly confined in the vertical $z$-direction, and  the quasi-1D
approximate equation (\ref{gpe1d}) is presented in the case that the
trap is strongly confined in the $x$-, $y$- directions. Properties
of ground states for all equations, such as existence and uniqueness
as well as non-uniqueness results were studied. Well-posedness  of
the Cauchy problems for both equations and possible finite time
blow-up in 2D case were discussed.
 Finally, we
rigorously proved the linear convergence rate of the dimension
reduction
 from 3D GPPS to its quasi-2D and quasi-1D approximations in the weak interaction
regime, i.e. $\beta=\lambda=O(\vep^{3-d})$, in lower $d$ ($d=1,2$)
dimensions. In such situation, all the nonlocal terms in the
effective equation (\ref{gpe2d}), (\ref{gpe2d2}) and (\ref{gpe1d})
vanish, resulting in a regular GPE in lower dimensions. We remark
that the results in the paper hold true for a larger class of
confinements rather that the harmonic ones. In fact,
effective two-dimensional models
have been derived and analyzed recently for a
multilayer stack of dipolar BEC formed by a
strong lattice potential \cite{MCB}.


\bigskip

\noindent{\bf Acknowledgment} This work was supported in part by the
Academic Research Fund of Ministry of Education of Singapore grant
R-146-000-120-112 (W. B. and Y.C.). This work was partially done
while W.B. and Y.C. were visiting the Institut de Math\'{e}matiques
de Toulouse at Universit\'{e} Paul Sabatier  in 2010. Y.C. would like to thank the
support   from  the European Union programme ``Differential Equations
with Applications in Science and Engineering" MEST-CT-2005-021122 during his visit.

\end{document}